\newcommand{\minus}{\vspace{-1mm}}
\newcommand{\ignore}[1]{}
\newcommand{\para}[1]{\subparagraph{#1}}  
\newcommand{\whole}{{}}
\newcommand{\clo}{\mathrm{c}}
\newcommand{\str}{{\cal A}}
\newcommand{\unitint}{\mathbb{I}}
\newcommand{\A}{{\cal A}^\whole}
\newcommand{\Aclock}{{\cal A}^\clo}
\newcommand{\AZ}{{\cal A}^\whole_\Z}
\newcommand{\AZc}{{\cal A}^\clo_\N}
\newcommand{\AQ}{{\cal A}^\whole_\Q}
\newcommand{\AIc}{{\cal A}^\clo_\I}
\newcommand{\logic}{\ensuremath{\LL^\whole_{\Z, \Q}}\xspace}
\newcommand{\logicint}{\LL^\whole_{\Z}}
\newcommand{\logicrat}{\LL^\whole_{\Q}}
\newcommand{\logicclock}{\LL^\clo_{\N, \unitint}}
\newcommand{\logicclockint}{\LL^\clo_{\N}}
\newcommand{\logicclockrat}{\LL^\clo_{\unitint}}
\renewcommand{\iff}{\Leftrightarrow}
\newcommand{\cycminus}{\ominus}
\newcommand{\cycplus}{\oplus}
\newcommand{\condone}[1]{\mathbbm 1_{#1?}}
\newcommand{\Qgeq}{\Q_{\geq 0}}
\newcommand{\goesto}[1]{\xrightarrow{#1}}
\newcommand{\TPDArule}[3]{\tuplesmall {#1 , #2, #3}}
\newcommand{\lang}[1]{L(#1)}
\newcommand{\languntimed}[1]{L^{\textrm{un}}(#1)}
\newcommand{\PDA}{\mbox{\sc pda}\xspace}
\newcommand{\TPDA}{\mbox{\sc tpda}\xspace}
\newcommand{\RPDA}{\mbox{\sc rpda}\xspace}
\newcommand{\PDTA}{\mbox{\sc ptda}\xspace}
\newcommand{\dtPDA}{dt\PDA}
\newcommand{\TA}{\mbox{\sc ta}\xspace}
\newcommand{\RTA}{\mbox{\sc rta}\xspace}
\newcommand{\TRPDA}{\mbox{\sc trpda}\xspace}
\newcommand{\true}{\mathbf{true}}
\newcommand{\false}{\mathbf{false}}
\newcommand{\N}{\mathbb N}
\newcommand{\Z}{\mathbb Z}
\newcommand{\Q}{\mathbb Q}
\newcommand{\I}{\mathbb I}
\newcommand{\C}{\mathcal C}
\newcommand{\LL}{\mathcal L}
\newcommand{\op}{\mathsf{op}}
\newcommand{\elapse}{\mathsf{elapse}}
\newcommand{\pushop}[1]{\mathsf{push}(#1)}
\newcommand{\popop}[1]{\mathsf{pop}(#1)}
\newcommand{\pushopgen}[2]{{\mathsf{push}(#1 : #2)}}
\newcommand{\popopgen}[2]{{\mathsf{pop}(#1 : #2)}}
\newcommand{\fract}[1]{\left\{#1\right\}}
\renewcommand{\P}{\mathcal P}
\newcommand{\QQ}{\mathcal Q}
\newcommand{\restrict}[2]{\left.#1\right|_{#2}}
\renewcommand{\vec}[1]{\overline{#1}}
\newcommand{\sem}[1]{\left\ldbrack#1\right\rdbrack}
\newcommand{\orbits}[2]{\mathrm{Orb}^{\ignore{#1}}(#2)}
\newcommand{\floor}[1]{\lfloor#1\rfloor}
\newcommand{\tuple}[1]{\left\langle#1\right\rangle}
\newcommand{\tuplesmall}[1]{\langle#1\rangle}
\newcommand{\eqv}[1]{\equiv_{#1}}
\newcommand{\eqvs}{(\eqv m)_{m\in\N}}
\newcommand{\reach}[2]{\stackrel{#1}{\leadsto}_{#2}} 
\newcommand{\reachp}[2]{\stackrel{#1}{\leadsto}'_{#2}} 
\newcommand{\from}{\leftarrow}
\newcommand{\reset}{\mathsf{reset}}
\newcommand{\resetop}[1]{\reset(#1)}
\newcommand{\push}{\mathsf{push}}
\newcommand{\pop}{\mathsf{pop}}
\newcommand{\psicopy}{\psi_{\mathsf{copy}}}
\newcommand{\set}[1]{\left\{ #1 \right\}}
\newcommand{\setof}[2]{\set{#1 \; \middle| \; #2}}
\newif\ifstartedinmathmode
\renewcommand*{\st}{
  \relax\ifmmode\startedinmathmodetrue\else\startedinmathmodefalse\fi
  \ifstartedinmathmode{\;\cdot\;}\else{s.t.~}\fi%
}
\newcommand{\wrt}{w.r.t.~}
\newcommand{\cf}{c.f.~}
\newcommand{\card}[1]{|{#1}|}
\newcommand{\PI}[1]{\text{\sc pi}_{#1}}
\newcommand{\project}[2]{\pi_{#1}(#2)}
\newcommand{\tick}[1]{\checkmark_{\!\!#1}}
\newcommand{\stkout}[1]{\ifmmode\text{\sout{\ensuremath{#1}}}\else\sout{#1}\fi}
\DeclareMathAlphabet{\mathcalligra}{T1}{calligra}{m}{n}
\newcommand{\arr}{r}
\newcommand{\slawek}[1]{\todo[color=blue!30]{{\bf S\l{}awek:} #1}}
\newtheorem{claim}[theorem]{Claim}
\title{Binary reachability of timed pushdown automata via quantifier elimination and cyclic order atoms}
\titlerunning{Binary reachability of \TPDA via quantifier elimination and cyclic order}
\author{Lorenzo Clemente}
           {University of Warsaw}
           {clementelorenzo@gmail.com}
           {https://orcid.org/0000-0003-0578-9103}
           {Partially supported by Polish NCN grant 2017/26/D/ST6/00201.}
\author{S{\l}awomir Lasota}
           {University of Warsaw}
           {sl@mimuw.edu.pl}
           {https://orcid.org/0000-0001-8674-4470}
           {Partially supported by Polish NCN grant 2016/21/B/ST6/01505.}
\authorrunning{L.~Clemente, S.~Lasota}
\subjclass{F.1.1 Models of Computation, F.4.1 Mathematical Logic}
\keywords{timed automata, reachability relation, timed pushdown automata, linear arithmetic}
\begin{document}

\maketitle

\begin{abstract}
  We study an expressive model of timed pushdown automata extended with modular and fractional clock constraints.
  We show that the binary reachability relation is effectively expressible in hybrid linear arithmetic with a rational and an integer sort.
  This subsumes analogous expressibility results previously known for finite and pushdown timed automata with untimed stack.
	As key technical tools, we use quantifier elimination for a fragment of hybrid linear arithmetic and for cyclic order atoms,
  and a reduction to register pushdown automata over cyclic order atoms.
  \end{abstract}


\section{Introduction}

%
Timed automata (\TA) are one of the most studied models of reactive timed systems.
%
The fundamental result that paved the way to automatic verification of timed systems
is decidability (and PSPACE-completeness) of the reachability problem for \TA \cite{AD94}.
However, in certain applications, such as in parametric verification,
deciding reachability 
is insufficient,
and one needs to construct the more general \emph{binary reachability relation},
i.e., the entire (possibly infinite) set of of pairs of configurations $(c_i, c_f)$
s.t.~there is an execution from $c_i$ to $c_f$.
%
%
The reachability relation for \TA has been shown to be effectively expressible in hybrid linear arithmetic with rational and integer sorts
\cite{ComonJurski:TA:1999,Dima:Reach:TA:LICS02,KrcalPelanek:TM:FSTTCS:2005,QuaasShirmohammadiWorrell:LICS:2017}.
Since hybrid logic is decidable,
this yields an alternative proof of decidability of the reachability problem.

In this paper, we compute the reachability relation for timed automata extended with a stack.
An early model of  \emph{pushdown timed automata} (\PDTA) extending \TA with a (classical, untimed) stack
has been considered by Bouajjani \emph{et al.}~\cite{bouajjani:timed:PDA:94}.
More recently, \emph{dense-timed pushdown automata} (\dtPDA)
have been proposed by Abdulla \emph{et al.}~\cite{AbdullaAtigStenman:DensePDA:12}
as an extension of \PDTA.
In \dtPDA, stack symbols are equipped with rational \emph{ages},
which initially are $0$ and increase with the elapse of time at the same rate as global clocks;
when a symbol is popped, its age is tested for membership in an interval.
While \dtPDA syntactically extend \PDTA by considering a timed stack,
timed constraints can in fact be removed while preserving the timed language recognised by the \dtPDA,
and thus they semantically collapse to \PDTA~\cite{ClementeLasota:LICS:2015}.
%
%
This motivates the quest for a strictly more expressive generalisation of \PDTA and \dtPDA with a truly timed stack.
It has been observed in \cite{UezatoMinamide:LPAR15} that adding fractional stack constraints
prevents the stack from being untimed,
and thus strictly enriches the expressive power%
\footnote{%
For \TA, fractional constraints can be handled by the original region construction
and do not make the model harder to analyse \cite{AD94}.
}.

We embrace this observation
and propose the model of \emph{timed pushdown automata} (\TPDA),
which extends timed automata with a timed stack
and integer, fractional, and modulo diagonal/non-diagonal constraints.
%
The model features local clocks and stack clocks.
As time elapses, all clocks increase their values,
and they do so at the same rate.
Local clocks can be reset and compared
according to the generalised constraints above.
%
%
At the time of a push operation, new stack clocks are created
whose values are initialised, possibly non-deterministically,
as to satisfy a given push constraint between stack clocks and local clocks;
similarly, a pop operation requires that stack clocks to be popped
satisfy a given pop constraint of analogous form.
Stack push/pop constraints are also of the form of diagonal/non-diagonal integer, modulo, and fractional constraints.

\vspace{-3mm}
\para{Contributions.}
We compute the \emph{binary reachability relation} of \TPDA,
i.e., the family of binary relations
$\set {\reach {}  {\ell\arr}} \subseteq \Qgeq^X \times \Qgeq^X$ for control locations $\ell,\arr$
s.t.~from the initial clock valuation $\mu \in \Qgeq^X$ and control location $\ell$
we can reach the final clock valuation $\nu \in \Qgeq^X$ and control location $\arr$,
written $\mu \reach {}  {\ell\arr} \nu$.
The stack is empty at the beginning and at the end of the computation.
The main contribution of the paper is the effective computation of the \TPDA reachability relation
in the existential fragment of \emph{linear arithmetic} \logic, a two-sorted logic combining
Presburger arithmetic $(\Z, \leq, \eqvs, +, 0)$ and linear rational arithmetic $(\Q, \leq, +, 0)$.
As a byproduct of our constructions,
we actually characterise the more general \emph{ternary reachability relation}
$\mu \reach \pi  {\ell\arr} \nu$,
where $\mu, \nu$ are as above
and $\pi : \N^\Sigma$ additionally counts the number of occurrences
of input letters over a finite alphabet $\Sigma$,
i.e., the Parikh image of the run.
%
%
To our knowledge, the ternary reachability relation was not previously considered.
As an application of ternary reachability,
we can model, for instance,
letter counts of initial and final, possibly non-empty, stack contents.
%
%
%
Thus, ternary reachability is an expressive extension of binary reachability.

The computation of the ternary reachability relation is achieved by two consecutive translations.
%
%
%
First, we transform a \TPDA into a \emph{fractional \TPDA}, which uses only fractional constraints.
In this step we exploit \emph{quantifier elimination} for a fragment of linear arithmetic corresponding to clock constraints.
Quantifier elimination is a pivotal tool in this work,
and to our knowledge its use in the study of timed models is novel.
%
The final integer value of clocks is reconstructed by letting the automaton input special tick symbol $\tick x$
every time clock $x$ reaches an integer value (provided it is not reset anymore later);
it is here that ternary reachability is more suitable than binary reachability.

Secondly, a fractional \TPDA is transformed into a \PDA with registers (\RPDA)
over the so called \emph{cyclic order atoms} $(\Q \cap [0, 1), K)$ \cite{ClementeLasota:CSL:2015},
where $K$ is the ternary cyclic order relation 
\begin{align}
	\label{eq:cyc}
  K(a, b, c) \equiv a < b < c \vee b < c < a \vee c < a < b, \qquad \textrm{ for } a, b, c \in \Q \cap [0, 1).
\end{align}
In other words, $K(a, b, c)$ holds if, distributing $a, b, c$ on the unit circle
and going clockwise from $a$, then we fist visit $b$ and afterwards $c$.
Since fractional values are wrapped around $0$ when time increases,
$K$ is invariant under time elapse.
We use registers to store the fractional parts of absolute times of last clock resets;
fractional constraints on clocks are simulated by constraints on registers using $K$.
In order to compute the reachability relation for \RPDA
we use again quantifier elimination,
this time over cyclic order atoms.
The latter property holds since cyclic order atoms constitute a homogeneous structure \cite{survey}.
Therefore, another contribution of this work is the solution of a nontrivial problem such as computing the reachability relation for \TPDA,
which is a clock model,
as an application of \RPDA, which is a register model.
The analysis of \RPDA is substantially easier than a direct analysis of (fractional) \TPDA.
%
%

From the complexity standpoint, the formula characterising the reachability relation of a \TPDA is computable in double exponential time.
However, when cast down to \TA or \TPDA with timeless stack (which subsume \PDTA and, a posteriori, \dtPDA),
the complexity drops to singly exponential,
matching the previously known complexity for \TA \cite{QuaasShirmohammadiWorrell:LICS:2017}.
For \PDTA, no complexity was previously given in \cite{Dang:PTA:2003},
and thus the result is new.
For \TPDA, the binary reachability problem has not been studied before.
Since the existential fragment of \logic is decidable in NP
(because so is existential linear rational arithmetic \cite{Sontag:IPL:1985}
and existential Presburger arithmetic \cite{Weispfenning:JSC:1988}),
we can solve the reachability problem of \TPDA in 2NEXP by reduction to satisfiability for \logic.
%
Since our constructions preserve the languages of all the models involved,
untimed \TPDA languages are context-free.

\para{Discussion.}
From a syntactic point of view,
\TPDA significantly lifts the restrictions of \dtPDA---%
which allow only classical non-diagonal constraints, i.e., interval tests,
and thus has neither diagonal, nor modulo, nor fractional constraints---%
and of the model of \cite{UezatoMinamide:LPAR15}---%
which additionally allows diagonal/non-diagonal fractional tests,
and thus does not have modulo constraints.
Since classical diagonal constraints reduce to classical non-diagonal constraints,
and, \emph{in the presence of fractional constraints},
integer and modulo constraints can be removed altogether (cf.~Sec.~\ref{sec:fractional}),
\TPDA are expressively equivalent to \cite{UezatoMinamide:LPAR15}.
However, while \cite{UezatoMinamide:LPAR15} solves the control state reachability problem,
we solve the more general problem of computing the binary reachability relation.
Our reduction technique not only preserves reachability, like \cite{UezatoMinamide:LPAR15},
but additionally enables the reconstruction of the reachability relation.

Our expressivity result generalises analogous results for \TA
\cite{ComonJurski:TA:1999,Dima:Reach:TA:LICS02,KrcalPelanek:TM:FSTTCS:2005,QuaasShirmohammadiWorrell:LICS:2017}
and \PDTA \cite{Dang:PTA:2003}.
%
%
The proof of \cite{ComonJurski:TA:1999} for \TA has high technical difficulty
and does not yield complexity bounds.
The proof of \cite{Dima:Reach:TA:LICS02} for \TA uses an automata representation
for sets of clock valuations;
the idea of \emph{reset-point semantics} employed in \cite{Dima:Reach:TA:LICS02}
is analogous to using registers instead of clocks.
The paper \cite{KrcalPelanek:TM:FSTTCS:2005} elegantly expresses the reachability relation for \TA
with \emph{clock difference relations} (CDR) over the fractional values of clocks.
It is remarkable that the formulas expressing the reachability relations
that we obtain are of the same shape as CDR.
The recent paper \cite{QuaasShirmohammadiWorrell:LICS:2017}
shows that the \TA binary reachability relation can be expressed
in the same fragment of hybrid linear arithmetic that we use for \TPDA,
which we find very intriguing.
Their proof converts the integer value of clocks into counters,
and then observes that, thanks to the specific reset policy of clocks,
these counter machines have a semilinear reachability relation;
the latter is proved by encoding the value of counters into the language.
In our proof, we bring the encoding of the integer value of clocks into the language
to the forefront, via the introduction of the ternary reachability relation.
The proof of \cite{Dang:PTA:2003} for \PDTA
also separates clocks into their integer and fractional part.
It is not clear how any of the previous approaches
could handle a timed stack.

Another approach for computing the reachability relation for \TPDA
would be to reduce it directly to a more expressive register model,
such as \emph{timed register pushdown automata} (\TRPDA)
\cite{ClementeLasota:LICS:2015,ClementeLasotaLazicMazowiecki:LICS:2017},
which considers both integer $(\Z, \leq, +1)$
and rational registers $(\Qgeq, \leq)$.
While such a reduction for the reachability problem is possible
since (the integer part of) large clock values can be ``forgotten'',
e.g., along the lines of \cite{ClementeLasota:LICS:2015},
this does not hold anymore if we want to preserve the reachability relation.
For this reason, in the present work we first remove the integer part of clocks
(by encoding it in the untimed language)
and then we reduce to \RPDA, which have only fractional registers and no integer register,
and are thus easier to analyse than \TRPDA%
\footnote{%
\TRPDA are more general than \RPDA---%
cyclic order atoms can be interpreted into $(\Qgeq, \leq)$.
The binary reachability relation for \TRPDA
can be computed by refining the reductions of \cite{ClementeLasotaLazicMazowiecki:LICS:2017}
used for deciding the reachability problem.
However, we do not know how to use the reachability relation of \TRPDA to compute that of \TPDA.
}.
The method of quantifier elimination was recently applied to the analysis of another timed model,
namely \emph{timed communicating automata} \cite{Clemente:TCA:ArXiv:2018}.
%

Finally, another expressive extension of \TA, called \emph{recursive timed automata} (\RTA),
has been proposed \cite{TrivediWojtczak:RTA:2010,BenerecettiMinopoliPeron:RTA:2010}.
\RTA use a timed stack to store the current clock valuation,
which does not evolve as time elapses
and can be restored at the time of pop.
This facility makes \RTA expressively incomparable to all models previously mentioned.

\para{Notations.}
%
Let $\Q$, $\Qgeq$, $\Z$, and $\N$ denote the rationals, the non-negative rationals, the integers, and the natural numbers;
let $\unitint = \Qgeq \cap [0, 1)$ be the unit rational interval.
Let $\eqv m$ denote the congruence modulo $m \in \N\setminus\set 0$ in $\Z$.
For $a\in \Q$, let $\floor a \in \Z$ denote the largest integer $k$ \st $k \leq a$,
and let $\fract a = a - \floor a$ denote its fractional part.
Let $\condone C$, for a condition $C$, be $1$ if $C$ holds, and $0$ otherwise.


\section{Linear arithmetic and quantifier elimination}
\label{sec:qe}

Consider the two-sorted structure $\A = \AZ \uplus \AQ$,
where $\AZ = (\Z, \leq, \eqvs, +, (k)_{k\in\Z})$
and $\AQ = (\Q, \leq, +, (k)_{k\in \Q})$.
We consider ``$+$'' as a binary function, and we have a constant $k$ for every integer/rational number.
By \emph{linear arithmetic}, denoted \logic, we mean the two-sorted first-order language in the vocabulary of $\A$.
Restriction to the integer sort yields Presburger arithmetic $\logicint$ (\emph{integer formulas}),
and restriction to the rational sort yields linear rational arithmetic $\logicrat$ (\emph{rational formulas}).
We assume constants are encoded in binary.

Two formulas are \emph{equivalent} if they are satisfied by the same valuations.
It is well-known that the theories of $\AZ$~\cite{Presburger:1930} and $\AQ$~\cite{FerranteRackoff:QE:Reals}
admit effective elimination of quantifiers:
Every formula can effectively be transformed in an equivalent quantifier-free one.
Therefore, the theory of $\A$ also admits quantifier elimination, by the virtue
of the following general fact (when speaking of a structure admitting quantifier elimination, we have in mind its theory).

\begin{restatable}{lemma}{qeUnion}
	\label{lem:qe-union}
	If the structures $\str_1$ and $\str_2$ admit (effective) elimination of quantifiers,
	then the two-sorted structure $\str_1 \uplus \str_2$ also does so.
	For conjunctive formulas, the complexity is the maximum of the two complexities.
\end{restatable}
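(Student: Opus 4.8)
The plan is to show that quantifier elimination for the two-sorted structure $\str_1 \uplus \str_2$ reduces, one quantifier at a time, to quantifier elimination in the single-sorted components. Since the two sorts are disjoint and share no function or relation symbols, any atomic formula of the combined language mentions variables of one sort only. Consider a formula $\exists x \cdot \varphi$ where, without loss of generality, $\varphi$ is quantifier-free and $x$ ranges over the sort of $\str_1$. First I would put $\varphi$ into disjunctive normal form and push the existential quantifier inside the disjunction, so it suffices to eliminate $x$ from $\exists x \cdot (\alpha \wedge \beta)$ where $\alpha$ is a conjunction of ($\str_1$-sort) literals possibly mentioning $x$, and $\beta$ is a conjunction of literals not mentioning $x$ (in particular all $\str_2$-sort literals land in $\beta$). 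Then $\exists x \cdot (\alpha \wedge \beta) \equiv (\exists x \cdot \alpha) \wedge \beta$, and $\exists x \cdot \alpha$ is a formula purely in the language of $\str_1$ (treating the other free $\str_1$-variables as parameters), to which the assumed quantifier-elimination procedure for $\str_1$ applies, yielding an equivalent quantifier-free $\str_1$-formula $\alpha'$. Replacing $\exists x \cdot \alpha$ by $\alpha'$ gives a quantifier-free formula equivalent to $\exists x \cdot \varphi$.

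Next I would lift this to arbitrary formulas by the standard innermost-quantifier induction: repeatedly locate a subformula of the form $\exists x \cdot \psi$ (or $\forall x \cdot \psi$, rewritten as $\neg \exists x \cdot \neg \psi$) with $\psi$ quantifier-free, and replace it by the equivalent quantifier-free formula obtained above; this strictly decreases the quantifier count, so the process terminates with a quantifier-free formula equivalent to the original. Effectiveness is immediate: each step invokes the effective procedures of $\str_1$ and $\str_2$ together with the effective syntactic manipulations (DNF conversion, distributing $\exists$ over $\vee$, splitting a conjunction according to whether $x$ occurs).

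For the complexity claim, restrict attention to conjunctive input formulas $\exists x \cdot \varphi$ with $\varphi$ conjunctive. Here no DNF blow-up occurs: $\varphi$ is already a single conjunction, it splits directly as $\alpha \wedge \beta$ as above, and only the component procedure for the sort of $x$ is invoked, on a conjunctive formula, producing $\alpha'$; reattaching $\beta$ keeps the result a conjunction (after re-normalising $\alpha'$ if the component procedure returns a conjunction, which the standard procedures do). Hence eliminating a block of quantifiers of a single sort costs exactly what it costs in that component, and a mixed block is handled by grouping the $\str_1$-variables and the $\str_2$-variables separately; the overall cost is therefore bounded by the maximum of the two component complexities.

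The only genuinely delicate point is the interaction between the sorts: one must be sure that the combined language does \emph{not} allow atomic formulas that mix sorts (e.g.\ there is no cross-sort equality, cross-sort function, or cross-sort relation symbol). Under the stated setup $\str_1 \uplus \str_2$ is precisely the disjoint union of two vocabularies with no shared or connecting symbols, so every literal is ``pure'' and the split $\varphi \equiv \alpha \wedge \beta$ with $x \notin \vars{\beta}$ is always available; this is what makes the reduction to the single-sorted procedures sound. Everything else is routine bookkeeping.
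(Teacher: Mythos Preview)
Your proposal is correct and follows essentially the same approach as the paper's proof: split a conjunctive quantifier-free formula by sort, observe that the quantified variable cannot occur in the other sort's conjunct, pull that conjunct outside the quantifier, and apply the component's quantifier-elimination procedure. The paper is simply terser (it jumps directly to ``it suffices to consider a conjunctive formula $\exists y \cdot \varphi_1 \wedge \varphi_2$'' without spelling out the DNF step or the innermost-quantifier induction), but the underlying argument is the same.
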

For clock constraints,
we will use the first-order language
over the two sorted structure $\Aclock = \AZc \uplus \AIc$,
where the integer sort is restricted to $\AZc = (\N, \leq, \eqvs, +1, 0)$---%
the domain is now $\N$ and full addition ``$+$'' is replaced by the unary successor operation ``$+1$'')---%
and the rational sort to $\AIc = (\unitint, \leq, 0)$---
the domain is now the unit interval, there is no addition,
and the only constant is $0$.
Let $\logicclock$ be such a sub-logic.
(As syntactic sugar we allow to use addition of arbitrary, even negative, integer constants in integer formulas, e.g.~$x-4 \leq y+2$.)
As before, $\logicclockint$ and $\logicclockrat$ are the restrictions to the respective sorts.
All the sub-logics above admit effective elimination of quantifiers.
\begin{restatable}{lemma}{qeIntRat}
	\label{lem:qe-clocks-int-rat}
	The structures 
	$\AZc$ and $\AIc$ admit effective elimination of quantifiers.
	For $\AZc$ the complexity is singly exponential for conjunctive formulas,
	while for $\AIc$ is quadratic.
\end{restatable}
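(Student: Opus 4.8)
The plan is to handle the two structures separately and show each admits effective quantifier elimination with the stated complexity. Since the overall logic $\logicclock$ is the two-sorted combination $\Aclock = \AZc \uplus \AIc$, Lemma~\ref{lem:qe-union} then transfers quantifier elimination to the combined structure; but the lemma statement only asks for the two component structures, so I would focus on those. For both, it suffices by the usual induction to eliminate a single existential quantifier $\exists x$ from a conjunction of literals, since an arbitrary formula can be put in prenex form and innermost quantifiers peeled off one at a time, using the fact that $\exists x (\varphi_1 \vee \varphi_2) \equiv (\exists x\, \varphi_1) \vee (\exists x\, \varphi_2)$ and distributing into disjunctive normal form as needed.

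For $\AIc = (\unitint, \leq, 0)$: this is essentially the theory of a dense linear order without endpoints, restricted to the interval $[0,1)$ with a named least element $0$. A conjunction of literals over the variable $x$ to be eliminated has the form of lower bounds $\ell_i \bowtie x$, upper bounds $x \bowtie u_j$ (with $\bowtie$ among $<, \leq$, and where $\ell_i, u_j$ are either other variables or the constant $0$), plus constraints not mentioning $x$. Since the order is dense and $0$ is the minimum, $\exists x$ holds iff every lower bound is strictly below the interval's top (automatic, since there is no top element named but the domain $[0,1)$ has no maximum) and every declared lower bound is compatible with every declared upper bound, i.e.\ $\bigwedge_{i,j} \ell_i \bowtie' u_j$ for the appropriate combined relation, together with $0 \leq u_j$ being consistent. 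One must also record that all the $\ell_i, u_j$ lie in $[0,1)$, but this is already guaranteed by the ambient domain and need not be expressed. The resulting quantifier-free formula is a conjunction over pairs of bounds, hence quadratic in the input size; iterating the elimination does not blow this up beyond the claimed bound for conjunctive formulas.

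For $\AZc = (\N, \leq, \eqvs, +1, 0)$: this is Presburger arithmetic with the syntactic restrictions that only the successor $+1$ (not full $+$) and the constant $0$ are available, but with all congruences $\eqv m$; as noted, adding or subtracting fixed integer constants is derived sugar. I would follow the classical Presburger quantifier-elimination argument (in the style of Presburger/Cooper). A conjunction of literals in $x$ consists of bounds $x \geq t_i$, $x \leq s_j$ (where $t_i, s_j$ are variables shifted by integer constants), possibly an equation $x = t$ (handled by substitution), and congruences $x \congruent{r_k}{m_k}$. Let $M = \lcmset{m_k}$. Eliminate $\exists x$ by the standard case distinction: either $x$ is ``small'', taking one of finitely many values $t_i + d$ with $0 \leq d < M$ just above some lower bound, or it is a residue class consistent with all upper bounds; in either case the witness is expressed by a finite disjunction of substitutions into the remaining literals together with the conjunction $\bigwedge_{i,j} t_i \leq s_j$ of compatibility constraints and the congruence condition $\bigwedge_k \exists d < M \ldots$, which itself is quantifier-free after expanding the bounded disjunction over $d$. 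The number of resulting disjuncts is bounded by $M$ times a polynomial in the formula size; since $M = \lcmset{m_k}$ and the $m_k$ are written in binary, $M$ can be exponential, which gives the singly exponential bound for a conjunctive input. Keeping domain membership in $\N$ (i.e.\ $x \geq 0$) is free.

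The main obstacle I anticipate is purely bookkeeping in the integer case: making sure the restriction to $+1$ rather than full $+$ does not cause trouble. It does not, because all terms appearing in clock constraints are of the form ``variable $\pm$ constant'', so one never needs to add two variables; the classical Presburger elimination never introduces such sums either, as it only ever substitutes terms of the existing shape and adds small constants. The congruence handling is standard once one passes to the common modulus $M$. The rational case is genuinely easy, being a textbook dense-order elimination, so the only care needed there is the presence of the constant $0$ as a least element and the half-open domain, neither of which causes complications since no maximum element is named and density suffices.
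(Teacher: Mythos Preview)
Your proposal is correct and takes essentially the same approach as the paper: for $\AIc$ you use the standard dense-linear-order elimination by pairing lower and upper bounds (quadratic), and for $\AZc$ you use the Cooper-style witness enumeration $t_i + d$ with $0 \leq d < M$ just above a lower bound (exponential because the modulus is written in binary). The paper simplifies by assuming a single common modulus $m$ rather than taking an $\lcm$, and its $\AIc$ case treats only non-strict $\leq$, but these are inessential differences; your remark that the elimination never introduces sums of two variables (so the output stays in $\logicclockint$ rather than full Presburger) is a useful observation the paper leaves implicit.
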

%
\noindent
Notice that since $\logicclockint$ is a fragment of Presburger arithmetic $\logicint$,
we could apply the quantifier elimination for $\logicint$
to get a quantifier-free $\logicint$ formula.
Our result is stronger since we get a quantifier-free formula
of the more restrictive fragment $\logicclockint$.
%
\begin{corollary}
	\label{cor:qe-clocks}
	The structure $\Aclock$
	admits effective quantifier elimination.
	The complexity is exponential for conjunctive formulas.
\end{corollary}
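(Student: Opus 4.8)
The plan is to derive this immediately by chaining the two preceding results, with no new argument needed. First I would instantiate Lemma~\ref{lem:qe-union} with $\str_1 := \AZc$ and $\str_2 := \AIc$: by Lemma~\ref{lem:qe-clocks-int-rat} both single-sort structures admit \emph{effective} elimination of quantifiers, and therefore, by Lemma~\ref{lem:qe-union}, so does the two-sorted disjoint union $\Aclock = \AZc \uplus \AIc$. That is exactly the first assertion of the corollary.

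For the complexity bound I would recall how the combined procedure works: an arbitrary $\logicclock$ formula is first put into a shape where the matrix below each (block of) quantifiers is a disjunction of conjunctions, negations having been pushed to the atoms; one then eliminates the innermost block from each conjunct separately, and the per-conjunct cost is, by Lemma~\ref{lem:qe-union}, the maximum of the two single-sort costs. By Lemma~\ref{lem:qe-clocks-int-rat} these are singly exponential for $\AZc$ and merely quadratic for $\AIc$, so the maximum is singly exponential; hence for a conjunctive input formula the elimination runs in exponential time, as claimed. The one point worth making explicit is why this reduces to the single-sort case at all: since the integer and rational sorts share no function or relation symbol, an innermost existential of $\logicclock$ quantifies over exactly one of $\N$ or $\unitint$, and it sees the other sort only through subterms that are already quantifier-free parameters, which is precisely the situation Lemma~\ref{lem:qe-union} handles.

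I do not expect a genuine obstacle here; the mathematical substance lives entirely in Lemmas~\ref{lem:qe-union} and~\ref{lem:qe-clocks-int-rat}, and the corollary is their bookkeeping combination. The only thing to check is a minor hygiene issue: the syntactic sugar allowing arbitrary (including negative) integer constants, as in $x - 4 \leq y + 2$, does not take us outside $\logicclockint$, since each such atom rewrites in linear size into a genuine $\logicclockint$ atom before the procedure is applied; after this normalisation the argument above goes through verbatim.
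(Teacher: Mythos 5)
Your proposal is correct and matches the paper's intent exactly: the corollary is the immediate combination of Lemma~\ref{lem:qe-union} (instantiated with $\AZc$ and $\AIc$) and Lemma~\ref{lem:qe-clocks-int-rat}, with the exponential bound coming from taking the maximum of the two single-sort complexities. The extra remarks on the disjointness of the sorts and on normalising the syntactic sugar are harmless elaborations of the same argument.
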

%


\section{Timed pushdown automata}
\label{sec:tpda}

\subparagraph{Clock constraints.}
Let $X$ be a finite set of clocks.
We consider constraints which can separately speak about the integer $\floor x$ and fractional value $\fract x$
of a clock $x \in X$.
%
%
%
%
A \emph{clock constraint} over $X$ is a boolean combination of
\emph{atomic clock constraints} of one of the forms
\begin{align*}
	&
		&\textrm{(inte}&\textrm{ger)}
		&\textrm{(modu}&\textrm{lar)}
		&\textrm{(fracti}&\textrm{onal)}
	\\[1ex]
	&\textrm{(non-diagonal)}	
		&\floor x & \leq k 	\ \  			&\floor x & \eqv m k  	\ \   	 	&\fract x & = 0   \\
	&\textrm{(diagonal)}	
		&\floor x - \floor y & \leq k	&\floor x - \floor y & \eqv m k		&\fract x & \leq \fract y
\end{align*}
where
$x, y \in X$, $m \in \N$,a and $k \in \Z$. 
%
Since we allow arbitrary boolean combinations,
we consider also the constraint $\true$, which is always satisfied,
and variants with any $\sim \ \in \{\leq, <, \geq, >\}$ in place of $\leq$.
A \emph{clock valuation} 
is a mapping $\mu \in \Qgeq^X$ assigning a non-negative rational number to every clock in $X$;
we write $\floor \mu$ for the valuation in $\N^X$ \st $\floor \mu(x) := \floor {\mu(x)}$
and $\fract \mu$ for the valuation in $\I^X$ \st $\fract \mu(x) := \fract {\mu(x)}$.
For a valuation $\mu$ and a clock constraint $\varphi$ 
we say that $\mu$ \emph{satisfies} $\varphi$
if $\varphi$ is satisfied when integer clock values $\floor x$ are evaluated according to $\floor \mu$
and fractional values $\fract x$ according to $\fract \mu$.

%
%
%
%
%
\begin{remark}[Clock constraints as quantifier-free $\logicclock$ formulas]
	\label{rem:constraints-as-qf-formulas}
	Up to syntactic sugar,
	a clock constraint over clocks $\{x_1, \ldots, x_n\}$ is the same as a quantifier-free $\logicclock$ formula
	$\varphi(\floor {x_1}, \ldots, \floor {x_n}, \fract {x_1},  \ldots, \fract {x_n})$
	over $n$ integer and $n$ rationals variables.
\end{remark}
\begin{remark}[Classical clock constraints]
	\label{rem:sugar}
	Integer and fractional constraints subsume classical ones.
	For clocks $x, y$, since $x = \floor x + \fract x$ (and similarly for $y$)%
	\footnote{We often identify a clock $x$ with its value for simplicity of notation.},
	$x - y \leq k$ for an integer $k$ is equivalent to
	$(\floor x - \floor y \leq k \wedge \fract x \leq \fract y) \vee \floor x - \floor y \leq k - 1$,
	and similarly for $x\leq k$.
	On the other hand, the fractional constraint $\fract x = 0$
	is not expressible as a classical constraint.
%
\end{remark}
\ignore{
	\begin{remark}[Invariance under time elapse]
		\label{rem:invariance}
		While classical diagonal constraints $x - y \leq k$ are invariant \wrt the elapse of time,
		this is the case neither for integer $\floor x - \floor y \leq k$ nor for fractional constraints $\fract x \leq \fract y$.
		In Sec.~\ref{sec:fractional}, it will be useful to separate a constraint into its invariant and non-invariant part,
		which is achieved by replacing an integer constraint $\floor x - \floor y \leq k$
		by the equivalent $x - y \leq k \;\vee\; (x - y \leq k + 1 \wedge \fract x > \fract y)$.
	%
	%
	\end{remark}
}
\begin{remark}[$\floor x - \floor y$ versus $\floor {x - y}$]
	\label{rem:int-diff}
	In the presence of fractional constraints,
	the expressive power would not change if,
	instead of atomic constraints $\floor x - \floor y \eqv m k$ and $\floor x - \floor y \leq k$
	speaking of the \emph{difference of the integer parts},
	we would choose $\floor{x - y} \eqv m k$ and  $\floor{x-y} \leq k$
	speaking of the \emph{integer part of the difference},
	since the two are inter-expressible:
%
	\begin{align}
		\label{eq:floor:1}
		\floor {x - y} = \floor x - \floor y - \condone {\fract x < \fract y} \qquad  \textrm{ and } \qquad
		%
		%
		\fract {x - y} &= \fract x - \fract y + \condone {\fract x < \fract y}.
	\end{align}
\end{remark}

%
%

\subparagraph{The model.}

A \emph{timed pushdown automaton} (\TPDA) is a tuple
$\P = \tuple {\Sigma,\Gamma, L, X, Z, \Delta}$
where
$\Sigma$ is a finite input alphabet,
$\Gamma$ is a finite stack alphabet,
$L$ is a finite set of control locations,
$X$ is a finite set of \emph{global clocks}, 
and $Z$ is a finite set of \emph{stack clocks} disjoint from $X$.
The last item $\Delta$ is a set of transition rules $\TPDArule \ell \op \arr$ with $\ell, \arr \in L$ control locations,
where $\op$ determines the type of transition:
\begin{itemize}
\item \emph{time elapse} $\op = \elapse$,
\item \emph{input} $\op = a \in \Sigma_\varepsilon := \Sigma \cup \{\varepsilon\}$ an input letter,
\item \emph{test} $\op = \varphi$ a \emph{transition constraint} over clocks $X$,
\item \emph{reset} $\op = \resetop Y$ with $Y \subseteq X$ a set of clocks to be reset,
\item \emph{push} $\op = \pushopgen{\alpha}{\psi}$ with $\alpha \in \Gamma$ a stack symbol to be pushed on the stack
under the \emph{stack constraint} $\psi$ over clocks $X \cup Z$, or
\item \emph{pop} $\op = \popopgen{\alpha}{\psi}$ similarly as push.
\end{itemize}
We assume that every atomic constraint in a stack constraint contains some stack variable from $Z$.
Throughout the paper, let $x_0$ be a global clock that is never reset
(and thus measures the total elapsed time),
and let $z_0$ be a stack clock that is $0$ when pushed.
A \TPDA has \emph{untimed stack} if the only stack constraint is $\true$.
%
Without push/pop operations, we obtain nondeterministic timed automata (\TA).

\begin{remark}[Complexity]
	For complexity estimations, we assume that constraints are conjunctions of atomic constraints,
	that constants therein are encoded in binary,
	that $M$ is the maximal constant,
	and that all modular constraints use the same modulus $M$.
\end{remark}

\begin{remark}[Time elapse]
	The standard semantics of timed automata where time can elapse freely in every control location
	is simulated by adding explicit time elapse transitions
	$\TPDArule{\ell}{\elapse}{\ell}$ for suitable locations $\ell$.
	Our explicit modelling of the elapse of time will simplify the constructions in Sec.~\ref{sec:fractional}.
\end{remark}

\begin{remark}[Comparison with \dtPDA] 
	The \dtPDA model \cite{AbdullaAtigStenman:DensePDA:12} allows only one stack clock $Z = \set z$
	and stack constraints of the form $z \sim k$.
	As shown in~\cite{ClementeLasota:LICS:2015}, this model is equivalent to \TPDA with untimed stack.
	Our extension is two-fold.
	First, our definition of stack constraint is more liberal,
	since we allow more general \emph{diagonal stack constraints} of the form $z - x \sim k$.
	%
	Second, we also allow \emph{modular}
	$\floor y - \floor x \eqv m k$ and \emph{fractional constraints} $\fract x \sim \fract y$, where clocks $x, y$
	can be either global or stack clocks.
	As demonstrated in Example~\ref{ex:palin}
	below, this model is not reducible to untimed stack,
	and thus \TPDA are more expressive than \dtPDA.
\end{remark}

\subparagraph{Semantics.}
%
%
Every stack symbol is equipped with a fresh copy of clocks from $Z$.
At the time of $\pushopgen{\alpha}{\psi}$,
the push constraint $\psi$ specifies possibly nondeterministically the initial value of all clocks in $Z$ \wrt global clocks in $X$.
Both global and stack clocks evolve at the same rate when a time elapse transition is executed.
At the time of $\popopgen{\alpha}{\psi}$,
the pop constraint $\psi$ specifies the final value of all clocks in $Z$ \wrt global clocks in $X$.
%
%
A \emph{timed stack} is a sequence $w \in (\Gamma \times \Qgeq^Z)^*$ of pairs $(\gamma, \mu)$,
where $\gamma$ is a stack symbol and $\mu$ is a valuation for stack clocks in $Z$.
For a clock valuation $\mu$ and a set of clocks $Y$, let $\mu[Y \mapsto 0]$ be the same as $\mu$ except that clocks in $Y$ are mapped to $0$.
For $\delta \in \Qgeq$, let $\mu + \delta$ be the clock valuation which adds $\delta$ to the value of every clock,
i.e., $(\mu + \delta)(x) := \mu(x) + \delta$,
and for a timed stack $w = (\gamma_1, \mu_1) \cdots (\gamma_k, \mu_k)$,
let $w + \delta$ be $(\gamma_1, \mu_1 + \delta) \cdots (\gamma_k, \mu_k + \delta)$.
A \emph{configuration} is a triple $\tuple{\ell, \mu, w} \in L \times \Qgeq^X \times (\Gamma \times \Qgeq^Z)^*$
where $\ell$ is a control location, $\mu$ is a clock valuation over the global clocks $X$, and $w$ is a timed stack.
Let $\tuple{\ell, \mu, u}, \tuple{\arr, \nu, v}$ be two configurations.
For every input symbol or time increment $a \in (\Sigma_\varepsilon \cup \Qgeq)$ we have a transition
%
$\tuple{\ell, \mu, u} \goesto a \tuple{\arr, \nu, v}$
%
whenever there exists a rule $\TPDArule{\ell}{\op}{\arr} \in \Delta$ \st one of the following holds:
%
\begin{itemize}
	\item
	$\op = \elapse$,
	$a \in \Qgeq$, $\nu = \mu + a$, $v = u + a$.
	\item
	$\op = a \in \Sigma_\varepsilon$, $\nu = \mu$, $u = v$.
	\item
	$\op = \varphi$, $a = \varepsilon$, $\mu \models \varphi$, $\nu = \mu$, $u = v$.
	\item
	$\op = \resetop Y$, $a = \varepsilon$, $\nu = \mu[Y \mapsto 0]$, $v = u$.
	\item
	$\op = \pushopgen{\gamma}{\psi}$, $a = \varepsilon$, $\mu = \nu$,
	$v = u \cdot \tuple{\gamma, \mu_1}$
	if $\mu_1 \in \Qgeq^Z$ satisfies $(\mu, \mu_1) \models \psi$,
	where $(\mu, \mu_1) \in \Qgeq^{X \cup Z}$ is the unique clock valuation
	that agrees with $\mu$ on $X$ and with $\mu_1$ on $Z$.
	\item
	$\op = \popopgen{\gamma}{\psi}$, $a = \varepsilon$, $\mu = \nu$,
	$u = v \cdot \tuple{\gamma, \mu_1}$
	provided that $\mu_1 \in \Qgeq^Z$ satisfies $(\mu, \mu_1) \models \psi$.
\end{itemize}
%
A \emph{timed word} is a sequence ${w = \delta_1 a_1 \cdots \delta_n a_n \in (\Qgeq\Sigma_\varepsilon)^*}$
of alternating time elapses and input symbols;
the one-step transition relation $\tuple{\ell, \mu, u} \goesto a \tuple{\arr, \nu, v}$
is extended on timed words $w$ as $\tuple{\ell, \mu, u} \goesto w \tuple{\arr, \nu, v}$
in the natural way.
The \emph{timed language} from location $\ell$ to $\arr$ is
$\lang {\ell, \arr} := \setof
	{\project {\varepsilon} w \in (\Qgeq\Sigma)^*}
	{\tuple{\ell, \mu_0, \varepsilon} \goesto w \tuple{\arr, \mu_0, \varepsilon}}$
where $\project {\varepsilon} w$ 
removes the $\varepsilon$'s from $w$ and
$\mu_0$ is the valuation that assigns $\mu_0(x) = 0$ to every clock $x$.
The corresponding \emph{untimed language} $\languntimed {\ell, \arr}$
is obtained by removing the time elapses from $\lang {\ell, \arr}$.
%

\begin{example} \label{ex:palin}
	Let $L$ be the timed language of even length palindromes
	\st the time distance between every pair of matching symbols is an integer:
	\begin{gather*}
		L = \setof {\delta_1 a_1 \cdots \delta_{2n} a_{2n} \ignore{\in (\Qgeq \Sigma)^*} } { \forall (1 \leq i \leq n) \st a_i = a_{2n - i  + 1} \wedge \delta_{i+1} + \cdots + \delta_{2n - i  + 1} \in \N}.
	\end{gather*}
	$L$ can be recognised by a \TPDA over input and stack alphabet $\Sigma = \Gamma = \set{a, b}$,
	with locations $\ell, \arr$, no global clock, one stack clock $Z = \set z$,
	and the following transition rules (omitting some intermediate states), where $\alpha$ ranges over $\set{a,b}$:
	\begin{align*}
		& \TPDArule{\ell}{\alpha; \pushopgen{\alpha}{\fract{z} = 0}}{\ell} 
		&& \TPDArule{\ell}{\varepsilon}{\arr} \\
		& \TPDArule{\arr}{\alpha; \popopgen{\alpha}{\fract{z} = 0 \ignore{\land \floor z \eqv 2 0}}}{\arr} 
		&& \TPDArule{\ell}{\elapse}{\ell}, \TPDArule{\arr}{\elapse}{\arr}
	\end{align*}
	%
	We have $L = L(\ell, \arr)$.
	Since $L$ cannot be recognised by \TPDA with untimed stack
	(cf.~\cite{UezatoMinamide:LPAR15}), 
	fractional stack constraints strictly increase the expressive power of the model.
\end{example}

\subparagraph{The reachability relation.}
%
%
%
The \emph{Parikh image} of a timed word $w$
is the mapping $\PI w \in \N^\Sigma$
\st $\PI w (a)$ is the number of $a$'s in $w$,
ignoring the elapse of time and $\varepsilon$'s.
For two control locations $\ell, \arr$, clock valuations $\mu, \nu \in \Qgeq^X$,
and a timed word $w \in (\Qgeq \Sigma_\varepsilon)^*$,
we write $\mu \reach w {\ell\arr} \nu$
if $\tuple{\ell, \mu, \varepsilon} \goesto w \tuple{\arr, \nu, \varepsilon}$.
We overload the notation and, for $\pi \in \N^\Sigma$, we write
%
$
	\mu \reach \pi {\ell\arr} \nu
$
%
if there exists a timed word $w$ \st $\mu \reach w {\ell\arr} \nu$ and $\pi = \PI w$.
We see $\set{\reach {} {\ell\arr}}_{\ell,\arr \in L}$
as a family of subsets of $\Qgeq^X \times \N^\Sigma \times \Qgeq^X$
and we call it the \emph{ternary reachability relation}.

Let $\set {\psi_{\ell\arr}(\floor {\vec x}, \fract {\vec x}, \vec f, \floor {\vec y}, \fract {\vec y})}_{\ell,\arr\,\in L}$
be a family of \logic formulas,
where $\floor {\vec x},\floor {\vec y}$ represent the integer values of initial and final clocks,
$\fract {\vec x},\fract {\vec y}$ their fractional values,
and $\vec f$ letter counts.
The reachability relation $\set{\reach {} {\ell\arr}}_{\ell,\arr\in L}$
is \emph{expressed} by the family of formulas $\set {\psi_{\ell\arr}}_{\ell,\arr \in L}$
if the following holds:
For every control locations $\ell,\arr \in L$,
clock valuations $\mu, \nu \in \Qgeq^X$
and $\pi \in \N^\Sigma$,
$\mu \reach \pi {\ell\arr} \nu$ holds, if, and only if,
$(\floor \mu, \fract \mu, \pi, \floor \nu, \fract \nu) \models \psi_{\ell\arr}$ holds.
%

\subparagraph*{Main results.}
As the main result of the paper we show that the reachability relation of \TPDA and \TA
is expressible in linear arithmetic \logic.
\begin{theorem}
	\label{thm:TPDA}
	The reachability relation 
	of a \TPDA is expressed by a family of existential \logic formulas 
	computable in double exponential time.
	For \TA, the complexity is 
	exponential.
\end{theorem}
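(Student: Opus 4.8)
The plan is to obtain the formula family $\set{\psi_{\ell\arr}}_{\ell,\arr\in L}$ by two reachability-preserving translations, followed by a quantifier-elimination computation of the reachability relation of the resulting register model. \emph{Step 1: from \TPDA to fractional \TPDA.} First I would eliminate integer and modular constraints---both in tests and in push/pop constraints---keeping only the fractional atoms $\fract x = 0$ and $\fract x \leq \fract y$; this is the task of Sec.~\ref{sec:fractional}. The idea is to split each clock $x$ into its fractional value, which remains a genuine clock, and its integer value $\floor x$, reconstructed after the fact. Integer and modular tests become finite-state bookkeeping---the residue of $\floor x$ modulo $M$ and the difference $\floor x - \floor y$ truncated to $[-M-1,M+1]$ decide them, and both are updated deterministically at resets and at integer crossings. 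To recover the \emph{final} integer values, the automaton emits a fresh letter $\tick x$ whenever $x$ crosses an integer, under the hypothesis---guessed in the control location and then enforced---that $x$ is not reset any more; then $\pi(\tick x)$ equals $\floor\nu(x)$ when $x$ is reset at least once, and $\floor\nu(x)-\floor\mu(x)$ otherwise, which is exactly why the \emph{ternary} relation is the natural invariant here. When a push/pop constraint links the integer part of a stack clock in $Z$ to a global clock in $X$ across the lifespan of the pushed symbol, I would introduce an existential variable for the intervening integer time elapse and eliminate it via quantifier elimination for \logicclock (Corollary~\ref{cor:qe-clocks}), leaving a purely fractional constraint decorated with the tick-counters. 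This step preserves $\mu \reach \pi {\ell\arr} \nu$ up to a definable recoding of the integer components, at the cost of a single exponential (binary constants, modulus $M$).

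\emph{Step 2: from fractional \TPDA to \RPDA over cyclic order atoms.} Next I would replace clocks by registers over the cyclic order atoms $(\Q\cap[0,1),K)$ of~\eqref{eq:cyc}: with each clock $x$ I associate a register holding the fractional part of the absolute time of the last reset of $x$ (with the never-reset $x_0$ as reference), plus one register for the current fractional time. These reset-time values are unchanged by time elapse---which is precisely why a time-less register model suffices, and why \RPDA, with only fractional registers, is enough in place of the heavier \TRPDA. A test $\fract x = 0$ resp.~$\fract x \leq \fract y$ becomes an equality resp.~a $K$-constraint among these registers and the current-time register; a time elapse becomes a re-guess of the current-time register, emitting the $\tick\cdot$ letters for the reset-time registers it sweeps past (a large elapse being split into small ones). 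Push and pop store with each stack symbol the reset-time registers of the stack clocks $Z$, and the now purely fractional push/pop constraints become $K$-constraints among stack, global, and current-time registers. This translation is polynomial and preserves the ternary reachability relation up to a definable renaming of the fractional components.

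\emph{Step 3: reachability relation of \RPDA, and assembly.} It remains to compute, for the resulting \RPDA, a family of quantifier-free formulas over $(\unitint,K)$---augmented with the Parikh counters, which are orbit-invariant and live in the Presburger sort---expressing its reachability relation. Here I would use that $(\unitint,K)$ is homogeneous, hence has finitely many orbits of tuples of each arity and admits effective quantifier elimination, so register configurations fall into finitely many orbits; the \RPDA reachability relation is then obtained by a saturation over these orbits---the register counterpart of the classical saturation computing the reachability relation of a pushdown automaton---with quantifier elimination over $(\unitint,K)$, and over the Presburger sort via Lemma~\ref{lem:qe-union} and Lemma~\ref{lem:qe-clocks-int-rat}, applied throughout to keep the intermediate formulas existential. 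Pulling this family back through Steps~2 and~1, and re-introducing the integer variables $\floor{\vec x},\floor{\vec y}$ through the linear relations tying them to $\floor\mu,\floor\nu$ and the $\tick\cdot$-counts, yields the required existential \logic family. For complexity: Step~1 costs one exponential, Steps~2--3 are polynomial in the fractional \TPDA except for the repeated quantifier elimination along the pushdown saturation, which accounts for the second exponential; for \TA---and more generally for a timeless stack---the fractional push/pop constraints are trivial, the register stack carries no data, the saturation collapses, and a single exponential suffices, matching~\cite{QuaasShirmohammadiWorrell:LICS:2017}.

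The main obstacle is the interplay of Steps~2 and~3: one must check that the timed stack is faithfully mirrored by a cyclic-order register stack---in particular that a time elapse, shifting all fractional values at once, is a rotation invisible to $K$, and that integer crossings during an elapse are counted correctly by the $\tick\cdot$ letters---and that the pushdown saturation over the finitely many orbits of $(\unitint,K)$ terminates while yielding an existential \logic formula of the claimed size. The tick-counting bookkeeping of Step~1---reconstructing $\floor\nu(x)$ when $x$ is reset several times, and synchronising the ``no more resets'' guess with it---is the subsidiary technicality.
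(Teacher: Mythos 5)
Your overall architecture is the paper's own: reduce to a fractional \TPDA by pushing the integer information into tick letters and the Parikh image, then to an \RPDA over cyclic order atoms, then exploit the finitely many orbits. However, two steps do not go through as stated.

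First, in Step~3 you propose to obtain the \RPDA reachability relation by ``a saturation over these orbits \dots with quantifier elimination over the Presburger sort''. Saturation can tell you \emph{which} orbit pairs are connected, but what you need is much stronger: for each pair of locations and each orbit $o$ you must express, in existential Presburger arithmetic, the full set of achievable Parikh vectors $\vec f$ --- and this set carries all the integer clock information via the $\tick x$ counts, so none of it can be abstracted away. That set is the Parikh image of a context-free language, i.e.\ the least solution of a system of language equations, and no amount of quantifier elimination computes a least fixed point. The paper closes this step by building a context-free grammar with nonterminals $X_{\ell\arr o}$ (Lemma~\ref{lem:cfg}) and invoking the result of \cite{VSS05} (Lemma~\ref{lem:ParIm}) that Parikh images of context-free grammars are existential-Presburger-definable in linear time. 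Without this ingredient your Step~3 does not yield the formulas $\varphi^\Z_{\ell\arr o}$, and the reconstruction of $\floor \nu$ from the ticks collapses.

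Second, Step~1 underestimates the diagonal integer pop constraints. After making the automaton push-copy, a pop constraint such as $\floor{y} - \floor{z_x} \leq k$ relates the current value of global clock $y$ to the value $x$ had at the matching push, shifted by the elapsed time. The integer part of that push-time value is unbounded, so it cannot be kept by your ``residue mod $M$ plus truncated difference'' bookkeeping on the stack; it is not determined by ``an existential variable for the intervening integer time elapse'' either, since the constraint also depends on when $y$ was last reset, which can be anywhere in the run; and it cannot be charged to the tick counters, which are global rather than per push--pop pair. The paper needs a separate construction for exactly this (step {\bf B}, Sec.~\ref{app:pop:integer:free}, following \cite{ClementeLasota:LICS:2015}): one fresh \emph{global} clock per atomic pop constraint, reset nondeterministically so that at pop time it equals $z'_x$, which turns the pop constraint into a transition constraint. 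Modular pop constraints, by contrast, are handled essentially as you describe. Since this missing step is also a source of one of the exponential blowups, your complexity accounting (one exponential for Step~1, one for the saturation) would need to be redone once both gaps are repaired.
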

This is a strengthening of analogous results for \TA \cite{ComonJurski:TA:1999,QuaasShirmohammadiWorrell:LICS:2017}
since our model, even without stack, is more expressive than classical \TA due to fractional constraints.
As a side effect of the proofs we get:
\begin{theorem}
	\label{thm:lang}
	Untimed \TPDA languages $\languntimed {\ell, \arr}$ are effectively context-free.
\end{theorem}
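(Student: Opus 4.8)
The plan is to read off Theorem~\ref{thm:lang} from the language-preserving nature of the two translations that are used to establish Theorem~\ref{thm:TPDA}, reducing the claim to a single fact about register pushdown automata over cyclic order atoms.

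First I would record that each translation preserves the untimed language, up to a homomorphism that erases auxiliary symbols. The first translation, from a \TPDA to a fractional \TPDA, adds the fresh tick symbols $\tick x$ to the input alphabet, used to reconstruct the integer parts of clocks; since the class of context-free languages is closed under homomorphic images, it is enough to show that the untimed language of the fractional \TPDA over the extended alphabet $\Sigma \cup \set{\tick x : x \in X}$ is context-free, and then erase the ticks. The second translation, from a fractional \TPDA to an \RPDA over the cyclic order atoms $(\Q \cap [0, 1), K)$ from Eq.~\eqref{eq:cyc}, leaves the finite input alphabet untouched — it only replaces clocks by registers holding fractional parts of absolute reset times — so it is language-preserving verbatim. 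Hence it remains to prove that the projection onto the finite input alphabet of the language of an \RPDA over cyclic order atoms is context-free.

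For this last step I would invoke the general principle that a nondeterministic orbit-finite pushdown automaton over a homogeneous structure has a context-free finite-alphabet projection, instantiated to cyclic order atoms. An \RPDA has finitely many control locations, a fixed finite register set, and a finite stack alphabet whose symbols each carry a fixed finite register set; its guards are quantifier-free constraints over $K$, and, crucially, a frame's registers are never modified between its push and the matching pop (the reduction stores fixed reset times, and $K$ is invariant under time elapse, cf.~Eq.~\eqref{eq:cyc}). One then builds a classical \PDA whose states are pairs (location, cyclic-order type of the current register valuation) and whose stack symbols are pairs (stack symbol, cyclic-order type of that frame's registers relative to the registers present at push time); along each \RPDA step the \PDA nondeterministically guesses how the type is refined by the freshly chosen atoms and checks the refinement against the transition guard, and on a pop it guesses a type relating the popped frame to the current registers consistently with the pop guard. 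Because cyclic order atoms are homogeneous, every locally consistent sequence of type guesses is realised by genuine atoms, so this \PDA accepts exactly the finite-alphabet projection of the \RPDA language, which is therefore context-free. Composing the three observations yields Theorem~\ref{thm:lang}.

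The delicate point I anticipate is the unbounded stack of register-valued frames: one has to argue that, for the purpose of the finite-alphabet projection, only a bounded amount of type information per frame is relevant, and that the pop guard can always be met by a ``late'' choice of atoms made consistently with the earlier, purely local type guesses. This amalgamation property is precisely what homogeneity of $(\Q \cap [0, 1), K)$ provides, and it is also where passing through a register model rather than a clock model pays off, since registers — unlike stack clocks — do not drift while buried in the stack.
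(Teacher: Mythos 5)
Your proposal is correct and follows essentially the same route as the paper: both pass through the language-preserving chain \TPDA $\to$ fractional \TPDA $\to$ \RPDA (erasing the tick symbols by a homomorphism, under which context-free languages are closed) and then abstract register valuations by their orbits, using homogeneity of the cyclic order atoms to justify that locally consistent orbit guesses are realised by genuine atoms. The only difference is cosmetic: the paper packages the last step as an orbit-indexed context-free grammar (Lemma~\ref{lem:cfg}, the triple construction with nonterminals $X_{\ell\arr o}$), whereas you build an equivalent orbit-annotated classical \PDA.
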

The following two sections are devoted to proving the two theorem above.


\section{Fractional \TPDA}
\label{sec:fractional}

A \TPDA is \emph{fractional} if it contains only fractional constraints.
We show that computing the reachability relation
reduces to the same problem for fractional \TPDA.
%
Our transformation is done in three steps,
each one further restricting the set of allowed constraints.
\begin{itemize}
	\item[\bf A] The \TPDA is \emph{push-copy}, that is,
	push operations can only copy global clocks into stack clocks.
	There is one stack clock $z_x$ for each global clock $x$,
	and the only push constraint is
	\begin{gather}
		\label{eq:copy:push:constraint}
			\psicopy(\vec x, \vec z_{\vec x}) \ \equiv\ \bigwedge_{x \in X} \floor {z_x} = \floor x \wedge \fract {z_x} = \fract x.
	\end{gather}
	By pushing copies of global clocks into the stack,
	we can postpone checking all non-trivial stack constraints to the time of pop.
	This steps uses quantifier elimination.
	The blowup of the number of pop constraints and stack alphabet is exponential.

	\item[\bf B] The \TPDA is \emph{pop-integer-free}, that is,
	pop transitions do not contain integer constraints.
	The construction is similar to a construction from \cite{ClementeLasota:LICS:2015}
	and is presented in Sec.~\ref{app:pop:integer:free}.
	Removing pop integer constraints is crucial towards removing all integer clocks
	(modulo constraints will be removed by the next step).
	This step strongly relies on the fact that stack clocks are copies of global clocks,
	which allows one to remove integer pop constraints
	by reasoning about analogous constraints between global clocks at the time of push
	and their future values at the time of pop,
	thus bypassing the stack altogether.
	We introduce one global clock for each integer pop constraint,
	exponentially many locations in the number of clocks and pop constraints,
	and exponentially many stack symbols in the number of pop constraints.
	When combined with the previous step,
	altogether exponentially many new clocks are introduced,
	and doubly exponentially many locations/stack symbols.
	It is remarkable that pop integer constraints can be removed
	by translating them into finitely many transition constraints on global clocks.

 	\item[\bf C] The \TPDA is {fractional}.
	All integer clocks are removed.
	In order to recover their values (which are needed to express the reachability relation),
	a special symbol $\tick x$ is produced when an integer clock elapses one time unit.
	This step introduces a further exponential blowup of control locations \wrt global clocks
	and polynomial in the maximal constant $M$.
	The overall complexity of control locations thus stays double exponential.

\end{itemize}
\noindent
By {\bf A}+{\bf B}+{\bf C} (in this order, since the latter properties are ensured assuming the previous ones),
we get the following theorem.
\begin{restatable}{theorem}{thmFractionalTPDA}
	\label{thm:fractional:TPDA}
	%
	%
	A \TPDA $\P$ can be effectively transformed into a
	fractional \TPDA $\QQ$
	\st	a family of \logic formulas $\set{\varphi_{\ell\arr}}$ expressing the reachability relation of $\P$
	can effectively be computed from a family of \logic formulas $\set{\varphi_{\ell'\arr'}'}$
	expressing the reachability relation of $\QQ$. 
	The number of control locations and the size of the stack alphabet in $\QQ$ have a double exponential blowup,
	and the number of clocks has an exponential blowup.
\end{restatable}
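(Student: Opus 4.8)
The plan is to execute the three reductions {\bf A}, {\bf B}, {\bf C} in sequence, each one preserving the reachability relation in the precise sense that the formulas for the source automaton are recoverable from those of the target automaton. The crucial invariant maintained throughout is that the target automaton's reachability relation, when expressed in \logic, can be post-processed back into the source automaton's reachability relation by a \logic-definable transformation (substitution, existential quantification over fresh variables, and renaming of locations/letters), which keeps us inside \logic and in particular inside its existential fragment once the whole chain is done.

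First I would carry out step {\bf A} (making the \TPDA push-copy). The idea is that a general push constraint $\psi(\vec x, \vec z)$ followed later by a pop constraint $\psi'(\vec x', \vec z)$ --- where $\vec z$ are the same stack clocks, now aged by the time elapsed between push and pop --- can be replaced by: push exact copies $z_x$ of the global clocks (constraint $\psicopy$ of \eqref{eq:copy:push:constraint}), and move all the logical content of $\psi$ into a strengthened pop constraint. Concretely, if $\delta \geq 0$ is the elapsed time, the original push/pop pair is satisfiable iff $\exists \vec z\,.\ \psi(\vec x, \vec z) \wedge \psi'(\vec x', \vec z + \delta)$ where $\vec z$ at pop time equals (copies of push-time global clocks) $+\,\delta$; eliminating the quantifier $\exists \vec z$ over the \logicclock-formula (Corollary~\ref{cor:qe-clocks}, or rather Lemma~\ref{lem:qe-clocks-int-rat} applied sortwise) yields a quantifier-free constraint relating the push-time global clocks, $\delta$, and the pop-time global clocks; but since the copies $z_x$ record exactly the push-time global clock values and have aged by exactly $\delta$, this constraint is expressible as a pop constraint over $X \cup Z$ in the push-copy automaton. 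One must take a disjunctive normal form and branch the automaton over the disjuncts (this is where the exponential blowup in pop constraints and stack alphabet comes from, since the chosen disjunct must be remembered on the stack symbol). The reachability relation is unchanged: the two automata have literally the same runs up to this internal rewriting of push/pop.

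Then step {\bf B} (pop-integer-free): here I would follow the cited construction of \cite{ClementeLasota:LICS:2015}. Since after {\bf A} every stack clock $z_x$ is a faithful aged copy of $x$ at push time, an integer pop constraint such as $\floor{z_x} - \floor{x'} \leq k$ or $\floor{z_x} \eqv m k$ speaks only about how much integer time has passed between the push and pop of this stack frame and the global-clock configuration at push time. The trick is to guess, at push time, a fresh global clock that will be reset precisely so as to monitor the relevant integer threshold, and to verify at pop time the corresponding transition constraint on global clocks, thereby discharging the integer pop constraint without touching the stack. One remembers on the stack symbol which guesses were made (exponential blowup in number of pop constraints), and adds one fresh global clock per integer pop constraint (exponential blowup in clocks), together with bookkeeping locations (exponential, compounding with {\bf A} to doubly exponential). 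Again the reachability relation transfers: the newly added global clocks are auxiliary, so the source reachability relation is obtained from the target one by existentially quantifying them out and projecting away the auxiliary locations.

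Finally step {\bf C} (becoming fractional): all integer (and, using the fractional constraints as per Remark~\ref{rem:sugar} and Sec.~\ref{sec:fractional}, also all modulo) constraints now appear only as transition constraints on global clocks, and the remaining task is to delete the integer part of clocks while recording enough to reconstruct it. For this, whenever a clock $x$ crosses an integer value --- i.e.\ $\fract x$ resets to $0$ during a time elapse --- and $x$ is not reset afterwards, the automaton emits the tick letter $\tick x$; the Parikh count of $\tick x$ plus the final fractional value then determines $\floor{\nu(x)}$, and similarly the initial integer values become extra counters/inputs. The automaton needs finite state to track, for each clock, its integer value modulo $M$ and whether it has crossed the maximal constant (exponential in the number of clocks, polynomial in $M$), which keeps control-location complexity doubly exponential overall. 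One checks that the \logic formula $\set{\varphi'_{\ell'\arr'}}$ for the resulting fractional \TPDA $\QQ$ --- now over fractional clock variables plus extended Parikh counters --- can be turned into $\set{\varphi_{\ell\arr}}$ for $\P$ by substituting $\floor x \mapsto$ (expression in the tick-counters) and $\exists$-quantifying the auxiliary counts, which stays in existential \logic.

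The main obstacle I expect is step {\bf A}, and specifically making the quantifier elimination argument truly work: one must argue that after eliminating $\exists \vec z$, the resulting constraint --- a quantifier-free \logicclock-formula over push-time and pop-time global clocks and the elapsed-time parameter $\delta$ --- is genuinely re-expressible as a legal pop constraint of the push-copy model, i.e.\ as a boolean combination of the allowed atomic integer/modular/fractional diagonal and non-diagonal constraints over $X \cup Z$. This requires exploiting that $z_x$ at pop time is exactly $x_{\text{push}} + \delta$, so that $\delta = z_x - x_{\text{push}}$ is itself a difference of clocks, and then checking that every atom produced by the quantifier-elimination procedure (differences of integer parts, congruences of integer parts, comparisons of fractional parts) falls into the permitted syntactic shapes after this substitution --- using the inter-expressibility of $\floor{x - y}$ and $\floor x - \floor y$ from Remark~\ref{rem:int-diff} to massage the atoms. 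The correctness-of-reachability bookkeeping (showing the recovered formula family really expresses $\P$'s relation) is routine but must be done carefully at each of the three stages, composing the three \logic-definable back-translations into one.
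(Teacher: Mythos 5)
Your proposal follows essentially the same route as the paper: the three-stage pipeline {\bf A} (push-copy via quantifier elimination on the combined push/pop constraint, with the chosen DNF disjunct recorded in the stack alphabet), {\bf B} (discharging integer pop constraints onto fresh global clocks following \cite{ClementeLasota:LICS:2015}), and {\bf C} (the $M$-unary abstraction in the control with $\tick x$ letters recovering integer parts via the Parikh image), with the same blowup accounting and the same back-translation of the reachability formulas at each stage. The obstacle you single out in step {\bf A} --- re-expressing the time-shifted push constraint $\psi_\push(\vec z'_{\vec x} - \vec z_0', \vec z' - \vec z_0')$ as a legal clock constraint before eliminating quantifiers --- is exactly the point the paper addresses, using the identities of Remark~\ref{rem:int-diff} for the integer/modular atoms and the cyclic-order predicate $K_\leq$ for the fractional ones, as you anticipated.
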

\noindent
If there is no stack, then we do not need the first two steps, and we can do directly {\bf C}.
\begin{restatable}{corollary}{corNTA}
	\label{cor:fractional:NTA}
	The reachability relation of push-copy \TPDA/\,\TA
	effectively reduces to the reachability relation of fractional \TPDA/\,\TA
	with an exponential blowup in control locations.
\end{restatable}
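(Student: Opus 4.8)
The plan is to extract Corollary~\ref{cor:fractional:NTA} from the proof of Theorem~\ref{thm:fractional:TPDA}, observing that the double-exponential blowup there is incurred \emph{only} by running step~\textbf{B} after step~\textbf{A}: step~\textbf{A} already makes the set of pop constraints exponential, step~\textbf{B} then introduces one fresh global clock per integer pop constraint — hence exponentially many clocks — and step~\textbf{B}'s location count is exponential in the number of clocks, so the two compound. If the input automaton is already push-copy (so step~\textbf{A} is not performed) or has no stack at all (so steps~\textbf{A} and~\textbf{B} are vacuous), this compounding disappears and a single exponential suffices.

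For \TA there are no push/pop rules, so steps~\textbf{A} and~\textbf{B} hold trivially and only step~\textbf{C} is applied. Step~\textbf{C} discards all integer clocks, keeping only the fractional parts, and augments the alphabet with fresh symbols $\tick x$ emitted whenever clock $x$ crosses an integer value and is not reset again afterwards; this forces the automaton to guess, for each clock, the location of its last reset — an exponential blowup in $|X|$ — and to carry in the finite control enough bounded integer information (polynomial in the maximal constant $M$) to keep evaluating modular and integer \emph{transition} constraints once integer clocks are gone. Given a family $\set{\varphi'_{\ell'\arr'}}$ of \logic formulas expressing the reachability relation of the resulting fractional \TA $\QQ$ over the tuple of initial fractional clock values, tick counts, counts of letters from $\Sigma$, and final fractional clock values, the family $\set{\varphi_{\ell\arr}}$ for $\P$ is obtained by expressing $\floor\nu(x)$ as the number of $\tick x$ emitted after $x$'s last reset, up to a $\condone{\fract\nu(x) = 0}$ correction at integer endpoints, and $\floor\mu(x)$ from the $\tick x$ preceding $x$'s first reset together with $\floor\nu(x)$; this is a quantifier-free manipulation inside \logic and preserves its existential fragment.

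For push-copy \TPDA, step~\textbf{A} holds by hypothesis, so only steps~\textbf{B} and~\textbf{C} are performed. Step~\textbf{B} removes integer pop constraints using the fact that stack clocks are copies of global clocks: an integer constraint on a popped stack clock $z_x$ is equivalent to a constraint relating the value of $x$ at the matching push to the current values of global clocks at the pop, enforced with one extra global clock per integer pop constraint together with a guess — recorded in the control location and stack symbol — of which such constraints will eventually be tested. Since the given push-copy \TPDA has only polynomially many clocks and pop constraints, step~\textbf{B} incurs a single-exponential blowup (polynomially many new clocks, exponentially many locations and stack symbols), and step~\textbf{C} then adds a further single-exponential factor, so the composition stays single-exponential in control locations. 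The \logic-to-\logic reconstructions of steps~\textbf{B} and~\textbf{C} compose, producing $\set{\varphi_{\ell\arr}}$ for $\P$ from the formulas for the final fractional \TPDA.

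I expect the delicate point to be the tick bookkeeping of step~\textbf{C}: relating the tick counts correctly to the integer parts $\floor\mu$ and $\floor\nu$ when clocks are reset mid-run (pre-reset ticks being irrelevant) and at integer-valued endpoints (where off-by-one corrections appear), while simultaneously keeping just enough bounded integer information in the finite control to simulate the remaining modular and integer transition constraints faithfully after integer clocks are removed. Everything else is routine linear arithmetic together with the composition of the already-established transformations.
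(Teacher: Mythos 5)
Your proposal is correct and follows the paper's intended argument: skip step \textbf{A} (and, for \TA, also \textbf{B}), and observe that each remaining step costs only a single exponential in control locations, so their composition does too — the double exponential in Theorem~\ref{thm:fractional:TPDA} arises only because \textbf{A} first makes the pop constraints (hence, via \textbf{B}, the clocks) exponentially numerous before \textbf{C} is applied. The paper itself offers just the one-line remark that without a stack one can go directly to \textbf{C}; your explicit accounting of why \textbf{B} followed by \textbf{C} stays single-exponential for push-copy \TPDA matches the complexity figures the paper attaches to each step.
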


\subsection*{\bf (A) The \TPDA is push-copy}
Let $K_\leq$ be the non-strict variant of the ternary cyclic order $K$ from \eqref{eq:cyc},
defined as $K_\leq(a,b,c) \equiv K(a,b,c) \vee a=b \vee b=c$
for $a, b, c \in \unitint$.
%
%
%
Let $\psi_\push(\vec x, \vec z)$ be a push constraint,
and let $\psi_\pop(\vec x', \vec z')$ be the corresponding pop constraint.
Since stack clock $z_0$ is $0$ when pushed on the stack,
$z_0'$ is the total time elapsed between push and pop;
let $\vec z_0' = (z_0', \dots, z_0')$ (the length of which depends on the context).
%
Let $\vec z_{\vec x}'$ be a vector of stack variables
representing the value of \emph{global clocks} at the time of pop,
provided they were not reset since the matching push.
%
%
Since all clocks evolve at the same rate,
for every global clock $x$ and stack clock $z$, we have
\begin{align}
	\label{equivalence}
	x = z'_x - z_0' \qquad \textrm{ and } \qquad z = z' - z_0'.
\end{align}
If at the time of push, instead of pushing $\vec z$,
we push on the stack a copy of global clocks $\vec x$,
then at the time of pop it suffices to check that the following formula holds
\begin{align}
	\label{eq:psi:pop'}
	\psi_\pop'(\vec x', \vec z'_{\vec x}) \ \equiv\ \exists \vec z' \geq \vec 0 \st
		\psi_\push(\vec z'_{\vec x} - \vec z_0', \vec z' - \vec z_0') \wedge \psi_\pop(\vec x', \vec z').
\end{align}
%
%
Note that the assumption that $z_0 = 0$ at the time of push makes the existential quantification satisfiable by exactly one
value of $z'_0$, namely the total time elapsed between push and pop.
However, $\psi_\push(\vec z'_{\vec x} - \vec z_0', \vec z' - \vec z_0')$ is not a constraint anymore,
since variables are replaced by differences of variables.
We resolve this issue by showing that the latter is in fact equivalent to a clock constraint.
Thanks to \eqref{equivalence},
for every clock $x$ we have
${\floor x = \floor {z'_x - z_0'}}, {\fract x = \fract {z'_x - z_0'}}$, 
and ${\floor z = \floor {z' - z_0'}}, {\fract z = \fract {z' - z_0'}}$. 
Thus, a fractional constraint $\fract y \leq \fract z$ in $\psi_\push$
is equivalent to $\fract {z'_y - z_0'} \leq \fract {z' - z_0'}$,
which is in turn equivalent to
$C = K_\leq(\fract {z_0'}, \fract {z'_y}, \fract {z'})$,
which is definable from $\leq$.
%
%
%
Moreover, 
%
		$\floor y - \floor z
		= \floor {z'_y - z_0'} - \floor {z' - z_0'}
		= (z'_y - z_0' - \fract{z'_y - z_0'}) - (z' - z_0' - \fract{z' - z_0'}) 
		= (z'_y - z') - \fract{z'_y - z_0'} + \fract{z' - z_0'} 
		= (z'_y - z') - \fract{z'_y - z'} + \condone D = \floor {z'_y - z'} + \condone D$,
		with $D = C \wedge \fract{z'_y} \neq \fract{z'}$.
%
\ignore{
and thanks to~\eqref{eq:floor:1} 
we obtain
(cf. the detailed calculation in Sec.~\ref{app:push-copy:calculation}),
\begin{gather*}
	\floor y - \floor {z_x} = \floor{z'_y - z'_x} 
		\qquad \textrm{ and } \qquad
	\fract y - \fract {z_x} = \fract{z'_y - z'_x}. 
\end{gather*}
%
%
Thus, constraints $\floor y - \floor {z_x} \leq k$ and $\fract y \leq \fract {z_x}$ in $\psi_\push$
can be replaced by combinations of constraints on $\floor{z'_y} - \floor {z'_x}$, $\fract {z'_y} \leq \fract{z'_x}$,
and $\alpha, \beta$.
}
(Notice that $\floor {z_0'}$ disappears in this process:
This is not a coincidence, since diagonal integer/modular/fractional constraints
are invariant under the elapse of an \emph{integer} amount of time.)
Thus by~\eqref{eq:floor:1} we obtain a constraint $\psi_\push'(\vec z'_{\vec x}, \vec z')$
logically equivalent to ${\psi_\push(\vec z'_{\vec x} - \vec z_0', \vec z' - \vec z_0')}$,
%
%
and, by separating the fractional and integer constraints
(cf.~Remark~\ref{rem:constraints-as-qf-formulas}),
	$\psi_\pop'(\vec x', \vec z'_{\vec x}) \equiv \exists \floor {\vec z'}, \fract {\vec z'} \st
	\psi_\push'(\floor {\vec z'_{\vec x}}, \fract {\vec z'_{\vec x}}, \floor {\vec z'}, \fract {\vec z'}) \wedge
	\psi_\pop(\floor {\vec x'}, \fract {\vec x'}, \floor {\vec z'}, \fract {\vec z'})$.
%
By Corollary~\ref{cor:qe-clocks}, we can perform quantifier elimination
and we obtain a logically equivalent clock constraint of exponential size (in DNF)
%
$	\xi_{\psi_\push, \psi_\pop}(\floor{\vec x'}, \fract{\vec x'}, \floor{\vec z'_{\vec x}}, \fract{\vec z'_{\vec x}}),$
%
where the subscript indicates that this formula depends on the pair $(\psi_\push, \psi_\pop)$ of push and pop constraints.
%
The construction of $\P'$ consists in checking $\xi_{\psi_\push, \psi_\pop}$
in place of $\psi_\pop$, assuming that the push constraint was $\psi_\push$.
The latter is replaced by $\psicopy$.
Control states are the same in the two automata;
we can break down the $\xi_{\psi_\push, \psi_\pop}$ in DNF and record each conjunct in the stack,
yielding a new stack alphabet of exponential size.
%

\begin{restatable}{lemma}{lemmaA}
	\label{lemma:A}
	Let $\set{\reach {} {\ell\arr}}_{\ell,\arr \in L}$,
	$\set{\reach {} {\ell\arr}'}_{\ell,\arr \in L}$
	be the reachability relations of $\P$, resp., $\P'$.
	Then, $\reach {} {\ell\arr} = \reach {} {\ell\arr}'$ for every $\ell, \arr \in L$,
	and $\P'$ has stack alphabet exponential in the size of $\P$.
\end{restatable}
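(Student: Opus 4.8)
The plan is to verify that the transformation from $\P$ to $\P'$ described above preserves every run, and hence the reachability relation exactly. The correspondence is established by matching push/pop pairs. In a run of $\P$, consider a push transition using constraint $\psi_\push$ creating stack clocks $\vec z$ with initial values $\vec z$ satisfying $\psi_\push(\vec x, \vec z)$, and the matching pop transition using $\psi_\pop$ with the (time-elapsed) stack clock values $\vec z'$ satisfying $\psi_\pop(\vec x', \vec z')$. In $\P'$, the same push is replaced by $\psicopy$, so instead we push a copy $\vec z_{\vec x}$ of the global clocks; by the time of pop these have elapsed to values $\vec z'_{\vec x}$ equal to the values of those global clocks (had they not been reset), and the pop checks $\xi_{\psi_\push,\psi_\pop}(\floor{\vec x'}, \fract{\vec x'}, \floor{\vec z'_{\vec x}}, \fract{\vec z'_{\vec x}})$. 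I would show that the existence of a valid $\vec z$ for the $\P$-run is equivalent to $\xi_{\psi_\push,\psi_\pop}$ holding in the $\P'$-run.

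First I would record the invariant \eqref{equivalence}, which holds because all clocks advance at the same rate and $z_0 = 0$ at push time: the total elapsed time between push and pop equals $z_0'$, so $x = z'_x - z_0'$ and $z = z' - z_0'$ for the matching push/pop occurrence. Consequently, the original pop requirement ``$\exists \vec z \geq \vec 0$ with $\psi_\push(\vec x, \vec z)$ at push and $\psi_\pop(\vec x', \vec z')$ at pop'' is, after the change of variables $\vec z' = \vec z + \vec z_0'$, literally the formula $\psi_\pop'(\vec x', \vec z'_{\vec x})$ of \eqref{eq:psi:pop'}. Then, using the identities from Remark~\ref{rem:int-diff} (equations \eqref{eq:floor:1}) together with the $K_\le$-rewriting of fractional constraints spelled out in the text, I would argue that $\psi_\push(\vec z'_{\vec x} - \vec z_0', \vec z' - \vec z_0')$ is logically equivalent to a genuine clock constraint $\psi_\push'(\vec z'_{\vec x}, \vec z')$ over the variables $\vec z'_{\vec x}, \vec z'$ (the dependence on $\floor{z_0'}$ cancels because diagonal integer/modular/fractional constraints are invariant under integer time elapse, while the fractional constraints only see $\fract{z_0'}$, which $K_\le$ handles). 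Hence $\psi_\pop'$ rewrites to the existentially quantified $\logicclock$ formula displayed before Corollary~\ref{cor:qe-clocks}, and quantifier elimination (Corollary~\ref{cor:qe-clocks}) yields the equivalent quantifier-free $\xi_{\psi_\push,\psi_\pop}$ of exponential DNF size. Putting a disjunct of this DNF into the pushed stack symbol lets $\P'$ check $\xi_{\psi_\push,\psi_\pop}$ at pop time, which completes the run-correspondence in both directions, giving $\reach{}{\ell\arr} = \reach{}{\ell\arr}'$. For the size bound: each original pop constraint is replaced by the disjuncts of its (exponential-size) DNF, and these are recorded in the stack alphabet, so $\P'$ has stack alphabet exponential in $\P$; control locations are unchanged.

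The main obstacle is the algebraic bookkeeping showing that $\psi_\push(\vec z'_{\vec x} - \vec z_0', \vec z' - \vec z_0')$ really collapses to a clock constraint: one must check case by case for each atomic constraint type (integer non-diagonal/diagonal, modular non-diagonal/diagonal, fractional non-diagonal/diagonal) that, after substituting differences of variables, the result is expressible with $\floor{z'_y} - \floor{z'}$, $\fract{z'_y} \le \fract{z'}$, and the $K_\le$-predicate on fractional parts—crucially without any surviving reference to $\floor{z_0'}$. The text already carries out the fractional-diagonal and integer-diagonal cases; the remaining ones are analogous, the key point being that subtracting the same quantity $z_0'$ from both operands leaves integer differences unchanged up to the $\condone{\fract{z'_y} < \fract{z'}}$ correction term and leaves modular differences genuinely unchanged. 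Once this is in place, the rest is a routine quantifier-elimination invocation and a straightforward simulation argument between $\P$ and $\P'$.
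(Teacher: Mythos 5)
Your proposal is correct and follows essentially the same route as the paper: the paper likewise reduces everything to the matching push--pop pair, uses the invariant \eqref{equivalence} and the substitution $\vec z' = \vec z + \vec z_0'$ (concretely, taking $\mu_Z + \delta$ as the witness for $\vec z'$ in \eqref{eq:psi:pop'}) to show that the existence of a valid stack valuation in $\P$ is equivalent to $\xi_{\psi_\push,\psi_\pop}$ holding at pop time in $\P'$, and then invokes Corollary~\ref{cor:qe-clocks} for the exponential stack-alphabet bound. The only presentational difference is that the paper formalises your ``run-correspondence in both directions'' as an induction on derivations via an explicit inductive characterisation of $\reach{}{\ell\arr}$ (input, test, reset, elapse, push-pop, transitivity), which is exactly the ``straightforward simulation argument'' you defer to.
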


\subsection*{\bf (C) The \TPDA is fractional}
Assume that the \TPDA $\P$ is both push-copy ({\bf A}) and pop-integer-free ({\bf B}).
%
%
%
We remove diagonal integer $\floor y - \floor x \sim k$
and modulo $\floor y - \floor x \eqv m k$ constraints on global clocks $x, y$
as in \TA \cite{AD94}.
%
In the rest of the section,
transition and stack constraints of $\P$ are of the form
\begin{align}
	\label{eq:transition:constraints:C}
	&\!\!\!\!\!\textrm{(trans.)}	&\floor x &\leq k, &\floor x &\eqv m k, &\fract x &= 0, &\fract x &\leq \fract y, \\
	&\!\!\!\!\!\textrm{(push)}	&\floor {z_x} &= \floor x, & & & & & \fract {z_x} &= \fract x, \\
	\label{eq:stack:constraints:C}
	&\!\!\!\!\!\textrm{(pop)}		& & &\floor y - \floor {z_x} &\eqv m k, &\fract {z_x} &= 0, &\fract y &\leq \fract {z_x}, \\
	\nonumber
	&		& & &\floor {z_y} - \floor {z_x} &\eqv m k, & & &\fract {z_y} &\leq \fract {z_x}.
\end{align}
%
%

\vspace{-3mm}
\subparagraph{Unary abstraction.}
%
%
We replace the integer value of clocks by their \emph{unary abstraction}:
Valuations $\mu, \nu \in \Qgeq^X$ are \emph{$M$-unary equivalent},
written $\mu \approx_M \nu$,
if, for every clock $x \in X$,
$\floor {\mu(x)} \eqv M \floor{\nu(x)}$ and $\floor{\mu(x)} \leq M \iff \floor{\nu(x)} \leq M$.
Let $\Lambda_M$ be the (finite) set of $M$-unary equivalence classes of clock valuations.
For $\lambda \in \Lambda_M$ we abuse notation and write $\lambda(x)$ to indicate $\mu(x)$ for some $\mu\in\lambda$,
where the choice of representative $\mu$ does not matter.
We write $\lambda[Y \mapsto 0]$ for the equivalence class of
$\nu[Y \mapsto 0]$
%
and we write $\lambda[x \mapsto x + 1]$ for the equivalence class of
$\nu[x \mapsto \nu(x)+1]$,
for some $\nu\in\lambda$ (whose choice is irrelevant).
%
%
Let $\varphi_\lambda(\vec x) \equiv \bigwedge_{x \in X } \floor x \eqv M \lambda(x) \wedge (\floor x < M \iff \lambda(x) < M)$
say that clocks belong to $\lambda$.
For $\varphi$ containing transition constraints of the form \eqref{eq:transition:constraints:C},
$\restrict \varphi \lambda$ is $\varphi$ where every integer $\floor x \leq k$ or modulo constraint $\floor x \eqv M k$
is uniquely resolved to be $\true$ or $\false$ by replacing every occurrence of $\floor x$ with $\lambda(x)$.
Similarly, for $\psi$ a pop constraint of the form \eqref{eq:stack:constraints:C},
$\restrict \psi {\lambda_\push, \lambda_\pop}$ is obtained by resolving modulo constraints $\floor y - \floor {z_x} \eqv M k$
and $\floor {z_y} - \floor {z_x} \eqv M k$
to be $\true$ or $\false$ by replacing every occurrence of $\floor y$ by its abstraction at the time of pop $\lambda_\pop (y)$,
and every occurrence of $\floor {z_x}$
by $\lambda_\push (x) + \Delta(\lambda_\push, \lambda_\pop)$,
i.e., the initial value of clock $x$ plus the total integer time elapsed until the pop, defined as
%
	%
	%
	$\Delta(\lambda_\push, \lambda_\pop) = \lambda_\pop (x_0) - \lambda_\push (x_0) - \condone {\fract {z_0} > \fract {x_0}}$,
%
i.e., we take the difference of $x_0$ (which is never reset) between push and pop,
possibly corrected by ``$-1$'' if the last time unit only partially elapsed;
the substitution for $\floor {z_y}$ is analogous.
%
%
Fractional constraints are unchanged.

\vspace{-4mm}
\subparagraph{Sketch of the construction.}
Given a push-copy and pop-integer-free \TPDA $\P$,
we build a fractional \TPDA $\QQ$ over the extended alphabet
$\Sigma'= \Sigma \cup \setof {\tick x} {x \in X}$
as follows.
%
We eliminate integer $\floor x \leq k$ and modulo constraints $\floor x \eqv M k$
by storing in the control the $M$-unary abstraction $\lambda$.
To reconstruct the reachability relation of $\P$,
we store the set of clocks $Y$ which will not be reset anymore in the future.
Thus, control locations $L'$ of $\QQ$ are of the form $\tuple {\ell, \lambda, Y}$.
In order to properly update the $M$-unary abstraction $\lambda$,
the automaton checks how much time elapses by looking at the fractional values of clocks.
When $\lambda$ is updated to $\lambda[x \mapsto x + 1]$,
a symbol $\tick x$ is optionally produced if $x \in Y$ was guessed not to be reset anymore in the future.
A test transition $\TPDArule \ell \varphi \arr$ is simulated by
$\TPDArule {\tuple {\ell, \lambda, Y}} {\restrict \varphi \lambda} {\tuple{\arr, \lambda, Y}}$.
A push-copy transition $\TPDArule \ell {\pushopgen{\alpha}{\psicopy}} \arr$
is simulated by 
$	\TPDArule
	{\tuple {\ell, \lambda, Y}}
	{\pushopgen{\tuple {\alpha, \lambda}} {\bigwedge_{x \in X} \fract {z_0} = 0 \wedge \fract {z_x} = \fract x}}
	{\tuple{\arr, \lambda, Y}}$
copying only the fractional parts and the unary class of global clocks.
A pop-integer-free transition $\TPDArule \ell {\popopgen{\alpha}{\psi}} \arr$
is simulated by 
$
	\TPDArule
	{\tuple {\ell, \lambda_\pop, Y}}
	{\popopgen{\tuple {\alpha, \lambda_\push}}{\restrict \psi {\lambda_\push,\lambda_\pop}}}
	{\tuple{\arr, \lambda_\pop, Y}}
$.
The reachability formula $\varphi_{\ell \arr}$ for $\P$
can be expressed by guessing the initial and final abstractions $\lambda, \mu$,
and the set of clocks $Y$ which is never reset in the run.
For clocks $x \in Y$, we must observe precisely $\floor {x'} - \floor x$ ticks $\tick x$,
and for the others, $\floor {x'}$, where $x$ is the initial and $x'$ the final value.
Let $g^Y_x = \floor {x'} - \floor x$ if $x \in Y$, and $\floor {x'}$ otherwise.
%
\begin{restatable}{lemma}{lemC}
	\label{lem:C}
	Let $\set{\psi_{\ell' \arr'}(\fract{\vec x}, (\vec f, \vec g), \fract{\vec x'})}_{\ell', \arr' \in L'}$
	express the reachability relation of the fractional $\QQ$ 
	where $\fract{\vec x}, \fract{\vec x'}$ are the fractional values of clocks
	(we ignore integer values),
	$\vec f$ is the Parikh image of the original input letters from $\Sigma$,
	and $\vec g$ of the new input letters $\tick x$'s.
	The reachability relation of $\P$ is expressed by
	%
	$\varphi_{\ell \arr}(\floor {\vec x}, \fract {\vec x}, \vec f, \floor {\vec x'}, \fract {\vec x'}) \equiv
			\bigvee_{\lambda, Y, \mu} \varphi_\lambda(\floor {\vec x}) \wedge
			\psi_{\tuple {\ell, \lambda, Y}\tuple {\arr, \mu, X}}(\fract {\vec x}, (\vec f, \vec g^Y), \fract {\vec x'})$.
\end{restatable}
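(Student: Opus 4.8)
The plan is to establish a transition-by-transition correspondence between the runs of the push-copy, pop-integer-free \TPDA $\P$ and the runs of the fractional \TPDA $\QQ$ produced by the construction, and then to observe that the displayed formula simply reads off the reachability relation of $\P$ along this correspondence. Concretely, a configuration $\tuple{\ell, \mu, w}$ of $\P$ is matched with a configuration $\tuple{\tuple{\ell, \lambda, Y}, \fract \mu, w'}$ of $\QQ$, where $\lambda$ is the $M$-unary class with $\mu \in \lambda$, the set $Y$ consists of those global clocks that are not reset in the remaining part of the run under consideration (so $x_0 \in Y$ throughout, $Y$ only grows, and $Y = X$ at the very end), and $w'$ is obtained from the timed stack $w$ by retaining only the fractional parts of the stored stack-clock valuations and annotating each stack symbol $\alpha$ with the class $\lambda_\push$ in force at the matching push. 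Because $z_0$ is pushed as $0$ and $x_0$ is never reset, on matched configurations the value of $x_0$ at a pop together with the fractional part of the stored copy of $z_0$ recover the integer time elapsed since the matching push through $\Delta(\lambda_\push, \lambda_\pop)$; hence $\restrict \psi {\lambda_\push, \lambda_\pop}$ is equivalent to $\psi$, and $\restrict \varphi \lambda$ to $\varphi$, on matched configurations.

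For the forward implication I would take a run $\rho$ of $\P$ witnessing $\mu \reach \pi {\ell\arr} \nu$, let $\lambda$ be the class of $\mu$, let $Y$ be the set of global clocks never reset along $\rho$, and let $\mu'$ be the class of $\nu$; then the corresponding run $\rho'$ of $\QQ$ performs the same transitions, keeps only fractional clock values, updates $\lambda$ by $\lambda[x \mapsto x+1]$ at each integer crossing of $x$ and by $\lambda[W \mapsto 0]$ at a reset of $W$, adds a clock to the tracked not-reset set exactly at its last reset (and keeps the never-reset clocks in it from the start), and emits $\tick x$ at every integer crossing of $x$ while $x$ is in that set. I would check that $\rho'$ is a legal run of $\QQ$ — the only non-immediate obligation being satisfaction of the resolved tests and pop constraints, which follows from $\rho$ satisfying the original $\varphi$, resp.\ $\psi$, and from the equivalences noted above — that $\rho'$ terminates in $\tuple{\tuple{\arr, \mu', X}, \fract \nu, \varepsilon}$, that its $\Sigma$-Parikh image equals $\pi$ (input letters are copied verbatim), and that its $\tick$-Parikh image equals exactly $\vec g^Y$: a never-reset clock $x$ increases monotonically from $\mu(x)$ to $\nu(x)$ and thus crosses $\floor{\nu(x)} - \floor{\mu(x)} = g^Y_x$ integers, while a clock $x$ that is reset crosses no integer before its last reset and exactly $\floor{\nu(x)} = g^Y_x$ integers afterwards. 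This gives one direction, namely $(\floor\mu, \fract\mu, \pi, \floor\nu, \fract\nu) \models \varphi_{\ell\arr}$. For the converse, fix a satisfied disjunct $(\lambda, Y, \mu')$ together with a run of $\QQ$ witnessing $\psi_{\tuple {\ell, \lambda, Y}\tuple {\arr, \mu', X}}(\fract \mu, (\pi, \vec g^Y), \fract \nu)$ and with $\floor\mu \in \lambda$; I would re-decorate the clocks with concrete integer parts, starting from $\floor\mu$ and mirroring the $\lambda$-updates and resets of the $\QQ$-run, which lifts it to a run of $\P$ reaching some $\tuple{\arr, \nu^\ast, \varepsilon}$ with $\fract{\nu^\ast} = \fract\nu$. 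The fact that a clock can enter the not-reset set only at (the last) reset of itself, and cannot be reset while inside it, forces, via the prescribed count $g^Y_x$ of $\tick x$'s, the equality $\floor{\nu^\ast} = \floor\nu$, so the lifted run witnesses $\mu \reach \pi {\ell\arr} \nu$. Together with the assumed correctness of $\set{\psi_{\ell'\arr'}}$ for $\QQ$, the two implications establish the lemma.

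The hardest part will be the faithfulness of the tick mechanism for clocks that are reset during a run: one has to pin down the permissible updates of the guessed not-reset set in $\QQ$ (a clock joins it only at a reset of that clock, never leaves it, is never reset afterwards, and at the end the set is $X$), and then argue that, under these rules, the number of $\tick x$ symbols produced equals the integer displacement $g^Y_x$ exactly — not merely up to a shift — so that fixing this number in the formula both yields a correct $\P$-run (forward) and determines the final integer valuation (backward), obviating a separate $\varphi_{\mu'}(\floor{\vec x'})$ conjunct. A second, more routine, point requiring care is that $\lambda_\push$, $\lambda_\pop$, and the correction $\condone{\fract {z_0} > \fract {x_0}}$ hidden in $\Delta(\lambda_\push, \lambda_\pop)$ really do recover, modulo $M$, the quantities $\floor y - \floor{z_x}$ and $\floor{z_y} - \floor{z_x}$ occurring in pop constraints; this is the computation behind Remark~\ref{rem:int-diff}, using crucially that after step \textbf{B} pop constraints contain no non-modular integer atom, so that only residues modulo $M$ are relevant.
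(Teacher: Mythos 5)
Your proposal is correct and follows essentially the same route as the paper: a run-by-run correspondence between $\P$ and $\QQ$ in which the guessed set $Y$ of never-reset clocks and the tick counts $\vec g^Y$ pin down the integer parts, the paper merely phrasing this via an intermediate formula with existentially quantified tick counts $\vec g$ that are then eliminated because $\floor{\vec x}$, $\floor{\vec x'}$ and $Y$ determine them uniquely. The two points you flag as delicate (exact tick counting for clocks that are reset, and the modular correctness of $\Delta(\lambda_\push,\lambda_\pop)$ given that pop constraints after step {\bf B} are purely modular and fractional) are precisely the ones the paper's construction is designed around.
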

%
%


\section{From fractional \TPDA to register \PDA}
\label{sec:frac2reg}

The aim of this section is to prove the following result which, together with Theorem~\ref{thm:fractional:TPDA}, completes
the proof of our main result Theorem~\ref{thm:TPDA}.
\begin{restatable}{theorem}{thmFract2reg}
	\label{thm:fract}
	The fractional reachability relation of a fractional \TPDA $\P$
	is expressed by existential $\logic$ formulas, 
	computable in time
	exponential in the number of clocks and polynomial in the number of control locations and stack alphabet.
\end{restatable}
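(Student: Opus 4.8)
The plan is to reduce a fractional \TPDA $\P$ to a register pushdown automaton \RPDA $\R$ over the cyclic order atoms $(\unitint, K)$ of~\eqref{eq:cyc}, and then to compute the ternary reachability relation of $\R$ by a pushdown least‑fixed‑point (saturation) computation combined with quantifier elimination over cyclic order atoms, exploiting that the latter form a homogeneous --- and in particular oligomorphic --- structure.

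\emph{Step 1: from fractional \TPDA to \RPDA via reset‑point semantics.} Instead of the fractional value $\fract x$ of a clock $x$ we keep in a register $r_x$ the fractional part of the \emph{absolute} time $\rho_x$ at which $x$ was last reset, plus one auxiliary register $r_{\mathrm{now}}$ holding the fractional part of the current absolute time $t$ (so $r_{x_0}$ is constant since $x_0$ is never reset). Since all clocks elapse at the same rate, $\fract x = \fract{t-\rho_x}$, hence every fractional constraint becomes a quantifier‑free $(\unitint,K)$‑constraint on registers: $\fract x = 0$ becomes $r_x = r_{\mathrm{now}}$, and $\fract x \leq \fract y$ becomes $K_\leq(r_y, r_x, r_{\mathrm{now}})$; the pop constraints of the form~\eqref{eq:stack:constraints:C} translate likewise, now relating the registers just popped off the stack with the current control registers and $r_{\mathrm{now}}$. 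A $\resetop Y$ is the register assignment $r_x := r_{\mathrm{now}}$ for $x\in Y$, and a time elapse re‑guesses $r_{\mathrm{now}}$ with no constraint (any new fractional part of $t$ is realised by some delay, possibly after several windings of the circle) while leaving all reset‑point registers untouched --- this is exactly where invariance of $K$ under time elapse is used. Since $\P$ is push‑copy, a push freezes the current register tuple on the stack (stack clocks being copies, their reset points equal the global ones and stay unchanged until pop), so it is simulated by an \RPDA push of that atom tuple, and ticks $\tick x$ become ordinary input transitions guarded by $\fract x = 0$. The resulting $\R$ has the same control locations and stack alphabet as $\P$ and $O(\abs X)$ registers, and its ternary reachability relation determines that of $\P$: folding the encoding back expresses $\fract{\vec x},\fract{\vec x'}$ in terms of the registers, which become existentially quantified.

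\emph{Step 2: ternary reachability of \RPDA over cyclic order atoms.} It remains to express the ternary reachability relation of $\R$ --- a subset of $\unitint^k \times \N^{\Sigma'} \times \unitint^k$ with $k = O(\abs X)$ --- by existential $\logic$ formulas. We compute it as the least fixed point of the ``same‑level return'' relation, i.e.\ reachability between configurations of equal stack height that never pop below, which is closed under composition, under single non‑push/pop transitions, and under the pattern ``push $\alpha$ with a frozen register tuple, return at the same level, pop $\alpha$''. Because $(\unitint,K)$ is homogeneous it admits quantifier elimination, and being oligomorphic it has, for each arity, only finitely many --- exponentially many in $k$ --- orbits of register tuples; hence the iterands are finite unions of orbits refined by quantifier‑free $(\unitint,K)$‑constraints, each iteration stays in this class after quantifier elimination, and the iteration stabilises after a number of steps polynomial in the number of control locations and stack symbols. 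The Parikh image of the run is tracked along the way and contributes a semilinear, hence existential Presburger ($\logicint$), component, by a Parikh‑type argument for pushdown systems lifted to the orbit‑finite setting. Collecting the pieces, the reachability relation is a finite disjunction of conjunctions of a quantifier‑free $(\unitint,K)$‑constraint on the fractional registers --- which by~\eqref{eq:cyc} is already quantifier‑free definable in $\logicrat$ --- and a semilinear $\logicint$ condition on the Parikh image, i.e.\ an existential $\logic$ formula; the cost is exponential in $\abs X$ (number of orbits over the atoms) and polynomial in the number of control locations and stack symbols of $\R$, hence of $\P$.

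\emph{Main obstacle.} Step~1 is essentially bookkeeping once the cyclic‑order encoding is fixed; the crux is Step~2: showing that the pushdown least‑fixed‑point computation over $(\unitint,K)$ both terminates and remains definable, which hinges on homogeneity and quantifier elimination for cyclic order atoms, and simultaneously keeping the Parikh image inside Presburger while it interacts with the infinite atom domain --- a Parikh theorem for register pushdown automata over homogeneous atoms. Threading all of this through the \emph{existential} fragment of $\logic$, with the stated exponential‑in‑clocks and polynomial‑in‑locations complexity, is where the work lies.
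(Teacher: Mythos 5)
Your proposal is correct and follows essentially the same route as the paper: reset-point registers storing fractional timestamps over the cyclic order atoms $(\unitint,K)$, translation of the fractional \TPDA into an \RPDA with $K$-constraints (resets assign $\hat x := \hat x_0$, time elapse leaves reset points untouched, $\fract x \leq \fract y$ becomes $K_\leq(\hat y,\hat x,\hat x_0)$), and then reachability of the \RPDA computed orbit-by-orbit. Your ``saturation of the same-level return relation plus a Parikh theorem lifted to the orbit-finite setting'' is exactly the paper's Lemma~\ref{lem:cfg} (an orbit-indexed context-free grammar whose productions are precisely your closure rules) followed by the effective Parikh theorem of \cite{VSS05} and Corollary~\ref{cor:reach}, so the piece you single out as the main obstacle is resolved by homogeneity/orbit-finiteness exactly as you anticipate.
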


\para{Cyclic atoms.}
We model fractional clock values by the \emph{cyclic atoms} structure
$
{(\unitint, K)}
$
with universe $\unitint = \Q \cap [0, 1)$,
where $K$ is the ternary cyclic order \eqref{eq:cyc}.
%
%
Since $K$ is invariant under cyclic shift,
it is convenient to think of elements of $\I$
as placed clockwise on a circle of unit perimeter; cf.~Fig.~\ref{fig:cyclic}(a).
An \emph{automorphism} is a bijection $\alpha$ that preserves and reflects $K$,
i.e., $K(a, b, c)$ iff $K(\alpha(a), \alpha(b), \alpha(c))$;
automorphisms are extended to tuples $\unitint^n$ point-wise.
%
\begin{wrapfigure}{r}{0.5\textwidth}
	\vspace{-20pt}
	\label{fig:cyclic}
	\begin{center}

		\begin{tikzpicture}[scale=0.8]
		\draw (-0.2,2) node {(a)};
		\draw (1,1) circle (1cm);
		\draw (1,2) node {\textbullet};
		\draw (1,1.75) node {0};
		\draw (0.4,1.8) node {\textbullet};
		\draw (0.4,1.6) node {$a$};
		\draw (1.95,1.3) node {\textbullet};
		\draw (1.75,1.3) node {$b$};
		\draw (1.4,0.1) node {\textbullet};
		\draw (1.4,0.35) node {$c$};
		\end{tikzpicture}
		\qquad\qquad
	\ignore{
		\begin{tikzpicture}[scale=0.8]
		\draw (-0.2,2) node {(b)};
		\draw (1,1) circle (1cm);
		\draw (1,2) node {\textbullet};
		\draw (1,1.75) node {$x_1$};
		\draw (0.4,1.8) node {\textbullet};
		\draw (0.5,1.55) node {$x_2$};
		\draw (1.95,0.7) node {\textbullet};
		\draw (1.35,0.95) node {$x_3{=}x_5$};
		\draw (0.6,0.1) node {\textbullet};
		\draw (0.6,0.35) node {$x_4$};
		\end{tikzpicture}
		\qquad\qquad
	}
		\begin{tikzpicture}[scale=0.8]
		\draw (-0.2,2) node {(b)};
		\draw (1,1) circle (1cm);
		\draw (1,2) node {\textbullet};
		\draw (1,1.75) node {0};
		\draw (0.3,1.7) node {\textbullet};
		\draw (0.35,1.5) node {$a$};
		\draw (2,1) node {\textbullet};
		\draw (1.75,1) node {$b$};
		\draw[very thick,blue] (0.3,1.7) arc (135:0:1) ;
		\draw[->,very thick,blue] (1.985,1.2) -- (2,1.05);
		\end{tikzpicture}
		\caption{
		(a) Relation $K$.
		%
		(b) The cyclic difference $b\cycminus a$. 
		}
		\label{fig:cyclic}
	\end{center}
  \vspace{-20pt}
\end{wrapfigure}%
%
%
Cyclic atoms are
%
homogeneous~\cite{survey} 
and thus $\unitint^n$ splits into exponentially many \emph{orbits} $\orbits{}{\I^n}$,
where $u, v\in\unitint^n$ are in the same orbit if some
automorphism maps $u$ to $v$.
An orbit is an equivalence class of indistinguishable tuples,
similarly as regions for clock valuations,
but in a different logical structure:
For instance $(0.2, 0.3, 0.7)$, $(0.7, 0.2, 0.3)$, and $(0.8, 0.2, 0.3)$ belong to the same orbit,
while $(0.2, 0.3, 0.3)$ belongs to a different orbit.
%
%
%

\para{Register PDA.}
We extend classical pushdown automata with additional $\I$-valued \emph{registers},
both in the finite control (i.e., global registers) and in the stack.
Registers can be compared by quantifier-free formulas with equality and $K$,
called \emph{$K$-constraints}.
For simplicity, we assume that there are the same number of global and stack registers.
A \emph{register pushdown automaton} (\RPDA)
is a tuple $\QQ  = \tuple {\Sigma, \Gamma, L, X, Z, \Delta}$
where
$\Sigma$ is a finite input alphabet,
$\Gamma$ is a finite stack alphabet,
$L$ is a finite set of control locations,
$X$ is a finite set of \emph{global registers},
$Z$ is a finite set of \emph{stack registers}, and
the last item $\Delta$ is a set of transition rules
$\TPDArule \ell \op \arr$ with $\ell, \arr \in L$ control locations,
where $\op$ is either:
%
1) an input letter $a \in \Sigma_\varepsilon$,
%
2) a $2k$-ary $K$-constraint $\psi(\vec x, \vec x')$ relating pre- and post-values of global registers,
%
3) a push operation $\pushopgen{\alpha}{\psi(\vec x, \vec z)}$
with $\alpha \in \Gamma$ a stack symbol to be pushed on the stack
under the $2k$-ary $K$-constraint $\psi$ relating global $\vec x$ and stack $\vec z$ registers, or
%
4) a pop operation $\popopgen{\alpha}{\psi(\vec x, \vec z)}$,
similarly as push.
%
%
We consider \RPDA as symbolic representations of classical \PDA with infinite sets of control states
$\widetilde L = L \times \unitint^X$ and infinite stack alphabet $\widetilde \Gamma = \Gamma \times \unitint^Z$.
A \emph{configuration} is thus a tuple $\tuple{\ell, \mu, w} \in L\times \unitint^X \times \widetilde \Gamma^*$
where $\ell$ is a control location, $\mu$ is a valuation of the global registers, and $w$ is the current content of the stack.
Let $\tuple{\ell, \mu, u}, \tuple{\arr, \nu, v}$ be two configurations.
For every input symbol $a \in \Sigma_\varepsilon$ we have a transition
%
$ 	\tuple{\ell, \mu, u} \goesto a \tuple{\arr, \nu, v}$
%
whenever there exists a rule $\TPDArule{\ell}{\op}{\arr} \in \Delta$ \st one of the following holds:
%
	%
1) $\op = a \in \Sigma_\varepsilon$, $\mu = \nu$, $u = v$, or
	%
2) $\op = \varphi$, $a = \varepsilon$, $(\mu, \nu) \models \varphi$, $u = v$, or
	%
3) $\op = \pushopgen{\gamma}{\psi}$, $a = \varepsilon$, $\mu = \nu$,
	$v = u \cdot \tuple{\gamma, \mu_1}$
	if $\mu_1 \in \unitint^Z$ satisfies $(\mu, \mu_1) \models \psi$, or
	%
4) $\op = \popopgen{\gamma}{\psi}$, $a = \varepsilon$, $\mu = \nu$,
	$u = v \cdot \tuple{\gamma, \mu_1}$
	if $\mu_1 \in \unitint^Z$ satisfies $(\mu, \mu_1) \models \psi$.

\para{Reachability relation.}
The reachability relations $\mu \reach w {\ell\arr} \nu$ and $\mu \reach f {\ell\arr} \nu$
are defined as for \TPDA by extending one-step transitions $\tuple{\ell, \mu, u} \goesto a \tuple{\arr, \nu, v}$
to words $w \in \Sigma^*$ and their Parikh images $f = \PI w \in \N^\Sigma$.
Thus, $\mu \reach f {\ell\arr} \nu$ is a subset of $\unitint^X \times \N^\Sigma \times \unitint^X$,
which is furthermore invariant under orbits.
In the following let $X'$ be a copy of global clocks.
An initial valuation $\mu$ belongs to $\I^X$,
a final valuation $\nu$ to $\I^{X'}$,
and the joint valuation $(\mu, \nu)$ belongs to $\I^{X \times X'}$.
The following two lemmas hold for \RPDA with homogeneous atoms;
cf.~\cite{ClementeLasota:CSL:2015}, or Sec.~9 in \cite{atombook}.
\begin{restatable}{lemma}{lemmaOrbitInv}
	\label{lem:orbit-inv}
	If $(\mu, \nu), (\mu', \nu')$ belong to the same orbit of $\unitint^{X \times X'}$,
	then $\mu \reach f {\ell\arr} \nu$ iff $\mu' \reach f {\ell\arr} \nu'$.
\end{restatable}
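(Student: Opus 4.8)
The plan is to exploit the fact that the semantics of an \RPDA is defined purely through $K$-constraints, which are by definition invariant under automorphisms of the cyclic atoms structure $(\unitint, K)$. First I would fix an automorphism $\alpha$ of $(\unitint, K)$ witnessing that $(\mu, \nu)$ and $(\mu', \nu')$ lie in the same orbit of $\unitint^{X \times X'}$, so that $\alpha(\mu) = \mu'$ on $X$ and $\alpha(\nu) = \nu'$ on $X'$. The key observation is that $\alpha$ acts not only on global-register valuations but on \emph{entire configurations} $\tuple{\ell, \mu, w} \in L \times \unitint^X \times \widetilde\Gamma^*$: it fixes the control location $\ell$, transforms the global valuation by $\alpha$, and acts point-wise on the $\unitint^Z$-components of every stack frame in $w$.

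The main step is then a one-step simulation claim: if $\tuple{\ell, \mu, u} \goesto a \tuple{\arr, \nu, v}$ via some rule $\TPDArule \ell \op \arr$, then $\tuple{\ell, \alpha(\mu), \alpha(u)} \goesto a \tuple{\arr, \alpha(\nu), \alpha(v)}$ via the \emph{same} rule. I would verify this by inspecting the four cases of the transition relation. For input letters ($\op = a$) there is nothing to check beyond functoriality of $\alpha$. For a test $\op = \varphi$, the condition $(\mu, \nu) \models \varphi$ is preserved because $\varphi$ is a $K$-constraint and $\alpha$ preserves and reflects $K$ (and equality); hence $(\alpha(\mu), \alpha(\nu)) \models \varphi$. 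For push $\op = \pushopgen{\gamma}{\psi}$, from $(\mu, \mu_1) \models \psi$ we get $(\alpha(\mu), \alpha(\mu_1)) \models \psi$ by the same argument, and $\alpha(u \cdot \tuple{\gamma, \mu_1}) = \alpha(u) \cdot \tuple{\gamma, \alpha(\mu_1)}$; the pop case is symmetric. Iterating this claim along a run labelled $w$ with $\PI w = f$, and using that the empty stack is mapped to the empty stack, yields: $\mu \reach f {\ell\arr} \nu$ implies $\alpha(\mu) \reach f {\ell\arr} \alpha(\nu)$, i.e. $\mu' \reach f {\ell\arr} \nu'$. Applying the same reasoning to $\alpha^{-1}$ (also an automorphism) gives the converse implication.

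I do not expect a serious obstacle here: the statement is essentially the standard equivariance principle for automata over homogeneous atoms, and the only thing to be careful about is bookkeeping — making sure $\alpha$ is applied consistently to the global valuation and to every stack frame, and that a single automorphism of $\unitint$ simultaneously realises the orbit equality on the combined tuple $(\mu,\nu) \in \unitint^{X \times X'}$ (this is exactly what "same orbit of $\unitint^{X\times X'}$" provides, as opposed to separate orbit equalities for $\mu$ and $\nu$). Since the excerpt already states this lemma holds for \RPDA over homogeneous atoms (citing \cite{ClementeLasota:CSL:2015} and \cite{atombook}), the proof is just a spelling-out of that general fact for the concrete structure $(\unitint, K)$, and the mild point worth emphasising is that the whole construction never inspects the atoms except through $K$ and $=$, which is precisely why equivariance holds.
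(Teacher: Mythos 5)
Your proof is correct and follows exactly the route the paper intends: the paper does not spell out an argument for this lemma at all, but simply invokes the general equivariance principle for \RPDA over homogeneous atoms (citing \cite{ClementeLasota:CSL:2015} and \cite{atombook}), which is precisely the automorphism-extension and one-step-simulation argument you carry out. Your only additions — applying a single automorphism consistently to every stack frame, noting that the empty stack is fixed, and using $\alpha^{-1}$ for the converse — are the right bookkeeping details for instantiating that general fact to $(\unitint, K)$.
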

\minus
\begin{restatable}{lemma}{lemmaCfg}
	\label{lem:cfg}
	Given a \RPDA $\QQ$ 
	one can construct a context-free grammar $G$ of exponential size
	with nonterminals of the form $X_{\ell\arr o}$,
	for control locations $\ell, \arr$ and an orbit $o\in\orbits{}{\unitint^{X \times X'}}$, recognising the language
%
$		L(X_{\ell\arr o}) = \setof
			{ \project \Sigma w \in \Sigma^* }
			{ \exists (\mu, \nu) \in o \st \mu \reach w {\ell\arr} \nu },$
	where $\project \Sigma w$ is $w$ without the $\varepsilon$'s.
	Consequently, \RPDA recognise context-free languages.
\end{restatable}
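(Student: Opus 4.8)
The plan is to adapt the classical ``triple construction'' that converts a pushdown automaton into a context-free grammar, working directly with the symbolic \RPDA $\QQ$ rather than with the (infinite) explicit \PDA it represents. Recall that in the classical construction a nonterminal $[\ell, \gamma, \arr]$ generates all input words along which the automaton, starting in $\ell$ with $\gamma$ on top of the stack, reaches $\arr$ having popped exactly that $\gamma$ and nothing below it. Here the ``$\gamma$'' is a symbolic stack symbol $\alpha \in \Gamma$ together with the orbit of the pair consisting of the current global register valuation and the stack register valuation attached to $\alpha$; and the state $\ell$ is refined by the orbit of the global register valuation. Concretely, I would use nonterminals $X_{\ell\arr o}$ as in the statement (the orbit $o \in \orbits{}{\unitint^{X\times X'}}$ records the joint valuation $(\mu,\nu)$ of registers at the entry and exit of a balanced sub-run, with $X'$ a fresh copy of $X$), plus auxiliary ``bracketed'' nonterminals that additionally remember the orbit of a pushed stack symbol's registers relative to the ambient global registers; a routine bookkeeping lets one fold these into the $X_{\ell\arr o}$ family.

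The productions are obtained by a case analysis on the first transition of a balanced sub-run: for an input/$\varepsilon$ or a $K$-constraint transition $\TPDArule\ell\op\arr'$ we emit $X_{\ell\arr o} \to a\, X_{\arr'\arr o'}$ where $o'$ is any orbit compatible with firing $\op$ from $o$ (there are only finitely, in fact exponentially, many such compatible refinements, by homogeneity of the cyclic atoms); and for a push $\TPDArule\ell{\pushopgen\alpha\psi}{\arr_1}$ that is eventually matched by a pop $\TPDArule{\arr_2}{\popopgen\alpha{\psi'}}{\arr_3}$ we emit $X_{\ell\arr o} \to X^\alpha_{\arr_1\arr_2 p}\, X_{\arr_3\arr o'}$, where $p$ ranges over the finitely many orbits of the extended valuation $(\mu, \mu_1, \nu)$ consistent with $\psi$ at the push and $\psi'$ at the matching pop, and $X^\alpha_{\arr_1\arr_2 p}$ generates the contents of the balanced sub-run between the matching push and pop. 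The key point that makes this finite is exactly Lemma~\ref{lem:orbit-inv}: whether a balanced sub-run with a given input Parikh image exists depends only on the orbit of the relevant register valuation, not on the concrete rational values, so orbits form an adequate finite abstraction and the grammar has exponentially many nonterminals and productions (each production involves a bounded number of orbit refinements, each computable by the quantifier-free $K$-reasoning of Sec.~\ref{sec:frac2reg}). Finally one adds productions resolving the auxiliary $X^\alpha$-nonterminals and, as usual, $\varepsilon$-productions $X_{\ell\ell o}\to\emptyword$ for the empty balanced run.

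Correctness is then a standard two-way induction: by induction on the length of a derivation one shows $L(X_{\ell\arr o}) \subseteq \setof{\project\Sigma w}{\exists (\mu,\nu)\in o \st \mu \reach w{\ell\arr}\nu}$, and by induction on the length of a run (splitting at the first return to the initial stack height) one shows the converse; orbit-invariance (Lemma~\ref{lem:orbit-inv}) is invoked at each step to pass between concrete valuations and their orbits, and the case analysis above guarantees that every run is covered by some derivation and vice versa. Consequently $L(X_{\ell\arr o})$ is context-free, and taking the union over the (finitely many) orbits $o$ and a new start symbol shows that $\QQ$ recognises a context-free language. I expect the main obstacle to be purely organisational rather than deep: namely, setting up the orbit-refinement bookkeeping for push/pop pairs so that the orbit $p$ of the triple $(\mu,\mu_1,\nu)$ at a push is correctly threaded through an arbitrarily deep balanced sub-run and matched against the pop constraint — this is where one must be careful that the abstraction by orbits is both sound (no spurious runs) and complete (no run missed), and it is precisely the place where homogeneity of the cyclic atoms is essential.
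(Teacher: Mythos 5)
Your proposal is correct, and it rests on exactly the same key insight as the paper: by homogeneity of the cyclic atoms and orbit-invariance of the reachability relation (Lemma~\ref{lem:orbit-inv}), the orbits of $\unitint^{X\times X'}$ (and of the triples $\unitint^{X\times X'\times Z}$ needed to thread a pushed stack valuation from its push constraint to its matching pop constraint) form a sound and complete finite abstraction, of exponential size. Where you differ is in the shape of the grammar: you adapt the classical triple construction, peeling off the first transition (productions $X_{\ell\arr o}\to a\,X_{\arr'\arr o'}$) and splitting at the first return to the initial stack height, whereas the paper instead mirrors the least-fixed-point characterisation of $\reach{}{\ell\arr}$, with terminal/$\varepsilon$ base productions for single transitions, a separate binary \emph{transitivity} production $X_{\ell\arr o_{13}}\from X_{\ell\ell' o_{12}}\cdot X_{\ell'\arr o_{23}}$ ranging over orbits of $\unitint^{X\times X'\times X''}$, and a unary \emph{push-pop} production $X_{\ell\arr o_{12}}\from X_{\ell'\arr' o_{12}}$ guarded by an orbit $o$ of $\unitint^{X\times X'\times Z}$ with $o_{13}\models\varphi$ and $o_{23}\models\psi$. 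The paper's version avoids your auxiliary bracketed nonterminals (the stack-register orbit appears only as a side condition on the push-pop production, never as part of a nonterminal), which makes the ``folding'' step you flag as an organisational obstacle unnecessary; your version in exchange gives a more operational, run-prefix-driven derivation. Both yield exponentially many nonterminals and productions and both correctness inductions go through, so the difference is one of presentation rather than substance. One minor point: your $\varepsilon$-productions $X_{\ell\ell o}\to\emptyword$ for the empty balanced run admit zero-length runs, which the paper's least-relation characterisation deliberately excludes; this is a convention mismatch you would want to align with the definition of $\reach{}{\ell\arr}$ rather than a real error.
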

\minus
\begin{lemma}[Theorem 4 of \cite{VSS05}] \label{lem:ParIm}
	The Parikh image of $L(X_{\ell\arr o})$
	is expressed by an existential Presburger formula $\varphi^\Z_{\ell\arr o}$
	computable in time linear in the size of the grammar.
\end{lemma}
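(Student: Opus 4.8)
The plan is to instantiate the classical fact that the Parikh image of a context-free language is an effectively computable existential Presburger set, applied to the grammar $G$ of Lemma~\ref{lem:cfg} with start nonterminal $X_{\ell\arr o}$. The construction produces, in a single linear pass over the productions of $G$, a formula $\varphi^\Z_{\ell\arr o}$ whose free variables are the terminal counters $\vec f \in \N^\Sigma$, and which is satisfiable at $\vec f = \pi$ exactly when some word $w$ with $\pi = \PI w$ is derivable from $X_{\ell\arr o}$.

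First I would introduce a fresh nonnegative integer variable $y_p$ for each production $p$ of $G$, intended to count how often $p$ is applied in a derivation tree, and existentially quantify over all of them. Since the Parikh image of the yield of a tree depends only on these multiplicities, I assert the linear identities $f_a = \sum_p y_p \cdot c_{p,a}$ for each terminal $a \in \Sigma$, where $c_{p,a}$ is the number of occurrences of $a$ on the right-hand side of $p$; this pins down the relationship between the $y_p$ and the claimed vector $\vec f$.

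Next I would constrain the $y_p$ to be consistent with an actual parse tree, in two layers. The balance layer requires, for every nonterminal $A$, that the number of expansions of $A$ equals the number of generations of $A$, i.e.\ $[A = X_{\ell\arr o}] + \sum_p y_p \cdot d_{p,A} = \sum_{p \in P_A} y_p$, where $d_{p,A}$ counts the $A$'s on the right-hand side of $p$, $P_A$ is the set of productions with left-hand side $A$, and $[A = X_{\ell\arr o}]$ is $1$ for the start symbol and $0$ otherwise. Balance alone is insufficient: it is also satisfied by spurious multiplicities supported on cycles of nonterminals disconnected from the root. To exclude these I add a connectivity layer certifying reachability from $X_{\ell\arr o}$: I existentially quantify an integer rank $i_A$ for each nonterminal and require that every expanded nonterminal $A$ (one with $\sum_{p \in P_A} y_p > 0$) is either $X_{\ell\arr o}$ itself or is generated by some used production whose left-hand side has strictly smaller rank, i.e.\ there is a $p$ with $y_p > 0$, with $A$ on the right-hand side of $p$, and with $i_{\mathrm{lhs}(p)} < i_A$. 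The quantification over $p$ here is just a finite disjunction over productions, so the clause stays first-order and of size linear in $G$.

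Correctness is proved in both directions. Given a derivation tree for $w$ from $X_{\ell\arr o}$, reading off the production multiplicities and setting each $i_A$ to the minimal depth at which $A$ is expanded witnesses $\varphi^\Z_{\ell\arr o}$ at $\vec f = \PI w$. Conversely, from any satisfying assignment the ranks furnish a strictly decreasing path from each expanded nonterminal back to the root, so balance together with connectivity let me assemble the multiset of applied productions into a genuine tree whose yield has Parikh image $\vec f$. I expect this reconstruction---turning a flow-plus-rank certificate into an actual parse tree---to be the main obstacle, since it is exactly the point where the naive balance-only encoding fails. Finally, each terminal identity, balance equation, and connectivity clause is linear in the size of the corresponding production, so $\varphi^\Z_{\ell\arr o}$ is an existential Presburger formula of size linear in $G$ and is produced by a single linear-time traversal, yielding the claimed bound.
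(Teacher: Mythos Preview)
The paper does not prove this lemma at all: it is stated as a direct citation of Theorem~4 of Verma--Seidl--Schwentick (VSS05) and used as a black box. There is therefore no ``paper's own proof'' to compare against.

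Your reconstruction is nonetheless sound and is essentially the VSS05 construction: one integer variable per production, Kirchhoff-style balance equations at each nonterminal, and a reachability certificate to kill spurious disconnected cycles. Your rank-based encoding of reachability is a legitimate variant (VSS05 phrases it slightly differently, via a distance variable $z_A$ with $z_{X_{\ell\arr o}}=0$ and, for each used $A$, some used production witnessing $z_A = z_{\mathrm{lhs}(p)}+1$), and your size accounting is correct since each production contributes once to each balance equation and once per RHS nonterminal to a connectivity clause. The only place to be a bit more careful in a full write-up is the converse direction---assembling a derivation tree from a balanced, connected multiset of productions---which you rightly flag as the nontrivial step; the standard argument proceeds by induction on $\sum_p y_p$, peeling off a leaf production at a maximal-rank nonterminal.
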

%
%
%
\minus
\begin{restatable}{corollary}{corReach}
	\label{cor:reach}
	Let $\varphi^\I_o$ be the characteristic $K$-constraint of the orbit
	$o \in \orbits{}{{\I^{X \times X'}}}$.
	The reachability relation $\reach {} {\ell\arr}$ of an \RPDA $\QQ$ 
	is expressed by
	%
		$\varphi_{\ell\arr}(\vec x, \vec f, \vec x') \ \equiv \ \bigvee_{o \in \orbits{}{{\I^{X \times X'}}}}
		\varphi^\Z_{\ell\arr o}(\vec f) \wedge \varphi^\I_o(\vec x, \vec x')$.
	%
	The size of $\varphi_{\ell\arr}$ is 
	exponential in the size of $\QQ$.
\end{restatable}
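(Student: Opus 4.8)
The plan is to prove Corollary~\ref{cor:reach} as an immediate assembly of the three preceding lemmas. Fix control locations $\ell, \arr$. The reachability relation $\reach{}{\ell\arr} \subseteq \I^X \times \N^\Sigma \times \I^{X'}$ is, by Lemma~\ref{lem:orbit-inv}, invariant under orbits of $\I^{X \times X'}$: whether $\mu \reach f {\ell\arr} \nu$ holds depends on $(\mu,\nu)$ only through its orbit $o$. Hence we may decompose the relation along the (finitely many, exponentially many) orbits $o \in \orbits{}{\I^{X \times X'}}$, and it suffices to describe, for each $o$, the set of Parikh images $f$ such that $\mu \reach f {\ell\arr} \nu$ for $(\mu,\nu) \in o$ — this set is exactly the Parikh image of the language $L(X_{\ell\arr o})$ produced by the grammar $G$ of Lemma~\ref{lem:cfg}.

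First I would invoke Lemma~\ref{lem:cfg} to obtain the context-free grammar $G$ of exponential size with nonterminals $X_{\ell\arr o}$, so that $f \in \PI(L(X_{\ell\arr o}))$ iff there is $(\mu,\nu)\in o$ with $\mu \reach f {\ell\arr} \nu$; combined with orbit-invariance (Lemma~\ref{lem:orbit-inv}) this upgrades to: for \emph{every} $(\mu,\nu)\in o$, $\mu \reach f {\ell\arr} \nu$ iff $f \in \PI(L(X_{\ell\arr o}))$. Next I would apply Lemma~\ref{lem:ParIm} to each nonterminal $X_{\ell\arr o}$, yielding an existential Presburger formula $\varphi^\Z_{\ell\arr o}(\vec f)$ defining $\PI(L(X_{\ell\arr o}))$, computable in time linear in $|G|$, hence of exponential size overall. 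Finally, since cyclic atoms are homogeneous and $\I^{X\times X'}$ splits into orbits definable by quantifier-free $K$-constraints, let $\varphi^\I_o(\vec x, \vec x')$ be the characteristic $K$-constraint of orbit $o$, so that $(\mu,\nu) \models \varphi^\I_o$ iff $(\mu,\nu) \in o$.

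Putting these together, $\mu \reach f {\ell\arr} \nu$ holds iff $(\mu,\nu)$ lies in some orbit $o$ and $f \in \PI(L(X_{\ell\arr o}))$, i.e.\ iff $(\mu,\nu,f) \models \bigvee_{o} \varphi^\Z_{\ell\arr o}(\vec f) \wedge \varphi^\I_o(\vec x, \vec x')$, which is precisely the claimed formula $\varphi_{\ell\arr}$. This is an existential $\logic$ formula: each disjunct is a conjunction of an existential Presburger ($\logicint$) formula over the letter-count variables $\vec f$ and a quantifier-free rational-sort ($\logicrat$) $K$-constraint over $\vec x, \vec x'$ (the cyclic order $K$ being definable from $\leq$ as in~\eqref{eq:cyc}), so the whole formula sits in the existential fragment of the two-sorted logic $\logic$. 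For the size bound, there are exponentially many orbits $o$, each $\varphi^\Z_{\ell\arr o}$ is of exponential size by Lemma~\ref{lem:ParIm} applied to the exponential-size grammar of Lemma~\ref{lem:cfg}, and each $\varphi^\I_o$ has size polynomial in the number of registers; hence $\varphi_{\ell\arr}$ has size exponential in the size of $\QQ$, and it is computable within the same bound.

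There is no real obstacle here — the corollary is pure bookkeeping once Lemmas~\ref{lem:orbit-inv}, \ref{lem:cfg}, and~\ref{lem:ParIm} are in hand. The only point requiring a line of care is making explicit that the disjunction over orbits correctly captures the relation in both directions: the ``if'' direction uses that $\varphi^\I_o$ forces $(\mu,\nu)\in o$ and then $\varphi^\Z_{\ell\arr o}$ witnesses a run via Lemma~\ref{lem:cfg}, while the ``only if'' direction picks the (unique) orbit $o$ of the given $(\mu,\nu)$, uses orbit-invariance to transport the run to the representative used by the grammar, and reads off $f$ from $\PI(L(X_{\ell\arr o}))$. The other mild subtlety is confirming that the resulting formula genuinely lies in the \emph{existential} fragment of $\logic$ and not merely in $\logic$ — this follows because Lemma~\ref{lem:ParIm} delivers existential Presburger formulas and orbit constraints are quantifier-free.
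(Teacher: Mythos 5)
Your proposal is correct and follows exactly the paper's intended argument: the corollary is a direct assembly of Lemma~\ref{lem:orbit-inv} (orbit-invariance justifies decomposing by orbit), Lemma~\ref{lem:cfg} (the grammar captures the witnessing words per orbit), and Lemma~\ref{lem:ParIm} (existential Presburger formula for the Parikh image), with the stated exponential size bound coming from the exponentially many orbits and the exponential-size grammar. The paper offers no separate proof precisely because the corollary is this bookkeeping, and your careful note about upgrading the existential quantifier over $(\mu,\nu)\in o$ in $L(X_{\ell\arr o})$ to a universal one via orbit-invariance is the one point worth making explicit.
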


\begin{proof}[Proof of Theorem~\ref{thm:fract}]
	Define \emph{cyclic sum} and \emph{difference} of $a, b\in \Q$ to be $a \cycplus b = \fract {a + b}$, resp., $a \cycminus b := \fract{a - b}$.
	For a set of clocks $X$,
	let $X_{x_0} = X \cup \set {x_0}$
	be its extension with an extra clock $x_0\notin X$ which is never reset,
	and let $\hat X_{x_0} = \setof {\hat x} {x \in X_{x_0}}$ be a corresponding set of registers.
	The special register $\hat x_0$ stores the (fractional part of the) current timestamp,
	and register $\hat x$ stores the (fractional part of the) timestamp of the last reset of $x$.
	In this way we can recover the fractional value of $x$
	as the cyclic difference $\fract x = \hat x_0 \cycminus \hat x$.
	Let (cf.~Fig.~\ref{fig:cyclic}(b))
	\begin{align}
		\label{eq:cycminus}
		\varphi_\cycminus(\vec x, \vec {\hat x}) \equiv \bigwedge_{x\in X} \fract x = \hat x_0 \cycminus \hat x.
	\end{align}
	Resetting clocks in $Y \subseteq X$ is simulated by
	$\varphi_{\resetop Y} \equiv
	\hat x_0' = \hat x_0 \wedge \bigwedge_{x \in Y} \hat x' = \hat x_0 \wedge \bigwedge_{x \in X\setminus Y} \hat x' = \hat x$
	and time elapse by
	$\varphi_\elapse \equiv \bigwedge_{x \in X} \hat x' = \hat x$.
	%
	The equality $\hat x_0' = \hat x_0$ in $\varphi_{\resetop Y}$
	says that time does not elapse,
	and the absence of constraints on $\hat x_0, \hat x_0'$ in $\varphi_\elapse$
	allows for an arbitrary elapse of time.
	%
	A clock constraint $\varphi$ is converted into a $K$-constraint $\hat \varphi$
	by replacing $\fract x = 0$ with $\hat x = \hat x_0$
	and $\fract x \leq \fract y$ by $K_\leq(\hat y, \hat x, \hat x_0)$,
	for $x, y \in X \cup Z$.
	%
	%
	For a \TPDA $\P = \tuple {\Sigma,\Gamma, L, X, Z, \Delta}$,
	we define the following \RPDA
	$\QQ = \tuple {\Sigma, \Gamma, L, \hat X_{x_0}, \hat Z, \hat \Delta}$.
	%
	%
	%
	The input rules are preserved.
	A reset rule $\TPDArule \ell {\resetop Y} \arr \in \Delta$,
	is simulated by $\TPDArule \ell {\varphi_{\resetop Y}} \arr \in \hat \Delta$,
	a time elapse rule $\TPDArule \ell \elapse \arr \in \Delta$
	is simulated by $\TPDArule \ell {\varphi_\elapse} \arr \in \hat \Delta$,
	a push rule $\TPDArule \ell {\pushopgen \gamma \varphi} \arr \in \Delta$
	is simulated by $\TPDArule \ell {\pushopgen \gamma {\hat\varphi}} \arr \in \hat \Delta$,
	and similarly for pop rules.
	%
	%
	By Corollary~\ref{cor:reach},
	let $\varphi_{\ell\arr}(\vec {\hat x}, \vec f, \vec {\hat x'})$ 
	express the reachability relation of $\QQ$,
	and define $\xi_{o}^\I(\vec x, \vec x') \equiv
	 \exists \vec {\hat x}, \vec {\hat x'} \st
	 \varphi_{o}^\I(\vec {\hat x}, \vec {\hat x'}) \wedge
	 \varphi_\cycminus(\vec x, \vec {\hat x}) \wedge
	 \varphi_\cycminus(\vec x', \vec {\hat x'}).$
	%
	The reachability relation of $\P$ is recovered as 
	\begin{align} \label{eq:corr}
		\begin{aligned}
		\psi_{\ell\arr}(\vec x, \vec f, \vec x')  \equiv \bigvee
		\{\varphi^\Z_{\ell\arr o}(\vec f) \wedge \xi^\I_o(\vec x, \vec x') | {o \in \orbits{}{{\I^{X \times X'}}}}\}.
		\end{aligned}
	\end{align}
	Intuitively, we guess the value for registers $\vec {\hat x}, \vec {\hat x'}$
	and we check that they correctly describe the fractional values of global clocks
	as prescribed by $\varphi_\cycminus$.
	We now remove the quantifiers from $\xi_{o}^\I$
	to uncover the structure of fractional value comparisons.
	Introduce a new variable $\delta = \hat x_0 \cycminus \hat x_0'$,
	and perform the following substitutions in $\varphi_{o}^\I$
	(\cf the definition of $\varphi_\cycminus$ in~\eqref{eq:cycminus}):
	%
		$\hat x		\mapsto \hat x_0 \cycminus \fract x$, 
		$\hat x'	\mapsto (\hat x_0 \cycminus \delta) \cycminus \fract {x'}$, and 
		$\hat x_0'	\mapsto \hat x_0 \cycminus \delta$.
	%
	By writing $(\hat x_0 \cycminus \delta) \cycminus \fract {x'}$ as $\hat x_0 \cycminus (\delta \cycplus \fract {x'})$,
	we have only atomic constraints of the forms
	$K(\hat x_0 \cycminus u, \hat x_0 \cycminus v, \hat x_0 \cycminus t)$
	and $\hat x_0 \cycminus u = \hat x_0 \cycminus v$,
	where terms $u, v, t$ are of one of the forms
	$0$, $\fract x$, $\delta \cycplus \fract {x'}$, $\delta$.
	These constraints are equivalent, respectively, to $K(t, v, u)$ and $u = v$.
	By expanding the definition of $K$ (cf.~\eqref{eq:cyc}),
	we obtain only constraints of the form $u \precsim v$ with $\precsim \in \set {<, \leq}$.
	Since $\delta$ appears at most once on either side, it can either be eliminated if it appears on both $u, v$,
	or otherwise exactly one of $u, v$ is of the form $\delta$ or $\delta \oplus \fract {x'}$,
	and the other of the form $0$ or $\fract x$.
	By moving $\fract {x'}$ on the other side of the inequality in constraints containing $\delta \oplus \fract {x'}$,
	$\xi_{o}^\I$ is equivalent to
	$\bigwedge_i s_i \precsim t_i \ \wedge \ \exists 0 \leq \delta < 1 \st \bigwedge_j u_j \precsim \delta \wedge \bigwedge_k \delta \precsim v_k$,
	where the terms $s_i, t_i, u_j, v_k$'s are of the form $0$, $\fract x$, or $\fract x \cycminus \fract {y'}$.
	We can now eliminate the quantification on $\delta$ and get a constraint of the form $\bigwedge_h s_h \precsim t_h$.
	Finally, by expanding $b \cycminus a$ as
	$b - a + 1$ if $b < a$ and $b - a$ otherwise
	(since $a, b \in \unitint$)
	we have
	%
$
		\xi_{o}^\I(\vec x, \vec x') \ \equiv \
			\bigwedge_h s_h' \precsim t_h',
$
		%
	%
	where the $s_h', t_h'$'s are of one of the forms: $0$, $\fract x$, $\fract x - \fract {y'}$, or $\fract x - \fract {y'} + 1$.
\end{proof}

\ignore{
\begin{remark}
	We did not use the full power of $\logicrat$ in defining the reachability relation of a fractional \TPDA $\P$:
	there is just one existential quantification (which can be notabene eliminated, cf.~\cite{FerranteRackoff:QE:Reals}) and
	additive expressions arising in~\eqref{eq:psi} are of very restricted forms:
	\slawek{check please if this is not contradictory with some known examples}
	\begin{align*}
	y_i & \sim y_j &
	y_i & \sim y'_j + \delta  &
	y_i & \sim \delta & y_i & \sim 0 \\
	y'_i & \sim y'_j & y_i +1 & \sim y'_j + \delta &
	y'_i + \delta & \sim 1 &  y'_i & \sim 0.
	\end{align*}
	The remarkable advantage of $K$-formulas over this fragment of $\logicrat$
	is that $K$ admits quantifier elimination, which is crucial for the construction of the
	grammar in Lemma~\ref{lem:cfg}.
\end{remark}
}


\newpage
\bibliography{bib}

\appendix


\section{Appendix}

\subsection{Quantifier elimination}
\label{app:qe}

The following appeared as Lemma~\ref{lem:qe-union} in the main text.
\qeUnion*
\begin{proof}
	It suffices to consider a conjunctive formula of the form
	$\varphi \equiv \exists y \cdot \varphi_1 \wedge \varphi_2$
	where $\varphi_1$ is a quantifier-free $\str_1$-formula and $\varphi_2$ is a quantifier-free $\str_2$-formula.
	W.l.o.g.~suppose $y$ is quantified over $\str_1$.
	Since $y$ is a variable of the first sort, it does not appear free in $\varphi_2$,
	and thus $\varphi \equiv (\exists y \cdot \varphi_1) \wedge \varphi_2$.
	By assumption that $\str_1$ admits quantifier elimination,
	$\exists y \cdot \varphi_1$ is equivalent to a quantifier free formula  $\widetilde\varphi_1$,
	and thus the original formula $\varphi$ is equivalent to $\widetilde\varphi_1 \wedge \varphi_2$.
	It is easy to see that the complexities combine as claimed.
\end{proof}

Let $\sem\varphi$ be the set of valuations satisfying $\varphi$.

The following appeared as Lemma~\ref{lem:qe-clocks-int-rat} in the main text.
\qeIntRat*
We prove this by splitting it in two claims.

\begin{lemma}
	\label{lem:qe-clocks-int-rat}
	The structure 
	$\AZc$ admits effective elimination of quantifiers.
	The complexity is singly exponential for conjunctive formulas.
\end{lemma}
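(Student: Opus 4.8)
The plan is to run a Cooper-style quantifier-elimination procedure for $\AZc$, exploiting that the only term constructor besides variables is adding an integer constant, so that coefficients on variables never arise. It suffices to eliminate a single existential quantifier from a conjunction of literals: a block $\exists y_1 \cdots \exists y_n \st \varphi$ is then handled by iterating the step, keeping the running formula in disjunctive normal form and eliminating the next variable separately from each disjunct; and an arbitrary formula by the standard reduction to eliminating innermost existential quantifiers, pushing negations inward.

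First I would normalise the conjunctive matrix: by moving constants across comparisons, every literal becomes one of $x \sim y + c$, $x \sim c$, or a congruence atom $x \eqv m y + c$ or $x \eqv m c$, where $\sim\,\in\{\le,<,=,\ne\}$, $c\in\Z$, and every modulus $m$ divides a fixed $M$ (by the remark on moduli, or in general by taking $M$ to be their $\lcm$; a negated congruence is rewritten as a disjunction of fewer than $M$ congruences). Isolating the literals that mention the variable $y$ to be eliminated, and setting the rest aside, leaves: lower bounds $\ell_i \le y$, upper bounds $y \le u_j$, and --- after merging all congruences on $y$ via the Chinese remainder theorem, or else reporting $\false$ --- a single residue constraint $D(y)$ stating that $y$ lies in a fixed arithmetic progression of period $M$ (possibly shifted by another variable). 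If instead the matrix contains an equality $y = t$, I would simply substitute $t$ for $y$ (conjoining $t \ge 0$), removing the quantifier with no blow-up.

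In the remaining case I would instantiate $y$ at the finitely many relevant witnesses: since $y$ occurs with unit coefficient, a solution, when one exists, can be found within distance $M$ above the greatest lower bound, so $\exists\, y \ge 0 \st \bigwedge_i \ell_i \le y \wedge \bigwedge_j y \le u_j \wedge D(y)$ is equivalent to the disjunction, over residues $r\in\{0,\dots,M-1\}$ and over the lower-bound terms $\ell_i$ (plus the degenerate witness $0$), of the matrix with $y$ replaced by $\ell_i + r$ (resp.\ by $r$); concretely each disjunct has the form $D(\ell_i + r) \wedge \bigwedge_{i'} \ell_{i'} \le \ell_i + r \wedge \bigwedge_j \ell_i + r \le u_j$. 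If there is no congruence the disjunction over $r$ collapses to $r=0$, and if there is no upper bound the formula is just $\true$. The key observation is that because each $\ell_i$ has the shape ``variable plus constant'', every atom produced this way is again a $\logicclockint$ literal --- a congruence, or a comparison of a variable with a variable-plus-constant --- so full addition is never introduced and the output stays in the required fragment; re-conjoining the set-aside literals yields the eliminated formula in disjunctive normal form. Per disjunct the size stays polynomial (each constant grows by at most $M$ per elimination, a polynomial amount in the binary encoding), while the number of disjuncts is multiplied, at each step, by a factor polynomial in the number of literals and proportional to $M$; since $M$ is binary-encoded this is precisely where the single exponential arises, and over a block of $n \le |\varphi|$ variables it compounds only to $2^{O(|\varphi|^2)}$ --- still single-exponential.

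The main obstacle is not conceptual but lies in maintaining these invariants rigorously: checking that the elimination step never needs genuine addition (secured by keeping all terms in ``variable plus constant'' form), correctly accounting for the domain lower bound $0$ of $\N$ (the degenerate witness, which replaces the ``$y\to-\infty$'' summand used in Cooper's method over $\Z$), treating the degenerate subcases (no congruence, no bounds, an equality), and verifying that constants --- hence formula size --- grow only polynomially, so that chaining over a quantifier block stays single-exponential rather than exploding to a tower. Once the single-quantifier step is established with these guarantees, effective quantifier elimination and the complexity bound for conjunctive inputs follow by a routine induction.
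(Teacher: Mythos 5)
Your proposal is correct and follows essentially the same route as the paper: eliminate one variable from a conjunction by substituting the finitely many candidate witnesses $\ell_i + r$ (a lower-bound term plus an offset $r$ bounded by the modulus), with the no-lower-bound case handled by witnesses in $\{0,\dots,M-1\}$, and the single exponential arising from the binary encoding of the modulus. The paper's proof is exactly this substitution (its formula ranges over $j$ maximising the lower bound and $\delta\in\{0,\dots,m-1\}$), so your extra bookkeeping (CRT merging, equality shortcut, DNF maintenance) is compatible refinement rather than a different argument.
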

\begin{proof}
	We assume that all modulo statements are over the same modulus $m$.
	%
	It suffices to consider a conjunctive formula of the form 
	\begin{align}
		\label{eq:integer:qe}
		\exists y \cdot \varphi \ \equiv \ \exists y \cdot \bigwedge_i x_i + \alpha_i \leq y \leq x_i + \beta_i \ \wedge \ y \eqv m x_i + \gamma_i,
	\end{align}
	where, for every $i$,
	$\alpha_i, \beta_i \in \Z \cup \set {-\infty, +\infty}$ with $\alpha_i \leq \beta_i$, $\gamma_i \in \set {0, \dots, m - 1}$,
	where for uniformity of notation we assume $x_0 = 0, \alpha_0 \geq 0$ in order to model non-diagonal constraints on $y$.
	If not all $\alpha_i$'s are equal to $-\infty$,
	then a satisfying $y$ will be of the form $x_j + \alpha_j + \delta$ with $\delta \in \set {0, \dots, m - 1}$
	where $j$ maximises $x_j + \alpha_j$.
	We claim that the following quantifier free formula $\widetilde\varphi$ is equivalent to \eqref{eq:integer:qe}:
	\begin{align}
		\label{eq:quantifier_free}
		\bigvee_{\delta \in \set {0, \dots, m - 1}} \bigvee_j
		\bigwedge_i x_i + \alpha_i \leq x_j + \alpha_j + \delta \leq x_i + \beta_i \ \wedge \ x_j + \alpha_j + \delta \eqv m x_i + \gamma_i.
	\end{align}
	For the complexity claim, $\widetilde\varphi$ is exponentially bigger than \eqref{eq:integer:qe} when constants are encoded in binary.
	%
	%
	For the inclusion $\sem{\widetilde\varphi} \subseteq \sem{\exists y \cdot \varphi}$,
	let $(a_1, \dots, a_n) \in \sem{\widetilde\varphi}$.
	There exist $\delta$ and $j$ as per \eqref{eq:quantifier_free},
	and thus taking $a_0 := a_j + \alpha_j + \delta$ yields
	$(a_0, a_1, \dots, a_n) \in \sem{\exists y \cdot \varphi}$.
	For the other inclusion,
	let $(a_0, a_1, \dots, a_n) \in \sem{\varphi}$.
	Let $j \neq 0$ be \st $a_j + \alpha_j$ is maximised,
	and define $\delta := a_0 - (a_j + \alpha_j) \mod m$.
	Clearly $\delta \geq 0$ since $a_0$ satisfies all the lower bounds $a_i + \alpha_i$.
	Since $a_0$ satisfies all the upper bounds $a_i + \beta_i$ and $a_j + \alpha_j + \delta \leq a_0$,
	upper bounds are also satisfied.
	Finally, since $a_0 \eqv m a_i + \gamma_i$ and $a_0 \eqv m a_j + \alpha_j + \delta$,
	we have that also the modular constraints $a_j + \alpha_j + \delta \eqv m a_i + \gamma_i$ are satisfied.
	Thus, we have $(a_1, \dots, a_n) \in \sem{\widetilde\varphi}$, as required.

	If all $\alpha_i$'s are equal to $-\infty$, then there are no lower bound constraints and only modulo constraints remain, hence
%
	and a satisfying $y$ 		(if it exists) can be taken in the interval $\set{0, \dots, m - 1}$, yielding
	\begin{align*}
		\bigvee_{\delta \in \set {0, \dots, m - 1}} \bigwedge_i \ \delta \leq x_i + \beta_i \ \wedge \
		\delta \eqv m x_i + \gamma_i.
	\end{align*}
	The same complexity holds.
	The formula above is shown equivalent to \eqref{eq:integer:qe} by a reasoning as in the previous paragraph.
\end{proof}

%

\begin{restatable}{lemma}{qeRat}
	\label{lem:qe-clocks-rat}
	The structure $\AIc$ admits effective elimination of quantifiers.
	The complexity is quadratic for conjunctive formulas.
\end{restatable}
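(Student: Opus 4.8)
The plan is to prove quantifier elimination for $\AIc = (\unitint, \leq, 0)$ by the classical argument for dense linear orders, adapted to the two features of the universe $\unitint = \Q \cap [0,1)$: it has a least element, namely the constant $0$, but no greatest element. As usual it suffices to eliminate one existential quantifier from a conjunctive formula $\exists y \cdot \varphi$ with $\varphi$ a conjunction of literals; since the only atomic formulas are $s \leq t$ and $s = t$ for terms $s,t \in \set 0 \cup \Var$, after pushing negations inward every literal has the form $s \precsim t$ with ${\precsim} \in \set{<, \leq, =, \neq}$. First I would set aside all literals not mentioning $y$ (they are copied verbatim into the output) and, after trivially simplifying literals such as $y \leq y$ or $0 < 0$, split the remaining $y$-literals into: a set $\mathcal L$ of \emph{lower bounds} (terms $\ell$ with $\ell < y$ or $\ell \leq y$ asserted), a set $\mathcal U$ of \emph{upper bounds} (terms $u$ with $y < u$ or $y \leq u$ asserted), a set $\mathcal D$ of forbidden terms (with $y \neq d$ asserted), and possibly an equality $y = t$. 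If an equality is present, substituting $t$ for $y$ throughout eliminates the quantifier outright, so from now on assume there is none; and we may always regard $0$ as a non-strict member of $\mathcal L$, since $0 \leq y$ holds for every $y \in \unitint$.

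Next I would describe, by a quantifier-free condition over the terms, exactly when a witness $y \in \unitint$ exists. Because $\unitint$ is dense, has no greatest element, and every term denotes a value $<1$, the feasible set cut out by $\mathcal L \cup \mathcal U$ is an interval whose left endpoint is the effective greatest element of $\mathcal L$ (closed or open depending on the strictness of the maximising bounds) and whose right endpoint is the effective least element of $\mathcal U$, or the non-attained value $1$ when $\mathcal U = \emptyset$; this interval is empty iff some $\ell \in \mathcal L$ lies (strictly, if either bound is strict) above some $u \in \mathcal U$, it is a single point iff the effective greatest lower bound equals the effective least upper bound with both non-strict, and otherwise it contains infinitely many rationals (in particular whenever $\mathcal U = \emptyset$). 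Accordingly, $\exists y \cdot \varphi$ is equivalent to a disjunction over the choice of a term realising the effective greatest lower bound of $\mathcal L$ and of a term realising the effective least upper bound of $\mathcal U$ (or ``none'' in the latter slot), where each disjunct is a conjunction of $\leq$/$<$/$=$/$\neq$ atoms asserting (i) the chosen lower-bound term dominates all of $\mathcal L$ and the chosen upper-bound term is dominated by all of $\mathcal U$, (ii) the two chosen terms are compatible with the required strictness, and (iii) in the single-point case that that point is not among $\mathcal D$ --- in all other (nonempty) cases density lets one dodge the finitely many values of $\mathcal D$ for free. Every atom produced is over the signature $\set{\leq, 0}$ and the free variables of $\varphi$, so the result is a legitimate quantifier-free $\logicclockrat$ formula; the subcases $\mathcal U = \emptyset$ and ``$0$ forced'' (i.e.\ $y \leq 0$, hence $y = 0$) fall out as instances.

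The one genuinely delicate point is the bookkeeping in the case analysis above: correctly merging strict and non-strict bounds, treating the least element $0$ uniformly with the other lower bounds, and making the degenerate single-point feasible interval interact correctly with the disequalities in $\mathcal D$; this is where strictness/endpoint errors typically slip in, whereas everything else is routine. For the complexity claim, if $\varphi$ has $n$ literals then $\card{\mathcal L}, \card{\mathcal U}, \card{\mathcal D} = O(n)$, the outer disjunction ranges over $O(n^2)$ pairs of candidate greatest-lower/least-upper terms, and each disjunct is a conjunction of $O(n)$ atoms, so eliminating one quantifier from a conjunctive input incurs a quadratic blow-up, as stated (eliminating a fixed conjunctive prefix just composes these blow-ups). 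Combined with Lemma~\ref{lem:qe-clocks-int-rat} for $\AZc$ and with Lemma~\ref{lem:qe-union}, this also yields Corollary~\ref{cor:qe-clocks}.
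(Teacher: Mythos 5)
Your proof is correct and rests on the same core idea as the paper's — Fourier–Motzkin elimination in a dense linear order — but it is genuinely more thorough. The paper's proof considers only conjunctions of \emph{positive} atoms over $\AIc$ (i.e.\ $x_i \leq y$, $y \leq x_j$, and $y = 0$), and eliminates $y$ by the single conjunction $\bigwedge_{i,j} x_i \leq x_j$, which is why it gets a quadratic bound so cheaply; it silently ignores negated atoms, hence strict inequalities and disequalities, which do arise when one puts an arbitrary boolean combination into DNF. You handle exactly those cases: the split into strict/non-strict lower and upper bounds, the forbidden values $\mathcal D$, the least element $0$, the absence of a greatest element, and the degenerate single-point interval. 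That generality is what a full quantifier-elimination claim for $\AIc$ actually requires, so your argument is the more complete one. The one point to fix is your own complexity accounting: the construction you describe — a disjunction over $O(n^2)$ pairs of endpoint-realising terms, each disjunct carrying $O(n)$ atoms — produces a formula of cubic, not quadratic, size. Either accept the cubic bound (which is harmless for the paper's downstream estimates, since Corollary~\ref{cor:qe-clocks} is dominated by the exponential integer side), or observe that nonemptiness of the feasible interval alone needs only the $O(n^2)$ pairwise conjunction $\bigwedge_{\ell,u} \ell \precsim u$ (strict iff either bound is strict) and that the pair disjunction is needed only to handle $\mathcal D$ in the single-point case; restricted to positive atoms, as in the paper, the quadratic bound is exact.
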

%
\begin{proof}
	It suffices to consider a conjunctive formula of the form $\varphi \equiv \exists y \cdot \bigwedge_k \varphi_k$
	where $\varphi_k$ are atomic rational formulas.
	If any $\varphi_k$ is the constraint $y = 0$,
	then we obtain $\widetilde\varphi$ by replacing $y$ with $0$ everywhere.
	Otherwise, $\varphi$ is of the form
	\begin{gather*}
		\exists y \cdot \bigwedge_{i \in I} {x_i} \leq y \wedge \bigwedge_{j \in J} y \leq {x_j},
	\end{gather*}
	and we can eliminate $y$ by writing the equivalent constraint $\widetilde\varphi$
	\begin{gather*}
		\bigwedge_{i \in I} \bigwedge_{j \in J} {x_i} \leq {x_j}.
	\end{gather*}
	The size of $\widetilde\varphi$ is quadratic in the size of $\varphi$.
\end{proof}

\subsection{Characterisation of the reachability relation}

The following characterisation is used in the proof of Lemma~\ref{lemma:A}.

\begin{lemma}
	\label{lem:characterisation}
	%
	The relation $\reach {} {\ell\arr}$ is the least relation satisfying the following rules,
	for valuations $\mu, \nu, \mu', \nu' : \Q^X$
	and words $w, u, v \in \Sigma^*$:
	\begin{align*}
		&\textrm{\em (input)} &&
			\frac{}{\mu \reach a {\ell\arr} \mu}
				 \qquad &&\textrm{ if } \exists \TPDArule \ell a \arr \in \Delta \\[2ex]
		&\textrm{\em (test)} &&
		\frac{}{\mu \reach \varepsilon {\ell\arr} \mu}
			 \qquad &&\textrm{ if } \exists \TPDArule \ell \varphi \arr \in \Delta \st \mu \models \varphi \\[2ex]
		&\textrm{\em (reset)} &&
		\frac{}{\mu \reach \varepsilon {\ell\arr} \mu [Y \mapsto 0]}
			 \qquad &&\textrm{ if } \exists \TPDArule \ell {\resetop Y} \arr \in \Delta \\[2ex]
		&\textrm{\em (elapse)} &&
			\frac{}{\mu \reach \varepsilon {\ell\arr} \nu}
				 \qquad &&\textrm{ if } \exists \TPDArule \ell \elapse \arr \in \Delta, \delta > 0 \st \nu = \fract {\mu + \delta} \\[2ex]
		&\textrm{\em (push-pop)} &&
			\frac{\mu \reach w {\ell'\arr'} \nu}{\mu \reach w {\ell\arr} \nu}
				  \qquad &&\textrm{ if \eqref{eq:characterisation}} \\[2ex]
		&\textrm{\em (transitivity)} &&
			\frac{\mu \reach u {\ell\ell'} \mu' \ \mu' \reach v {\ell'\arr} \nu}{\mu \reach {uv} {\ell\arr} \nu}
	\end{align*}
	\begin{align}
		\label{eq:characterisation}
		\bigvee_{
		\substack{
			\TPDArule \ell {\pushopgen \gamma {\psi_\push}} {\ell'}, \\
			\TPDArule {\arr'} {\popopgen \gamma {\psi_\pop}} \arr \in \Delta
		}}
			\exists \mu_Z \in \Qgeq^Z,
			\exists \delta \in \Qgeq
			\st
			\left\{
			\begin{array}{l}
				(\mu, \mu_Z) \models \psi_\push(\vec x, \vec z) \ \wedge \\
				(\nu, \mu_Z + \delta) \models \psi_\pop(\vec x, \vec z).
		\end{array}\right.
	\end{align}
\end{lemma}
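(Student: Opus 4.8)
The plan is to prove the two inclusions between the semantic relation $\set{\reach{}{\ell\arr}}$ and the least relation $R = \set{R_{\ell\arr}}$ closed under the six rules. For \emph{soundness} ($R_{\ell\arr} \subseteq \reach{}{\ell\arr}$) I would check that $\set{\reach{}{\ell\arr}}$ is itself closed under all six rules; minimality of $R$ then gives the inclusion. All rules but \emph{(push-pop)} are immediate: \emph{(input)}, \emph{(test)}, \emph{(reset)}, \emph{(elapse)} merely unfold the corresponding one-step transition of the \TPDA, and \emph{(transitivity)} is concatenation of runs. The one rule that requires work is \emph{(push-pop)}, and there the key is a \textbf{stack-frame observation}: if $\tuple{\ell',\mu,\varepsilon} \goesto{w} \tuple{\arr',\nu,\varepsilon}$ by a run $\rho$ that begins and ends with empty stack, then the very same sequence of transition rules fires below any fixed timed stack $s$, yielding $\tuple{\ell',\mu,s} \goesto{w} \tuple{\arr',\nu,s+\delta}$, where $\delta \in \Qgeq$ is the total time elapsed along $\rho$ --- because a \TPDA step only inspects and modifies the topmost stack symbol, while a time-elapse step adds the same increment to every clock occurring in $s$. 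Given that \eqref{eq:characterisation} holds, one picks the witnessing $\mu_Z$ and takes $\delta$ to be the elapsed time of $\rho$, sets $s := \tuple{\gamma,\mu_Z}$, prepends the push rule $\TPDArule{\ell}{\pushopgen{\gamma}{\psi_\push}}{\ell'}$ (enabled since $(\mu,\mu_Z)\models\psi_\push$), runs $\rho$ on top of it reaching $\tuple{\arr',\nu,\tuple{\gamma,\mu_Z+\delta}}$, and appends the pop rule $\TPDArule{\arr'}{\popopgen{\gamma}{\psi_\pop}}{\arr}$ (enabled since $(\nu,\mu_Z+\delta)\models\psi_\pop$); this realises $\mu \reach{w}{\ell\arr} \nu$.

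For \emph{completeness} ($\reach{}{\ell\arr} \subseteq R_{\ell\arr}$) I would show, by induction on the number $n$ of one-step transitions in a run witnessing $\mu \reach{w}{\ell\arr} \nu$, that the pair is $R$-derivable. If $n \le 1$ the single transition is an instance of exactly one base rule (an empty run is absorbed by treating composition with the empty run as the identity). If $n \ge 2$, split according to whether the stack is ever non-empty along the run. If it stays empty, then no push or pop occurs, so the first transition is an \emph{input}/\emph{test}/\emph{reset}/\emph{elapse} step, derivable by its rule; the remaining $n-1$ steps form a strictly shorter run with empty stack at both endpoints, $R$-derivable by the induction hypothesis, and \emph{(transitivity)} combines the two. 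Otherwise, let step $i$ be the first push, with rule $\TPDArule{\ell_1}{\pushopgen{\gamma}{\psi_\push}}{\ell'}$; steps $1,\dots,i-1$ run at empty stack (no pop is possible there), and, letting $j > i$ be the first step after which the stack is empty again, step $j$ is a pop rule $\TPDArule{\arr'}{\popopgen{\gamma}{\psi_\pop}}{\arr_1}$ of precisely the symbol pushed at step $i$. Along steps $i+1,\dots,j-1$ that frame $\tuple{\gamma,\cdot}$ stays at the bottom, so the converse of the stack-frame observation extracts a strictly shorter run witnessing $\mu'' \reach{w''}{\ell'\arr'} \nu''$ at empty stack, where $\mu''$ and $\nu''$ are the clock valuations right after step $i$ and right before step $j$; this pair is $R$-derivable by induction, and \emph{(push-pop)} --- with $\mu_Z$ the pushed valuation and $\delta$ the elapsed time of $w''$, so that $(\mu'',\mu_Z)\models\psi_\push$ and $(\nu'',\mu_Z+\delta)\models\psi_\pop$ --- yields $\mu'' \reach{w''}{\ell_1\arr_1} \nu'' \in R$. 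Finally the prefix (steps $1,\dots,i-1$) and the suffix (steps $j+1,\dots,n$) are strictly shorter runs with empty stack at their endpoints, hence $R$-derivable by induction, and two applications of \emph{(transitivity)} glue the three pieces into a derivation of $\mu \reach{w}{\ell\arr} \nu$.

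The main obstacle I expect is the stack-frame observation together with its converse: making fully precise that a \TPDA run never consults the stack below its current top, that a bottom frame evolves by pure time elapse, and hence that the clock valuation of a pushed stack symbol reappears at pop time shifted by exactly the total intervening time. This is what couples the $\delta$ existentially quantified in \eqref{eq:characterisation} with the actual run, and hence what makes \emph{(push-pop)} both sound and complete. Everything else --- peeling off the first transition, the Dyck-style decomposition at the first push and its matching pop, and the degenerate empty-run base case --- is routine bookkeeping.
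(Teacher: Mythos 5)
The paper never proves Lemma~\ref{lem:characterisation}: it is stated bare in the appendix and used as a black box in the proof of Lemma~\ref{lemma:A}, so your argument is supplying a missing proof rather than reproducing one. Your overall strategy is the standard and correct one: soundness by checking that $\reach{}{\ell\arr}$ is itself closed under the six rules and invoking minimality of $R$; completeness by induction on the number of steps with a Dyck-style split at the first push and its matching pop; and both directions resting on the stack-frame observation that an empty-stack-to-empty-stack run replays verbatim on top of any frame, which then ages by exactly the total elapsed time. Your handling of the empty-run degeneracy and of the case where the prefix or suffix around the push--pop pair is empty is also fine. (You silently read the \emph{(elapse)} rule as the one-step semantics $\nu=\mu+\delta$ rather than the literal $\nu=\fract{\mu+\delta}$ printed in the lemma; that is the sensible reading, the printed version being an apparent artifact.)

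The one place where your argument does not quite close is the soundness of \emph{(push-pop)}, exactly at the point you yourself identify as the main obstacle. Condition \eqref{eq:characterisation} supplies a \emph{pair} $(\mu_Z,\delta)$ with $(\mu,\mu_Z)\models\psi_\push$ and $(\nu,\mu_Z+\delta)\models\psi_\pop$; but when you replay the inner run $\rho$ on top of the frame $\tuple{\gamma,\mu_Z}$, that frame ages by the \emph{actual} elapsed time $\delta'$ of $\rho$, and the pop fires only if $(\nu,\mu_Z+\delta')\models\psi_\pop$. The phrase ``one picks the witnessing $\mu_Z$ and takes $\delta$ to be the elapsed time of $\rho$'' conflates two quantities that the formula does not tie together: if the witnessing $\delta$ differs from $\delta'$, you have no guarantee that the pop is enabled. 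The repair is one observation you should make explicit: since $x_0$ is never reset, \emph{every} run witnessing $\mu\reach{w}{\ell'\arr'}\nu$ elapses exactly $\delta'=\nu(x_0)-\mu(x_0)$ time, so the existential $\delta$ in \eqref{eq:characterisation} must be instantiated to this forced value for the rule to be sound (equivalently, the conjunct $\delta=\nu(x_0)-\mu(x_0)$ should be understood as part of \eqref{eq:characterisation}; this is consistent with how the lemma is used in the proof of Lemma~\ref{lemma:A}, where $\delta$ is extracted from a concrete run). With that sentence added, your proof is complete.
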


\subsection{Missing details for {\bf (A)} push-copy}
\label{app:push-copy}

\ignore{
\subsubsection{Formulas.}
\label{app:push-copy:calculation}

By~\eqref{eq:floor:1} we derive:

\begin{align*}
	\floor y - \floor {z_x}
		\ =\ &\floor {z''_y - \delta} - \floor {z_x' - \delta} = \quad \\ 
		\ =\ &\left\{
				\begin{array}{ll}
					\floor{z''_y} - \floor \delta & \textrm{ if } \fract{z''_y} \geq \fract \delta \\
					\floor{z''_y} - \floor \delta - 1 & \textrm{ otherwise}
				\end{array}
			\right. \\
			&\ - \
			\left\{
				\begin{array}{ll}
					\floor{z'_x} - \floor \delta & \textrm{ if } \fract{z'_x} \geq \fract \delta \\
					\floor{z'_x} - \floor \delta - 1 & \textrm{ otherwise}
				\end{array}
			\right. \\
		\ =\ &\floor{z''_y} - \floor{z'_x} + \left\{
				\begin{array}{ll}
					1	& \textrm{ if } \alpha \\
					- 1	& \textrm{ if } \beta \\
					0	& \textrm{ otherwise}
				\end{array}
			\right. \\[1ex]
	\fract y - \fract {z_x}
		\ =\ &\fract {z''_y - \delta} - \fract {z_x' - \delta} = \quad \\ 
		\ =\ &\left\{
				\begin{array}{ll}
					\fract{z''_y} - \fract \delta & \textrm{ if } \fract{z''_y} \geq \fract \delta \\
					\fract{z''_y} - \fract \delta + 1 & \textrm{ otherwise}
				\end{array}
			\right. \\
			&\ - \
			\left\{
				\begin{array}{ll}
					\fract{z'_x} - \fract \delta & \textrm{ if } \fract{z'_x} \geq \fract \delta \\
					\fract{z'_x} - \fract \delta + 1 & \textrm{ otherwise}
				\end{array}
			\right. \\
		\ =\ & \fract{z''_y} - \fract{z'_x} + \left\{
				\begin{array}{ll}
					 - 1	& \textrm{ if } \alpha \\
					1		& \textrm{ if } \beta \\
					0 		& \textrm{ otherwise}
				\end{array}
			\right.
\end{align*}
where
\begin{align*}
	\alpha	&\ \equiv\ \fract{z''_y} \geq \fract \delta \wedge \fract{z'_x} > \fract \delta, \\
	\beta	&\ \equiv\ \fract{z''_y} > \fract \delta \wedge \fract{z'_x} \geq \fract \delta.
\end{align*}
Thus,
\begin{align}
	\label{eq:replacement:1}
	\floor {x_i} - \floor {z_j} \eqv m k \qquad &\textrm{ iff } \qquad
		\left\{\begin{array}{l}
			\alpha \wedge \floor{z''_y} - \floor{z'_x} + 1 \eqv m k\ \vee \\
			\beta \wedge \floor{z''_y} - \floor{z'_x} - 1 \eqv m k\ \vee \\
			\neg \alpha \wedge \neg \beta \wedge \floor{z''_y} - \floor{z'_x} \eqv m k.
		\end{array}\right. \\[1ex]
	\label{eq:replacement:2}
	\fract {x_i} \leq \fract {z_j} \qquad &\textrm{ iff } \qquad
		\left\{\begin{array}{l}
			\fract{\delta} \leq \fract{z''_y} \leq \fract{z'_x}\ \vee \\
			\fract{z''_y} \leq \fract{z'_x} \leq \fract{\delta}\ \vee \\
			\fract{z'_x} \leq \fract{\delta} \leq \fract{z''_y},
		\end{array}\right.
\end{align}
where the second equivalence interestingly corresponds to the cyclic order of fractional parts $(\fract{\delta}, \fract{z''_y}, \fract{z'_x})$.
Notice also that $\floor \delta$ disappeared: This is not a coincidence
and it follows from the fact that all diagonal constraints considered are invariant under the elapse of an \emph{integer} amount of time.
}

\label{app:construction:1}

Let $\Xi$ be the set of all $\xi_{\psi_\push, \psi_\pop}$'s.
Let the original \TPDA be $\P = (\Sigma, \Gamma, L, X, Z, \Delta)$,
let $\Psi_\push$ be the set of all push constraints $\psi_\push$ of $\P$,
and let $\Psi_\pop$ be the set of all pop constraints $\psi_\pop$ of $\P$.
%
%
We construct an equivalent \TPDA $\P' = (\Sigma, \Gamma', L, X, Z', \Delta')$ which only pushes on the stack copies of stack clocks.
Let $\Gamma' = \Gamma \times \Xi$,
$Z' = \setof {z_x} {x \in X}$,
and transitions in $\Delta'$ are determined as follows.

Every input, test, time elapse, and clock reset transitions in $\P$ generate identical transitions in $\P'$.
For every push transition $\TPDArule \ell {\pushopgen {\alpha} {\psi_\push}} \arr$ in $\P$,
we have a push transition in $\P'$ of the form
\begin{align*}
	\TPDArule \ell {\pushopgen{\tuple{\alpha, \xi_{\psi_\push, \psi_\pop}}}{\psicopy \wedge z_0 = 0}} \arr
\end{align*}
($z_0 = 0$ is compatible with push-copy by adding a new clock $x_0$ which is $0$ at the time of push and using $z_0 = x_0$;
we avoid this for simplicity)
for every guessed pop constraint $\psi_\pop \in \Psi_\pop$ of $\P$
and corresponding new pop constraint $\xi_{\psi_\push, \psi_\pop} \in \Xi$ 
and where $\psicopy$ is as in \eqref{eq:copy:push:constraint}.
%
%
Finally, for every pop transition $\TPDArule \ell {\popopgen{\alpha}{\psi_\pop}} \arr$ in $\P$
and for every potential push constraint $\psi_\push \in \Psi_\push$,
we have a pop transition in $\P'$
\begin{align*}
	\TPDArule \ell {\popopgen{\tuple{\alpha, \xi_{\psi_\push, \psi_\pop}}}{\xi_{\psi_\push, \psi_\pop}}} \arr
\end{align*}
which checks that the pop constraint $\psi_\pop$ was indeed correctly guessed.

This translation preserves the reachability relation.
The following appeared as Lemma~\ref{lemma:A} in the main text.
\lemmaA*
\begin{proof}
	We prove
	$$\mu \reach w {\ell\arr} \nu \quad \iff \quad \mu \reachp w {\ell\arr} \nu $$
	by induction on the length of derivations,
	following the characterisation of Lemma~\ref{lem:characterisation}.
	Let $\mu \reach w {\ell\arr} \nu$ (the other direction is proved analogously).
	Since all transitions are the same except push and pop transitions,
	it suffices to prove it for matching pairs of push-pop transitions.
	By~\eqref{eq:characterisation},
	there exist transitions
	$\TPDArule \ell {\pushopgen \gamma {\psi_\push}} {\ell'}, \TPDArule {\arr'} {\popopgen \gamma {\psi_\pop}} \arr \in \Delta$,
	a stack clock valuation $\mu_Z \in \Qgeq^Z$,
	and a time elapse $\delta \in \Qgeq$ \st
	$(\mu, \mu_Z) \models \psi_\push(\vec x, \vec z)$,
	$(\nu, \mu_Z + \delta) \models \psi_\pop(\vec x', \vec z')$,
	and $\mu \reach w {\ell'\arr'} \nu$ in $\P$.
	By inductive hypothesis, $\mu \reach w {\ell'\arr'}' \nu$ in $\P'$.
	By construction, $\P'$ has matching transitions
	$\TPDArule \ell {\pushopgen{\tuple{\gamma, \xi_{\psi_\push, \psi_\pop}}}{\psicopy}} {\ell'}$
	and $\TPDArule {\arr'} {\popopgen{\tuple{\gamma, \xi_{\psi_\push, \psi_\pop}}}{\xi_{\psi_\push, \psi_\pop}}} \arr$.
	Clearly, $(\mu, \mu) \models \psicopy(\vec x, \vec z_{\vec x})$,
	where $z_x$ is the stack clock copying the value of clock $x$ at the time of push.
	Since stack clock $z_0$ was initially $0$,
	we have that its value at the end is exactly $\delta$.
	We show that
	$$(\vec x' : \nu, \vec z'_{\vec x} : \mu + \delta) \models \xi_{\psi_\push, \psi_\pop}(\vec x', \vec z'_{\vec x}),$$
	thus showing $\mu \reach w {\ell'\arr'}' \nu$ in $\P'$ by \eqref{eq:characterisation}.
	By its definition,
	$\xi_{\psi_\push, \psi_\pop}(\vec x', \vec z'_{\vec x})$
	is equivalent to $\psi_\pop'(\vec x', \vec z'_{\vec x})$ from \eqref{eq:psi:pop'}.
	Take $\mu_Z + \delta$ as the valuation for $\vec z'$,
	and we have
	$$(\vec x': \nu, \vec z' : \mu_Z + \delta, \vec z'_{\vec x} : \mu + \delta ) \models
		\psi_\push(\vec z'_{\vec x} - \vec \delta, \vec z' - \vec \delta) \wedge \psi_\pop(\vec x', \vec z')$$
	because $(\vec x': \nu, \vec z' : \mu_Z + \delta) \models \psi_\pop(\vec x', \vec z')$
	and $(\vec z_{\vec x} : \mu, \vec z : \mu_Z) \models \psi_\push(z_{\vec x}, \vec z)$.
\end{proof}

	%

\subsection{\bf (B) The \TPDA is pop-integer-free}
\label{app:pop:integer:free}

The aim of this section is to remove integer constraints from pop transitions.
Thanks to ({\bf A}), we assume that the \TPDA is push-copy.
Since diagonal integer constraints can simulate non-diagonal ones,
we can further assume that pop transitions do not contain non-diagonal integer constraints
(i.e., of the form $\floor z \leq k$),
and thus we only need to eliminate the diagonal ones.

Let $\P$ be a push-copy \TPDA. 
By Remark~\ref{rem:sugar}, we replace integer pop constraints of the form
${\floor x - \floor {z_y} \leq k}$, ${\floor {z_x} - \floor {z_y} \leq k}$
by \emph{classical} ${x - z_y \leq k}$, resp., $z_x - z_y \leq k$,
and fractional constraints.
This has the advantage that classical diagonal constraints are invariant under time elapse,
which will simplify the construction below.
Pop constraints of the form $z_y - z_x \sim k$ can easily be eliminated
since, thanks to push-copy, they can be checked at the time of push as the \emph{transition} constraint $y - x \sim k$.
Thus, we concentrate on pop constraints 
\begin{align}
	\label{eq:mixed:pop:constraint}
	\psi_\pop \ \equiv\
		\psi^{\textrm c}_1 \wedge \cdots \wedge \psi^{\textrm c}_m \wedge
		\psi^{\textrm {nc}}
\end{align}
where the $\psi^{\textrm c}_i$'s are classical diagonal constraints of the form ${y - z_x \sim k}$, with ${\sim \;\in\!\set {<, \leq, \geq, >}}$,
and $\psi^{\textrm {nc}}$ contains only non-classical (i.e., modular and fractional) constraints.
Let be $\C$ the set of all $\psi^{\textrm c}_i$'s.
Constraints $y - z_x \sim k$ are eliminated by introducing linearly many new global clocks (one for each atomic clock constraint)
satisfying suitable conditions at the time of push.
Thus, in the new automaton pop constraints are only of the form $\psi^{\textrm {nc}}$,
i.e., modulo and fractional, as required.
The construction is similar to \cite{ClementeLasota:LICS:2015}. 
Control states of the new automaton $\P'$ are of the form
$\tuple{\ell, T, \Phi^-, \Phi^+}$, where $T$ is a set of clocks
and $\Phi^-, \Phi^+$ are sets of atomic constraints.
Thus, from a complexity standpoint, the number of control locations of $\P'$ is exponential in the number of clocks and constraints,
and the size of the stack alphabet is exponential in the number of constraints.
%

\begin{restatable}{lemma}{lemB}
	\label{lem:B}
	Let the reachability relation of $\P'$ be expressed by the formula $\varphi_{\ell'\arr'}$.
	The reachability relation of $\P$ is expressed by
	%
		$
			\bigvee \setof
				{\varphi_{\tuple{\ell, T, \emptyset, \emptyset}\tuple{\arr,\emptyset, \Phi^-, \Phi^+}}}
				{T \subseteq X, \Phi^-, \Phi^+ \subseteq \C}$.
\end{restatable}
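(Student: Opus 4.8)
The plan is to prove that $\P'$ faithfully simulates $\P$: for all control locations $\ell,\arr$ of $\P$, all valuations $\mu,\nu\in\Qgeq^X$ and all words $w\in\Sigma^*$, the relation $\mu \reach w {\ell\arr} \nu$ holds in $\P$ if and only if there are a set of clocks $T$ of $\P'$, sets of atomic constraints $\Phi^-,\Phi^+\subseteq\C$ and extensions $\hat\mu,\hat\nu$ of $\mu,\nu$ to the clock set of $\P'$ such that $\hat\mu \reach w {\tuple{\ell,T,\emptyset,\emptyset}\tuple{\arr,\emptyset,\Phi^-,\Phi^+}} \hat\nu$ in $\P'$. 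Granting this equivalence, the lemma is immediate: since the stack is empty at both endpoints, no new clock there carries a pending obligation, so the initial and final values of the new clocks are immaterial and may be existentially quantified away from $\varphi_{\ell'\arr'}$, and the finite disjunction over $T,\Phi^-,\Phi^+$ exactly realises the existential quantification over these components, yielding the formula in the statement.

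The substantial part is to formulate the correct invariant carried by the control components $T,\Phi^-,\Phi^+$ and by the auxiliary data recorded on the stack, and to verify that it is maintained by every rule of $\P'$. Informally, $T$ records which new clocks are currently \emph{active}, i.e.\ still tracking, on behalf of some not-yet-matched push, the value a relevant global clock had at that push; $\Phi^-$ and $\Phi^+$ record the bookkeeping about which atomic pop constraints (of the two polarities $\sim\,\in\{<,\le\}$ and $\sim\,\in\{\ge,>\}$) have already been settled at the current point; and when a push occurs its stack symbol also stores the guessed matching pop constraint $\psi_\pop$, the partition of its classical-diagonal part, and the values of the new clocks so that they can be restored at the matching pop (this is where stack clocks being copies of global clocks is exploited). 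The guiding observation is that a classical diagonal pop constraint $y - z_x \sim k$ is invariant under time elapse, so its truth at pop time depends only on the reset histories of $x$ before the push and of $y$ before the pop: when $y$ is not reset between the matching push and pop it is equivalent to the transition constraint $y - x \sim k$, checkable at push time; otherwise it is equivalent to a comparison, performable as a transition constraint at pop time, against the new clock that has been carrying the value of $x$ since the push. In this way, at the pop the automaton tests only the non-classical (modular and fractional) part $\psi^{\mathrm{nc}}$ of~\eqref{eq:mixed:pop:constraint}, so that $\P'$ is indeed pop-integer-free.

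Granting the invariant, I would prove the equivalence in both directions by induction on the derivation of the reachability relation following the characterisation of Lemma~\ref{lem:characterisation}, exactly as in the proof of Lemma~\ref{lemma:A}. The input, test, reset and time-elapse rules are routine: the matching transitions of $\P'$ are obtained from those of $\P$ by updating $T,\Phi^-,\Phi^+$ and letting the new clocks evolve or be reset in accordance with the invariant; the transitivity rule follows by composing the two runs and merging the guessed components. The push-pop rule is the core: given matching transitions $\TPDArule \ell {\pushopgen \gamma {\psi_\push}} {\ell'}$ and $\TPDArule {\arr'} {\popopgen \gamma {\psi_\pop}} \arr$ of $\P$ with $\psi_\pop$ as in~\eqref{eq:mixed:pop:constraint} and a run witnessing $\mu \reach w {\ell'\arr'} \nu$, I would exhibit the matching push/pop transitions of $\P'$, choose the new-clock values and the components $T,\Phi^-,\Phi^+$ so that $\psi^{\mathrm c}_1\wedge\cdots\wedge\psi^{\mathrm c}_m$ is captured exactly by the bookkeeping while $\psi^{\mathrm{nc}}$ is tested unchanged, and apply the inductive hypothesis to the inner run; conversely, a run of $\P'$ through matching push/pop transitions yields, by the invariant, a stack valuation $\mu_Z\in\Qgeq^Z$ and a delay $\delta$ witnessing~\eqref{eq:characterisation} for $\P$.

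The hard part is that only boundedly many new global clocks are available whereas pushes nest to unbounded depth, so the per-level information cannot reside entirely in the finite control: what must be pushed on the stack (whence the exponential blow-up of the stack alphabet in the number of constraints), what must sit in $T,\Phi^-,\Phi^+$ (whence the exponential blow-up in the number of clocks and constraints), and how a reset of a relevant global clock occurring strictly between a push and its matching pop is handled without destroying obligations of enclosing push-pop pairs, all have to be orchestrated carefully. This is precisely where the construction follows the pattern of~\cite{ClementeLasota:LICS:2015}; once the invariant is stated precisely I expect the push-pop case of the induction, and hence the lemma, to go through.
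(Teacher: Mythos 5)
Your overall strategy coincides with the paper's: construct $\P'$ so that it simulates $\P$ step by step, prove the equivalence of the two reachability relations by induction on derivations following the characterisation of Lemma~\ref{lem:characterisation} (exactly as in the proof of Lemma~\ref{lemma:A}), and obtain the stated formula by disjoining over the guessed components $T,\Phi^-,\Phi^+$. Your key observation --- that a classical diagonal pop constraint $y - z_x \sim k$ is invariant under time elapse and therefore reduces either to a transition constraint $y - x \sim k$ checkable at push time or to a pop-time \emph{transition} constraint against an auxiliary global clock carrying the relevant value --- is precisely the observation the construction rests on, and the paper itself ultimately defers the detailed correctness of the push--pop case to the argument of \cite{ClementeLasota:LICS:2015}, so your final deferral is not by itself disqualifying.

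The gap is that you stop exactly where the content of the lemma lies: you name the difficulty (boundedly many auxiliary clocks versus unboundedly nested pushes) but supply neither the mechanism that resolves it nor an invariant that could be verified, and the informal bookkeeping you do sketch diverges from one that works. Concretely: (i) in the intended construction $T \subseteq X$ is a set of \emph{original} clocks that are forbidden to be reset until the next push, and a push is enabled only when $T = X$; this certifies that every clock has already seen its last reset before that push, so the auxiliary clocks (reset in lockstep with those guessed last resets) hold the value of $x$ at push time with no per-level storage --- your reading of $T$ as the set of currently active auxiliary clocks provides no such certification. (ii) Pushing the auxiliary clocks' values onto the stack ``so that they can be restored at the matching pop'' cannot be the mechanism: a pop-time comparison against a stack clock is a stack constraint whose integer part is exactly what step ({\bf B}) must eliminate, and \TPDA have no restore operation; the auxiliary values must stay in global clocks precisely so that the pop-time check $y - x_{\varphi} \sim k$ is a transition constraint. (iii) The reason finitely many auxiliary clocks suffice under unbounded nesting is a monotonicity argument that treats the two polarities of \eqref{eq:mixed:pop:constraint} asymmetrically --- for lower bounds it suffices to track the earliest relevant reset, for upper bounds the latest, which is why $x_\varphi$ is reset once upon activation for $\varphi \in \Phi^-$ but at every reset of $x$ for active $\varphi \in \Phi^+$. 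This asymmetry is the heart of the construction and is absent from your sketch; without it the push--pop case of your induction cannot be carried out.
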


\ignore{
At the time of pop the clock $z_x$ has precisely the value that clock $x$ would have
if it were not reset since the corresponding push.
If the latter condition were satisfied,
then we could verify a classical pop constraint $y - z_x \sim k$
just by checking $y - x \sim k$ instead.
However, since clock $x$ in general can be reset between push and pop operations,
one needs to keep a copy of this clock in order to verify the pop constraints.
Since the stack is unbounded,
this seemingly requires an unbounded number of copies for (each) clock $x$.

This turns out not to be the case in order to check classical pop constraints $y - z_x \sim k$:
Thanks to the monotonicity of time and the LIFO stack discipline,
it suffices to keep finitely many copies of each clock $x$ in order to correctly verify such constraints.
The crucial intuition is that a \emph{lower bound} pop constraint of the form $y - z_x \geq k$ is easier to satisfy if $x$ is reset,
and thus for $x$ it suffices to remember the \emph{earliest} reset between push and pop.
Symmetrically, an \emph{upper bound} pop constraint of the form $y - z_x \leq k$ is more difficult to satisfy if $x$ is reset,
and thus for $x$ it suffices to remember the \emph{latest} reset between push and pop.
}

\ignore{
\begin{remark}[Invariance under time elapse]
	\label{rem:invariance}
	While classical diagonal constraints $x - y \leq k$ are invariant \wrt the elapse of time,
	this is the case neither for integer $\floor x - \floor y \leq k$ nor for fractional constraints $\fract x \leq \fract y$.
	In Sec.~\ref{sec:fractional}, it will be useful to separate a constraint into its invariant and non-invariant part,
	which is achieved by replacing an integer constraint $\floor x - \floor y \leq k$
	by the equivalent $x - y \leq k \;\vee\; (x - y \leq k + 1 \wedge \fract x > \fract y)$.
%
%
\end{remark}
}

\begin{proof}
	Let $\P$ be a push-copy \TPDA $(\Sigma, \Gamma, L, X, Z, \Delta)$.
	Let $\C^-/\C^+$ be the set of all lower/upper bound classical pop constraints of the form
	$y - z_x \geq k, y - z_x > k$, or, resp., $y - z_x \leq k, y - z_x < k$,
	and let $\C = \C^- \cup \C^+$.
	We construct a \TPDA $\P' = (\Sigma, \Gamma', L', X', Z, \Delta')$
	with the same set of stack clocks as $\P$,
	and with global clocks being those of $\P$,
	plus a copy of each global clock for each lower/upper bound constraint:
	$X' := X \cup \setof {x_\psi} {\psi \in \C}$.
	A control location of $\P'$ is of the form
	$(\ell, T, \Phi^-, \Phi^+) \in L'$, where
	\begin{itemize}
		\item $\ell$ is a control location of $\P$,
		\item $T \subseteq X$ is a set of clocks of $\P$ which cannot be reset till the next push
		(this is used to guess and check last resets before a push), and
	    \item $\Phi^- \!\subseteq\! \C^-, \Phi^+ \!\subseteq\! \C^+$ are the currently \emph{active} lower/upper bound constraints.
		%
	\end{itemize}
	The new stack alphabet $\Gamma'$ consists of tuples of the form $\tuple{\alpha, \Phi^-, \Phi^+}$
	with $\alpha \in \Gamma$ a stack symbol of $\P$ and $\Phi^-, \Phi^+$ as above.

	Let $\TPDArule \ell \op \arr$ be a transition in $\P$.
	If it is either an input $\op = a \in \Sigma_\varepsilon$, test $\op = \varphi$, or time elapse $\op = \elapse$ transition,
	then it generates corresponding transitions in $\P$ of the form
	${\TPDArule {(\ell, T, \Phi^-, \Phi^+)} \op {(\arr, T, \Phi^-, \Phi^+)}}$
	for every choice of $T, \Phi^-, \Phi^+$.
	A reset transition $\op = \resetop Y$ generates several reset transitions of the form
	$${\TPDArule {(\ell, T, \Phi^-, \Phi^+)}  {\resetop {Y \cup Y'}} {(\arr, T \cup U, \Phi^- \cup \Psi^-, \Phi^+ \cup \Psi^+)}}$$
	whenever
	\begin{enumerate}
		\item $Y \cap T = \emptyset$ (no forbidden clock is reset),
		\item $U \subseteq Y$ is a subset of reset clocks
		which are guessed to be reset for the last time till the next push,
		\item 
		$\Psi^- \subseteq \bigcup_{x \in U} \C^-_x \setminus \Phi^-$
		is a new set of lower bound constraints involving newly reset clocks in $U$, similarly
		\item 
		$\Psi^+ \subseteq \bigcup_{x \in U} \C^+_x \setminus \Phi^+$
		likewise for the upper bound constraints, and finally
		\item $Y' \subseteq \setof {x_\psi} {\psi \in \C}$
		contains all clocks relating to \emph{new} active lower bound constraints,
		and all clocks relating to (new or not) active upper bound constraints \wrt clocks $Y$ reset in this transition:
		\begin{align*}
		  Y'&= \setof {x_\varphi} { \varphi \in \Psi^- \textrm{ or } \varphi \in \Phi^+_Y \cup \Psi^+}, \textrm{ where } \\
		  \Phi^+_Y&= \setof {(y - z_x \lesssim k) \in \Phi^+ } { x \in Y }.
		\end{align*}
	\end{enumerate}
	A push transition $\op = \pushopgen {\alpha} {\psicopy}$
	(where $\psicopy$ is defined in \eqref{eq:copy:push:constraint}),
	generates a transition in $\P'$ of the form
	\begin{align*}
	 \TPDArule
		{(\ell, T, \Phi^-, \Phi^+)}
		{\pushopgen{\tuple{\alpha, \Phi^-, \Phi^+}}{\psicopy}}
	 	{(\arr, T', \Phi^-, \Phi^+)}
	\end{align*}
	only if $T = X$, i.e., all clocks were correctly guessed to be reset for the last time till this push,
	and for every set of clocks $T' \subseteq X$ which are guessed not to be reset till the \emph{next} push.
	Moreover, we push on the stack the current set of guessed constraints $\Phi^-, \Phi^+$.
	Finally, a pop transition $\op = \popop{\alpha \models \psi_\pop}$ of $\P$
	with $\psi_\pop$ as in \eqref{eq:mixed:pop:constraint},
	generates in $\P'$ a test followed by a pop transition of the form (omitting the intermediate state)
	\begin{align*}
		\TPDArule
			{(\ell, T, \Phi^-, \Phi^+)}
			{\ \widetilde \psi; \ \popopgen{\tuple{\alpha, \hat \Phi^-, \hat \Phi^+}}{\psi^{\textrm {nc}}}}
			{\ (\arr, T, \hat \Phi^-, \hat \Phi^+)}
	\end{align*}
	for every $T \subseteq X$, $\hat \Phi^- \subseteq \C^-$, $\hat \Phi^+ \subseteq \C^+$, whenever
	$\Phi^- \cup \Phi^+ = \set {\psi^{\textrm c}_1 \wedge \cdots \wedge \psi^{\textrm c}_m}$,
	i.e., the guess of upper and lower bounds was indeed correct,
	and where $\widetilde \psi$ is defined as
	$\widetilde \psi \ \equiv\ \bigwedge \setof
			{y - x_{\psi^{\textrm c}_i} \sim k}
			{\psi^{\textrm c}_i \in \Phi^- \cup \Phi^+, \psi^{\textrm c}_i \equiv y - z_x \sim k }.$
	We have removed pop integer constraints $\psi^{\textrm c}_i$'s by introducing classical constraints in $\widetilde \psi$,
	and the latter can be converted into integer and fractional constraints according to Remark~\ref{rem:sugar}.
	Notice that the stack non-classical constraint $\psi^{\textrm {nc}}$ is preserved from $\P$ to $\P'$.
	Thus, we obtain a pop-integer-free \TPDA, as required.


	%
	%
	%

	The number of control locations of $\P'$ is
	$\card {L'} = \card L \cdot 2^{\card X} \cdot 2^{2 \cdot \card {\mathcal C}}$,
	the number of stack symbols of $\P'$ is
	$\card {\Gamma'} = \card \Gamma \cdot 2^{2 \cdot \card {\mathcal C}}$,
	and the number of clocks of $\P'$ is
	$\card {X'} = \card X + \card {\mathcal C}$.
	Thus, $\P'$ has number of control locations and stack symbols exponential in the size of $\P$,
	and number of clocks linear in the size of $\P$.

	The construction can be proved correct by the same argument for stack classical constraints as in \cite{ClementeLasota:TPDA:Arxiv:2015},
	except that now non-classical stack constraints (not considered in \cite{ClementeLasota:TPDA:Arxiv:2015})
	are kept unchanged.
\end{proof}

\subsection{Missing details for {\bf (C)} fractional}
\label{app:fractional}

Recall the structure of fractional values $\AIc = (\unitint, \leq, 0)$.
%
An \emph{automorphism} of $\AIc$ is a bijection $\alpha$ \st $\alpha(0) = 0$
and $a \leq b$ iff $\alpha(a) \leq \alpha(b)$;
in other words, $0$ is fixed,
but otherwise distances can be stretched or compressed monotonically.
The set $\unitint^X$ of (fractional parts of) clock valuations splits
into finitely many \emph{orbits}, where $u, v\in\unitint^n$ are in the same orbit if some
automorphism of $\AIc$ maps $u$ to $v$.
Note that an orbit $o$ is determined by the order of elements, their equality type, and their equalities with $0$;
hence the number of orbits is exponential in $\card X$.
For an orbit $o$, let its \emph{characteristic formula} be the following quantifier-free $(\unitint, \leq, 0)$ formula
\begin{align} \label{eq:orbit-form}
	\varphi_o(\vec x) \equiv \bigwedge_{\tilde a_i = 0} x_i = 0 \wedge \bigwedge_{\tilde a_i \leq \tilde a_j} x_i \leq x_j,
\end{align}
where $(\tilde a_1, \dots, \tilde a_n)$ is any fixed element of $o$
(by the definition of orbit, $\varphi_o$ does not depend on the choice of representative).


Let $\P = (\Sigma, \Gamma, L, X, Z, \Delta)$ be a push-copy and pop-integer-free \TPDA.
%
We build a fractional \TPDA $\P' = (\Sigma', \Gamma', L', X, Z, \Delta')$
where $\Sigma'$ equals $\Sigma$
extended with an extra symbol $\tick x \not\in\Sigma$ for every clock $x$ of $\P$,
$\Gamma' = \Gamma \times \Lambda_M$ extends $\Gamma$
by recording the $M$-unary equivalence class of clocks which are pushed on the stack,
and $L' = L \times \Lambda_M \times 2^X \cup L_\bullet$,
where 
$Y_1 \in 2^X$ is the set of clocks which are \emph{not} allowed to be reset any more in the future,
and $L_\bullet$ contains some extra control locations used in the simulation.
Every transition $\TPDArule \ell \op \arr \in \Delta$ generates one or more transitions in $\Delta'$
according to $\op$.
If $\op = a \in \Sigma_\varepsilon$ is an input transition,
then $\Delta'$ contains a corresponding input transition
$\TPDArule {\tuple {\ell, \lambda, Y_1}} a {\tuple{\arr, \lambda, Y_1}}$,
for every choice of $\lambda, Y_1$.
If $\op = \varphi$ is a test transition,
then $\Delta'$ contains a corresponding test transition
$$\TPDArule {\tuple {\ell, \lambda, Y_1}} {\restrict \varphi \lambda} {\tuple{\arr, \lambda, Y_1}},$$
where $\restrict \varphi \lambda$ contains only fractional constraints.
If $\op = \resetop Y$ is a reset transition,
then $\Delta'$ contains a reset transition
$$\TPDArule {\tuple {\ell, \lambda, Y_1}} {\resetop Y} {\tuple{\arr, \lambda[Y \mapsto 0], Y_1 \cup Y_2}}$$
provided that $Y \subseteq X \setminus Y_1$ (no forbidden clocks are reset),
and where $Y_2 \subseteq Y$ are declared to be reset now for the last time.
If $\op = \elapse$ is a time elapse transition,
then we have the following 4 groups of transitions:
\begin{enumerate}
\item
	First, we silently go to control location $\tuple {\ell, \lambda, Y_1, 1}$ to start the simulation:
	\begin{align*}
	\langle \tuple {\ell, \lambda, Y_1},
		\varepsilon, \tuple {\ell, \lambda, Y_1, 1}\rangle.
	\end{align*}

\item
	We test that the current orbit of fractional values is $o$,
	we let time elapse,
	and then we test that the new orbit is $o'$.
	We can reconstruct the set of clocks $Y_{o, o'}$
	which have just overflown and for which we need to update their unary abstraction
	as $Y_{o, o'} = \setof {x \in X} {o(x) > 0 \textrm { and } o'(x) = 0}$.
	This yields the following sequence of transitions,
	where we omit the intermediate states for conciseness:
	\begin{align*}
		\langle \tuple {\ell, \lambda, Y_1, 1},
			(\varphi_o; \elapse; \varphi_{o'}),
		\tuple {\ell, \lambda, Y_1, Y_{o, o'}, 2}\rangle.
	\end{align*}

\item
	For each clock that needs to be updated in $Y_{o, o'}$, we increment its unary abstraction one by one,
	and we optionally emit a tick if this clock was guessed not to be reset anymore in the future:
	\begin{align*}
		\TPDArule {\tuple {\ell, \lambda, Y_1, Y_2, 2}} {\tick x^?} {\tuple {\ell, \lambda[x \mapsto x + 1], Y_1, Y_2 \setminus \set x, 2}},
	\end{align*}
	where $\tick x^?$ equals $\tick x$ if $x \in Y_2 \cap Y_1$,
	and $\varepsilon$ if $x \in Y_2 \setminus Y_1$.

\item
	When the unary class of all overflown clocks has been updated,
	we either return to the beginning of the simulation (in order to simulate longer elapses of time),
	or we quit:
	\begin{align*}
		\TPDArule {\tuple {\ell, \lambda, Y_1, \emptyset, 2}} \varepsilon {\tuple {\ell, \lambda, Y_1, 1}},
		\quad \TPDArule {\tuple {\ell, \lambda, Y_1, \emptyset, 2}} \varepsilon {\tuple {\arr, \lambda, Y_1}}.
	\end{align*}
\end{enumerate}
%
If $\op = \pushopgen{\alpha}{\psicopy}$ is a push-copy transition,
then $\Delta'$ contains a push transition copying only the fractional parts and the unary class of global clocks:
$$
	\TPDArule
	{\tuple {\ell, \lambda, Y_1}}
	{\pushopgen{\tuple {\alpha, \lambda}} {\bigwedge_{x \in X} \fract {z_0} = 0 \wedge \fract {z_x} = \fract x}}
	{\tuple{\arr, \lambda, Y_1}}.
$$
If $\op = \popopgen{\alpha}{\psi}$ is a pop-integer-free transition,
then $\Delta'$ contains a fractional pop transition of the form
$$
	\TPDArule
	{\tuple {\ell, \lambda_\pop, Y_1}}
	{\popopgen{\tuple {\alpha, \lambda_\push}}{\restrict \psi {\lambda_\push,\lambda_\pop}}}
	{\tuple{\arr, \lambda_\pop, Y_1}}.
$$
We eliminated all occurrences of $\floor x$ both from transition and push/pop stack constraints.
Thus, transition and stack constraints of $\P'$ are only fractional.

\subparagraph{Reconstruction of the reachability relation.}

We reconstruct the reachability relation of $\P$ from that of $\P'$ as follows.
%
%
The reachability relation $\reach {} {\ell\arr}$ of $\P$ is expressed as the $\L_{\Z,\Q}$ formula%

\begin{align*}
	\varphi_{\ell \arr}(\floor {\vec x}, \fract {\vec x}, \vec f, \floor {\vec x'}, \fract {\vec x'}) &\ \equiv \
	\bigvee_{\lambda, Y, \mu} \exists \vec g \cdot \varphi_\lambda(\floor {\vec x}) \wedge \varphi_{\textrm{step}} \wedge \varphi_{\textrm{end}}, \textrm { where } \\[1ex]
	\varphi_{\textrm{step}} &\ \equiv\ \psi_{\tuple {\ell, \lambda, Y}\tuple {\arr, \mu, X}}(\fract {\vec x}, (\vec f, \vec g), \fract {\vec x'}) \\
	\varphi_{\textrm{end}} &\ \equiv\	\bigwedge_{x \in Y} 	\floor {x'} =  \floor x + g_x \wedge
										\bigwedge_{x \not\in Y} \floor {x'} = g_x .
\end{align*}
\begin{itemize}
	\item The formula $\varphi_\lambda$ ensures that the initial integer value of clocks
	has the same unary class as prescribed by $\lambda$.
	\item The formula $\varphi_{\textrm{step}}$ invokes the fractional reachability relation of $\P'$
	where $g_x$ counts the number of marks $\tick x$ since clock $x$ was last reset.
	\item The formula $\varphi_{\textrm{end}}$ uniquely determines the final integer values $\floor {x'}$ of all clocks of $\P$:
	For those clocks $x \not\in Y$ which are ever reset during the run,
	the final value of its integer part $\floor {x'}$ equals the integer time $g_x$ that elapsed since the last reset;
	for those clocks $x \in Y$ which are not reset during the run,
	$\floor {x'}$ equals their initial value plus the time elapsed since the beginning.
\end{itemize}
We can eliminate the existential quantification on $\vec g$ from the formula above
by noticing that $\varphi_{\textrm{end}}$ uniquely determines $\vec g$
as a function of $\floor {\vec x}, \floor {\vec x'}$ and $Y$,
thus obtaining the equivalent $\L_{\Z, \Q}$ formula in the following lemma.
The following appeared as Lemma~\ref{lem:C} in the main text.
\lemC*
\noindent
In the statement above, $\vec g^Y$ is defined as follows:
\begin{gather*}
	g^Y_x \equiv	\left\{\begin{array}{ll}
			\floor {x'} - \floor x	&\textrm{ if } x \in Y \\
			\floor {x'}				&\textrm{ otherwise}.
		\end{array}\right.
\end{gather*}
\ignore{
	\begin{proof}
		In terms of the reachability relation, we show the following stronger statement,
		for every control locations $\ell, \arr \in Q$:
		For every clock valuations $\mu, \nu \in \Qgeq^X$ and input word $w \in (\Sigma')^* = (\Sigma \cup \setof {\tick x} {x \in X})^*$
		we have
		\begin{gather*}
			\mu \reach {\project \Sigma w} {\ell\arr} \nu \quad \iff \quad
			\exists Y \subseteq X \st
			\left\{
			\begin{array}{l}
				\fract \mu \reach w {\tuple {\ell, \lambda(\mu), Y}\tuple {\arr, \lambda(\nu), X}} \fract \nu \quad
				\textrm{ and } \\[1ex]
				\floor{\nu(x)} = \left\{
				\begin{array}{ll}
					\floor {\mu(x)} + \PI w (\tick x) 	& \textrm { if } x \in Y \\
					\PI w (\tick x)						& \textrm { otherwise.}
				\end{array}
				\right.
			\end{array}
			\right.
		\end{gather*}
		where $\project \Sigma \cdot : (\Sigma')^* \to \Sigma^*$ projects away the ticks $\tick x$'s,
		$\lambda(\mu)$ is the unary class of valuation $\mu$ (and similarly for $\lambda(\nu)$),
		and $\PI w (\tick x)$ is the number of ticks $\tick x$ in $w$.
	\end{proof}
}

\subsection{Missing proofs from Sec.~\ref{sec:frac2reg}}

The following appeared as Lemma~\ref{lem:cfg} in the main text.
\lemmaCfg*
\begin{proof}
	This is a special case of the following general fact:
	An equivariant orbit-finite \PDA over homogeneous atoms can be transformed into an equivariant
	orbit-finite context-free grammar (see \cite{atombook,ClementeLasota:CSL:2015}).
	For concreteness, we provide the productions of the grammar.
	For $o\in\orbits{}{\I^{X \times X'}}$ we write $o_1$ (resp.~$o_2$) for the projections of $o$ on the first (resp.~last) $k$ coordinates.
	For every input transition $\TPDArule \ell a \arr$ and $o$ \st $o_1 = o_2$
	we have in the grammar a production
	\begin{align*}
		\textrm{(input)} & \qquad X_{\ell\arr o} \from a.
	\intertext{
	For every global transition rule $\TPDArule \ell \varphi \arr$ and $o$
	\st $o \models \varphi$ we have a production
	}
		\textrm{(global)} & \qquad X_{\ell\arr o} \from \varepsilon.
	\intertext{
	For an orbit $o\in\orbits{}{\I^{X_1 \times X_2 \times X_3}}$ and $i, j \in \set {1, 2, 3}$,
	denote by $o_{ij} \in\orbits{}{\I^{X_i \times X_j}}$ the projection of $o$ to ($k$-ary) components $i,j$.
	For every orbit $o\in\orbits{}{\I^{X \times X' \times X''}}$ we have a production
	}
		\textrm{(transitivity)} & \qquad X_{\ell\arr o_{13}} \from X_{\ell\ell'o_{12}} \cdot X_{\ell'\arr o_{23}}.
	\intertext{Finally, for every pair of transitions
	$\TPDArule \ell {\pushop{\gamma \models \varphi}} {\ell'}, \TPDArule {\arr'} {\popop{\gamma \models \psi}} \arr \in \Delta$
	and orbit $o\in\orbits{}{\I^{X \times X' \times Z}}$
	\st $o_{13} \models \varphi$ and $o_{23} \models \psi$, we have a production}
		\textrm{(push-pop)} & \qquad X_{\ell\arr o_{12}} \from X_{\ell'\arr'o_{12}}.\qedhere
	\end{align*}
\end{proof}

\subsubsection{Correctness of the construction}
\label{app:frac2reg:corr}

We argue that $\QQ$ and $\P$ faithfully simulate each other by providing a variant of strong bisimulation between their configurations.
A configuration $\tuple{\ell, \mu, u}$ of $\P$ \emph{is consistent with} a configuration
$\tuple{\arr, \nu, v}$ of $\QQ$, if
\begin{itemize}
	\item they have the same control locations $\ell = \arr$,
	\item every global clock $x$ and the corresponding register $\hat x$ satisfy $\fract {\mu(x)} = \nu(\hat x_0) \cycminus \nu(\hat x)$,
	\item $u = (\gamma_1, \mu_1) \cdots (\gamma_n, \mu_n)$, $v = (\gamma_1, \nu_1) \cdots (\gamma_n, \nu_n)$
	and, for every $1 \leq i \leq n$, stack clock $z$ and corresponding register $\hat z$,
	we have $\fract {\mu_i(z)} = \nu(\hat x_0) \cycminus \nu_i(\hat z)$.
\end{itemize}
The consistency is not one-to-one, for two reasons: on the side of $\P$ the integer parts of clocks are irrelevant and hence
can be arbitrary; and on the side of $\QQ$ the configuration is unique only up to cyclic shift.

A configuration $\tuple{\arr, \nu, v}$ (of $\P$ or $\QQ$) is an \emph{$a$-successor} of $\tuple{\ell, \mu, u}$
if $\tuple{\ell, \mu, u} \goesto a \tuple{\arr, \nu, v}$ (in $\P$ or $\QQ$, resp.);
in $\P$, additionally, if $a\in\Qgeq$, then we call $\tuple{\arr, \nu, v}$ an $\varepsilon$-successor of $\tuple{\ell, \mu, u}$.
By inspection of the construction of $\QQ$ we deduce:
%
\begin{claim} \label{claim:cons}
Every configuration of $\P$ (resp.~$\QQ$) is consistent with some configuration of $\QQ$ (resp.~$\P$).
Moreover, for every pair of consistent configurations of $\P$ and $\QQ$, respectively,
and $a\in \Sigma_\varepsilon$,
every $a$-successor of one of the configurations is consistent with exactly one $a$-successor of the other one.
\end{claim}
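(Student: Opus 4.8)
The plan is to prove the two assertions of Claim~\ref{claim:cons} in turn: first, that consistent partners always exist in both directions, and second, that consistency is a bisimulation---up to the two stated sources of non-uniqueness---by a case analysis on the transition taken.

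For existence, given a configuration $\tuple{\ell,\mu,u}$ of $\P$ with $u=(\gamma_1,\mu_1)\cdots(\gamma_n,\mu_n)$, I would pick an arbitrary value $t\in\unitint$ for the current-timestamp register and set $\nu(\hat x_0):=t$, $\nu(\hat x):=t\cycminus\fract{\mu(x)}$ for every global clock $x$, and $\nu_i(\hat z):=t\cycminus\fract{\mu_i(z)}$ for every stack register $\hat z$ and level $i$; by the definition of $\cycminus$ (cf.~\eqref{eq:cycminus}) the resulting $\QQ$-configuration is consistent with $\tuple{\ell,\mu,u}$. Conversely, from a $\QQ$-configuration I would recover a consistent $\P$-configuration by setting $\fract{\mu(x)}:=\nu(\hat x_0)\cycminus\nu(\hat x)$ (and analogously for stack clocks) and choosing all integer parts to be $0$. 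This makes explicit the two reasons consistency is not one-to-one: the $\QQ$-witness is pinned down only after fixing $t$, i.e.\ up to a uniform cyclic shift of all registers, and the $\P$-witness only up to integer parts of clocks.

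The core technical ingredient, which I would isolate as a lemma and prove by structural induction, is that $\hat\varphi$ faithfully translates $\varphi$ along consistency: if $\mu$ is a clock valuation over $X\cup Z$ and $\nu$ a register valuation with $\fract{\mu(v)}=\nu(\hat x_0)\cycminus\nu(\hat v)$ for all $v\in X\cup Z$, then $\mu\models\varphi$ iff $\nu\models\hat\varphi$. Boolean connectives are immediate; for $\fract v=0$ one uses that $\nu(\hat x_0)\cycminus\nu(\hat v)=0$ iff $\nu(\hat v)=\nu(\hat x_0)$; and for $\fract v\le\fract w$ one reads off the unit circle (Fig.~\ref{fig:cyclic}) that $\nu(\hat x_0)\cycminus\nu(\hat v)\le\nu(\hat x_0)\cycminus\nu(\hat w)$ holds precisely when $\hat v$ lies on the clockwise arc from $\hat w$ to $\hat x_0$, i.e.\ precisely when $K_\leq(\hat w,\hat v,\hat x_0)$ holds in $\nu$. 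With this lemma the step assertion follows case by case. Input letters, and likewise test transitions, are immediate: matching control locations force the same rule, and for tests one additionally invokes the lemma. For a reset $\resetop Y$ versus its deterministic simulation $\varphi_{\resetop Y}$, the new fractional value of $x\in Y$ is $0=\nu(\hat x_0)\cycminus\nu(\hat x_0)$ and nothing else changes. A time elapse by $\delta$ replaces $\fract{\mu(v)}$ by $\fract{\mu(v)}\cycplus\fract\delta$ for every clock $v$, matched on the $\QQ$ side by $\varphi_\elapse$ with $\nu'(\hat x_0)=\nu(\hat x_0)\cycplus\fract\delta$ (and conversely, a $\varphi_\elapse$-move to $\nu'(\hat x_0)$ is matched by any elapse with $\fract\delta=\nu'(\hat x_0)\cycminus\nu(\hat x_0)$). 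For a push, once the valuation $\mu_1$ of the created stack clocks is fixed by $\P$, consistency forces $\nu_1(\hat z)=\nu(\hat x_0)\cycminus\fract{\mu_1(z)}$, and the lemma gives $(\mu,\mu_1)\models\varphi\iff(\nu,\nu_1)\models\hat\varphi$, so both pushes are simultaneously enabled and produce consistent successors; pop is identical except that $\mu_1,\nu_1$ are read off the (consistent) top of the stack rather than guessed.

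The part I expect to require the most care is the uniqueness hidden in ``consistent with \emph{exactly one} $a$-successor''. On the $\QQ$ side, with the source configuration fixed, a successor consistent with a prescribed $\P$-successor has its control location and stack-symbol sequence forced by consistency, and then every register value is forced either by the transition itself (e.g.\ the unchanged $\hat x$ under $\varphi_\elapse$) or by the consistency equations, leaving no freedom; one must also check that no transition of a \emph{different} shape can yield a configuration with the required control location and stack symbols. On the $\P$ side, uniqueness holds only modulo integer parts of clocks, but this is harmless precisely because a fractional \TPDA never inspects integer parts, so two successors differing only there are interchangeable under every subsequent transition. Collecting these observations over the finitely many transition shapes completes the proof.
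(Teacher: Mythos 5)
Your argument is correct and is precisely the ``inspection of the construction'' that the paper invokes without writing out: the paper states Claim~\ref{claim:cons} with no explicit proof, and your existence construction (fixing an arbitrary timestamp $t$ for $\hat x_0$), the lemma that $\hat\varphi$ faithfully translates $\varphi$ along the consistency relation, and the per-rule case analysis supply exactly the details the paper leaves implicit. Your caveat that ``exactly one'' on the $\P$-side holds only modulo integer parts of clocks (and, dually, that the $\QQ$-witness is pinned down only after fixing the cyclic shift) matches the paper's own remark that consistency fails to be one-to-one for precisely those two reasons.
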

Thus, once a pair of consistent configurations is fixed, the $a$-successors in $\P$ and $\QQ$ are in a one-to-one correspondence.
For the correctness of~\eqref{eq:corr} in Sec.~\ref{sec:frac2reg} observe that
a configuration $\tuple {\ell, \mu, \varepsilon}$ of $\P$ and a configuration $\tuple{\ell, \nu, \varepsilon}$ of $\QQ$
are consistent if, and only if,
$
(\mu, \nu) \models \varphi_\cycminus.
$

\ignore{
\para{Reconstruction of the reachability relation.}
Having the last claim, we are going to define the required formula $\psi_{\ell\arr}$ by translation of the formula~\eqref{eq:form}
along the consistency relation:
we keep the subformulas $\varphi^\Z_{\ell\arr o}$ unchanged, and only transform the
$K$-formulas $\varphi^\I_o(\hat {\vec x}, \hat {\vec x}')$ into suitable
$\logicrat$-formulas $\psi^\Q_o$. The formula $\psi^\Q_o$ has
$2k+1$ variables: $y_1, \ldots, y_k, y'_1, \ldots, y'_k$ representing the fractional parts of
(pre- and post-) values of clocks $x_1, \ldots, x_k$, plus one auxiliary variable $\delta$
representing the fractional part of the total time elapse along a run.

We are going to reconstruct, from the relation between pre- and post-values of registers,
the relation between the fractional parts of pre- and post-values of clocks.
Recalling the consistency relation we see that
\begin{align} \label{eq:cons}
 y_i \geq y_j & \ \iff  \ \varphi^\I_o \models K_\leq(r_i, r_j, r_0) &
 y'_i \geq y'_j & \ \iff \  \varphi^\I_o \models K_\leq(r'_i, r'_j, r'_0).
\end{align}
But the formula $\psi^\Q_o$ will specify also some dependencies involving both pre- and post-values.
To this aim, put $\delta := r_0 \cycminus r'_0$.
Our intention is now to express the cyclic differences $\delta(r) = r_0 \cycminus r$ and $\delta'(r) = r'_0 \cycminus r$, for all registers $r$,
in terms of $y_1, \ldots, y_k, y'_1, \ldots, y'_k$ and $\delta$.
We have the following expressions (note that $r'_0 \cycminus r_0 = 1-\delta$):
\begin{align*}
\delta(r_i) \ & =  \ y_i
&
\delta(r'_i) \ & = \ \begin{cases}
y'_i + \delta & \text{if } \varphi^\I_o \models K_\leq(r'_i, r'_0, r_0) \\ 
y'_i + \delta - 1 & \text{otherwise}
\end{cases}
&
\delta(r'_0) & = \delta
&
\delta(r_0) & = 0 \\
\delta'(r'_i) \ & =  \ y'_i
&
\delta'(r_i) \ & = \ \begin{cases}
y_i + 1 - \delta & \text{if }  \varphi^\I_o \models K_\leq(r_i, r_0, r'_0) \\ 
y_i - \delta & \text{otherwise}
\end{cases}
&
\delta'(r_0) & = 1-\delta
&
\delta'(r'_0) & = 0.
\end{align*}
So prepared, we define the $\logicrat$-formula $\psi^\Q_o$ as  
\begin{align} \label{eq:psi}
	\psi^\Q_o(\vec y, \vec y', \delta) \ \equiv \
	\bigwedge_{ \varphi^\I_o \models K_\leq(a, b, r_0)} \delta(a) \geq \delta(b)
	\quad \wedge \
	\bigwedge_{ \varphi^\I_o \models K_\leq(a, b, r'_0)} \delta'(a) \geq \delta'(b).
\end{align}
In order to relate $\varphi^\I_o$ and $\phi^\Q_o$, we need to cast the consistency relation to sole valuations:
we say that a clock valuation $(x_1, \ldots, x_k)\in (\Qgeq)^k$ of $\P$ is consistent with a register valuation
$(r_0, r_1, \ldots, r_k)\in\unitint^{k+1}$ of $\QQ$, if $\fract {x_i} = r_0 \cycminus r_i$ for $i = 1, \ldots, k$.
\begin{claim} \label{claim:cons2}
If $\varphi^\I_o(\vec r, \vec r')$ then $\psi^\Q_o(\fract {\vec x}, \fract {\vec x'}, \delta)$ for some $0 \leq \delta < 1$
(actually $\delta = r_0 \cycminus r'_0$) and $\vec x$, $\vec x'$ consistent with $\vec r$, $\vec r'$, respectively.
\end{claim}
\begin{claim} \label{claim:cons3}
If $\psi^\Q_o(\fract {\vec x}, \fract {\vec x'}, \delta)$ for some $0\leq \delta < 1$ then $\varphi^\I_o(\vec r, \vec r')$
for some $\vec r$, $\vec r'$ consistent with $\vec x$, $\vec x'$, respectively (\st $\delta = r_0 \cycminus r'_0$).
\end{claim}
In view of the above claims, we obtain an existential \logic-formula $\psi_{\ell\arr}$ by existentially quantifying away  the $\delta$:
\[
\psi_{\ell\arr}(\vec y, f, \vec y') \ \equiv \ \bigvee_{o \in \orbits{I}{{2k}}} \varphi^\Z_{\ell\arr o}(f) \wedge
\exists \delta \cdot 0 \leq \delta < 1 \wedge \psi^\Q_o(\vec y, \vec y', \delta).
\]
By Claims~\ref{claim:cons}--\ref{claim:cons3} and Cor.~\ref{cor:reach}
it follows that the formula describes the reachability relation of $\P$ as required.
This completes the proof of Theorem~\ref{thm:fract}.
}

\end{document}